\begin{document}
\newtheorem{theorem}{Theorem}
\newtheorem{acknowledgement}[theorem]{Acknowledgement}
\newtheorem{axiom}[theorem]{Axiom}
\newtheorem{case}[theorem]{Case}
\newtheorem{claim}[theorem]{Claim}
\newtheorem{conclusion}[theorem]{Conclusion}
\newtheorem{condition}[theorem]{Condition}
\newtheorem{conjecture}[theorem]{Conjecture}
\newtheorem{criterion}[theorem]{Criterion}
\newtheorem{definition}{Definition}
\newtheorem{exercise}[theorem]{Exercise}
\newtheorem{lemma}{Lemma}
\newtheorem{corollary}{Corollary}
\newtheorem{notation}[theorem]{Notation}
\newtheorem{problem}[theorem]{Problem}
\newtheorem{proposition}{Proposition}
\newtheorem{solution}[theorem]{Solution}
\newtheorem{summary}[theorem]{Summary}
\newtheorem{assumption}{Assumption}
\newtheorem{example}{\bf Example}
\newtheorem{remark}{\bf Remark}

\def\qed{$\Box$}
\def\QED{\mbox{\phantom{m}}\nolinebreak\hfill$\,\Box$}
\def\proof{\noindent{\emph{Proof:} }}
\def\poof{\noindent{\emph{Sketch of Proof:} }}
\def
\endproof{\hspace*{\fill}~\qed
\par
\endtrivlist\unskip}
\def\endproof{\hspace*{\fill}~\qed\par\endtrivlist\vskip3pt}

\def\E{\mathsf{E}}
\def\eps{\varepsilon}
\def\phi{\varphi}
\def\Lsp{{\boldsymbol L}}
\def\Bsp{{\boldsymbol B}}
\def\lsp{{\boldsymbol\ell}}
\def\Ltsp{{\Lsp^2}}
\def\Lpsp{{\Lsp^p}}
\def\Linsp{{\Lsp^{\infty}}}
\def\LtR{{\Lsp^2(\Rst)}}
\def\ltZ{{\lsp^2(\Zst)}}
\def\ltsp{{\lsp^2}}
\def\ltZt{{\lsp^2(\Zst^{2})}}
\def\ninN{{n{\in}\Nst}}
\def\oh{{\frac{1}{2}}}
\def\grass{{\cal G}}
\def\ord{{\cal O}}
\def\dist{{d_G}}
\def\conj#1{{\overline#1}}
\def\ntoinf{{n \rightarrow \infty }}
\def\toinf{{\rightarrow \infty }}
\def\tozero{{\rightarrow 0 }}
\def\trace{{\operatorname{trace}}}
\def\ord{{\cal O}}
\def\UU{{\cal U}}
\def\rank{{\operatorname{rank}}}
\def\acos{{\operatorname{acos}}}

\def\SINR{\mathsf{SINR}}
\def\SNR{\mathsf{SNR}}
\def\SIR{\mathsf{SIR}}
\def\tSIR{\widetilde{\mathsf{SIR}}}
\def\Ei{\mathsf{Ei}}
\def\l{\left}
\def\r{\right}
\def\({\left(}
\def\){\right)}
\def\lb{\left\{}
\def\rb{\right\}}

\setcounter{page}{1}

% Definitions
\newcommand{\eref}[1]{(\ref{#1})}
\newcommand{\fig}[1]{Fig.\ \ref{#1}}

% Bold lowercase
\def\bydef{:=}
\def\ba{{\mathbf{a}}}
\def\bb{{\mathbf{b}}}
\def\bc{{\mathbf{c}}}
\def\bd{{\mathbf{d}}}
\def\bee{{\mathbf{e}}}
\def\bff{{\mathbf{f}}}
\def\bg{{\mathbf{g}}}
\def\bh{{\mathbf{h}}}
\def\bi{{\mathbf{i}}}
\def\bj{{\mathbf{j}}}
\def\bk{{\mathbf{k}}}
\def\bl{{\mathbf{l}}}
\def\bm{{\mathbf{m}}}
\def\bn{{\mathbf{n}}}
\def\bo{{\mathbf{o}}}
\def\bp{{\mathbf{p}}}
\def\bq{{\mathbf{q}}}
\def\br{{\mathbf{r}}}
\def\bs{{\mathbf{s}}}
\def\bt{{\mathbf{t}}}
\def\bu{{\mathbf{u}}}
\def\bv{{\mathbf{v}}}
\def\bw{{\mathbf{w}}}
\def\bx{{\mathbf{x}}}
\def\by{{\mathbf{y}}}
\def\bz{{\mathbf{z}}}
\def\b0{{\mathbf{0}}}

% Bold capital letters
\def\bA{{\mathbf{A}}}
\def\bB{{\mathbf{B}}}
\def\bC{{\mathbf{C}}}
\def\bD{{\mathbf{D}}}
\def\bE{{\mathbf{E}}}
\def\bF{{\mathbf{F}}}
\def\bG{{\mathbf{G}}}
\def\bH{{\mathbf{H}}}
\def\bI{{\mathbf{I}}}
\def\bJ{{\mathbf{J}}}
\def\bK{{\mathbf{K}}}
\def\bL{{\mathbf{L}}}
\def\bM{{\mathbf{M}}}
\def\bN{{\mathbf{N}}}
\def\bO{{\mathbf{O}}}
\def\bP{{\mathbf{P}}}
\def\bQ{{\mathbf{Q}}}
\def\bR{{\mathbf{R}}}
\def\bS{{\mathbf{S}}}
\def\bT{{\mathbf{T}}}
\def\bU{{\mathbf{U}}}
\def\bV{{\mathbf{V}}}
\def\bW{{\mathbf{W}}}
\def\bX{{\mathbf{X}}}
\def\bY{{\mathbf{Y}}}
\def\bZ{{\mathbf{Z}}}

% mathbb Bold capital letters
\def\mA{{\mathbb{A}}}
\def\mB{{\mathbb{B}}}
\def\mC{{\mathbb{C}}}
\def\mD{{\mathbb{D}}}
\def\mE{{\mathbb{E}}}
\def\mF{{\mathbb{F}}}
\def\mG{{\mathbb{G}}}
\def\mH{{\mathbb{H}}}
\def\mI{{\mathbb{I}}}
\def\mJ{{\mathbb{J}}}
\def\mK{{\mathbb{K}}}
\def\mL{{\mathbb{L}}}
\def\mM{{\mathbb{M}}}
\def\mN{{\mathbb{N}}}
\def\mO{{\mathbb{O}}}
\def\mP{{\mathbb{P}}}
\def\mQ{{\mathbb{Q}}}
\def\mR{{\mathbb{R}}}
\def\mS{{\mathbb{S}}}
\def\mT{{\mathbb{T}}}
\def\mU{{\mathbb{U}}}
\def\mV{{\mathbb{V}}}
\def\mW{{\mathbb{W}}}
\def\mX{{\mathbb{X}}}
\def\mY{{\mathbb{Y}}}
\def\mZ{{\mathbb{Z}}}

% Caligraphic capital letters
\def\cA{\mathcal{A}}
\def\cB{\mathcal{B}}
\def\cC{\mathcal{C}}
\def\cD{\mathcal{D}}
\def\cE{\mathcal{E}}
\def\cF{\mathcal{F}}
\def\cG{\mathcal{G}}
\def\cH{\mathcal{H}}
\def\cI{\mathcal{I}}
\def\cJ{\mathcal{J}}
\def\cK{\mathcal{K}}
\def\cL{\mathcal{L}}
\def\cM{\mathcal{M}}
\def\cN{\mathcal{N}}
\def\cO{\mathcal{O}}
\def\cP{\mathcal{P}}
\def\cQ{\mathcal{Q}}
\def\cR{\mathcal{R}}
\def\cS{\mathcal{S}}
\def\cT{\mathcal{T}}
\def\cU{\mathcal{U}}
\def\cV{\mathcal{V}}
\def\cW{\mathcal{W}}
\def\cX{\mathcal{X}}
\def\cY{\mathcal{Y}}
\def\cZ{\mathcal{Z}}
\def\cd{\mathcal{d}}
\def\Mt{M_{t}}
\def\Mr{M_{r}}
%% my defs
\def\O{\Omega_{M_{t}}}
\newcommand{\figref}[1]{{Fig.}~\ref{#1}}
\newcommand{\tabref}[1]{{Table}~\ref{#1}}

%% From Kaibin
\newcommand{\var}{\mathsf{var}}
\newcommand{\fb}{\tx{fb}}
\newcommand{\nf}{\tx{nf}}
\newcommand{\BC}{\tx{(bc)}}
\newcommand{\MAC}{\tx{(mac)}}
\newcommand{\Pout}{p_{\mathsf{out}}}
\newcommand{\nnn}{\nn\\}
\newcommand{\FB}{\tx{FB}}
\newcommand{\TX}{\tx{TX}}
\newcommand{\RX}{\tx{RX}}
\renewcommand{\mod}{\tx{mod}}
\newcommand{\m}[1]{\mathbf{#1}}
\newcommand{\td}[1]{\tilde{#1}}
\newcommand{\sbf}[1]{\scriptsize{\textbf{#1}}}
\newcommand{\stxt}[1]{\scriptsize{\textrm{#1}}}
\newcommand{\suml}[2]{\sum\limits_{#1}^{#2}}
\newcommand{\sumlk}{\sum\limits_{k=0}^{K-1}}
\newcommand{\eqhsp}{\hspace{10 pt}}
\newcommand{\tx}[1]{\texttt{#1}}
\newcommand{\Hz}{\ \tx{Hz}}
\newcommand{\sinc}{\tx{sinc}}
\newcommand{\tr}{\mathrm{tr}}
\newcommand{\diag}{\mathrm{diag}}
\newcommand{\MAI}{\tx{MAI}}
\newcommand{\ISI}{\tx{ISI}}
\newcommand{\IBI}{\tx{IBI}}
\newcommand{\CN}{\tx{CN}}
\newcommand{\CP}{\tx{CP}}
\newcommand{\ZP}{\tx{ZP}}
\newcommand{\ZF}{\tx{ZF}}
\newcommand{\SP}{\tx{SP}}
\newcommand{\MMSE}{\tx{MMSE}}
\newcommand{\MINF}{\tx{MINF}}
\newcommand{\RC}{\tx{MP}}
\newcommand{\MBER}{\tx{MBER}}
\newcommand{\MSNR}{\tx{MSNR}}
\newcommand{\MCAP}{\tx{MCAP}}
\newcommand{\vol}{\tx{vol}}
\newcommand{\ah}{\hat{g}}
\newcommand{\tg}{\tilde{g}}
\newcommand{\teta}{\tilde{\eta}}
\newcommand{\heta}{\hat{\eta}}
\newcommand{\uh}{\m{\hat{s}}}
\newcommand{\eh}{\m{\hat{\eta}}}
\newcommand{\hv}{\m{h}}
\newcommand{\hh}{\m{\hat{h}}}
\newcommand{\Po}{P_{\mathrm{out}}}
\newcommand{\Poh}{\hat{P}_{\mathrm{out}}}
\newcommand{\Ph}{\hat{\gamma}}
\newcommand{\mat}[1]{\begin{matrix}#1\end{matrix}}
\newcommand{\ud}{^{\dagger}}
\newcommand{\C}{\mathcal{C}}
\newcommand{\nn}{\nonumber}
\newcommand{\nInf}{U\rightarrow \infty}

%\setlength{\topskip}{0pt}

%\addtolength{\textfloatsep}{-0pt}
%\setlength{\abovecaptionskip}{-5pt}
%\setlength{\belowcaptionskip}{-5mm}
%\renewcommand{\baselinestretch}{1.3}
%\textheight=23.5cm
%\textwidth=7.5in

% paper title
%\setlength{\abovecaptionskip}{-5pt}
\setlength{\belowcaptionskip}{-10pt}

\title{\huge Energy Efficient Mobile Cloud Computing \\ Powered by Wireless Energy Transfer} 
\author{Changsheng You, Kaibin Huang and Hyukjin Chae     \thanks{\noindent C. You and K. Huang are  with the Dept. of EEE at The  University of  Hong Kong, Hong Kong (Email: csyou@eee.hku.hk, huangkb@eee.hku.hk). H. Chae is with LG Electronics,  S. Korea (Email: hyukjin.chae@lge.com). Updated on \today.}
%\vspace{-40pt}
}

\maketitle
\begin{abstract} Achieving long battery lives or even self sustainability has been a long standing challenge for designing mobile devices. This paper presents a novel solution that seamlessly integrates two technologies, \emph{mobile cloud computing} and \emph{microwave power transfer} (MPT), to enable computation in passive low-complexity devices such as sensors and wearable computing devices. Specifically, considering a single-user system, a base station (BS) either transfers power to or offloads computation from a mobile  to the cloud; the mobile uses harvested energy to compute given data either locally or by offloading.  A framework for energy efficient computing  is proposed that comprises a set of policies   for controlling  CPU cycles for the mode of local computing, time division between MPT and offloading for the other mode of offloading, and mode selection. Given the CPU-cycle statistics information and channel state information (CSI), the policies aim at maximizing  the probability of successfully computing given  data, called \emph{computing probability}, under the energy harvesting and deadline constraints. The policy optimization is translated into the equivalent problems of minimizing the mobile energy consumption for local computing and maximizing the mobile energy savings for offloading which are solved using  convex optimization theory. The structures of the resultant policies are characterized in closed form. Furthermore, given \emph{non-causal} CSI, the said analytical framework is further developed to support computation load allocation over multiple channel realizations, which further increases the computing probability. Last, simulation demonstrates the feasibility of wirelessly powered mobile cloud computing and the gain of its optimal control. 

\end{abstract}
%\vspace{-15pt}
\begin{IEEEkeywords}
%\vspace{-10pt} 
Wireless power transfer, energy harvesting communications, mobile cloud computing, energy efficient computing.
\end{IEEEkeywords}

\vspace{+5pt}
\section{Introduction}

%-------------- Overview --------------------
The explosive growth of Internet of Things (IoT) and mobile communication is  leading to  the deployment of tens of billions of {\color{black}{cloud-based}} mobile sensors and wearable computing devices in near future \cite{Melanie:Sensor:2012}. Prolonging their  battery lives and enhancing their computing capabilities are two key design challenges. They can be tackled by several  promising technologies: 1) \emph{microwave power transfer}  (MPT)  for powering  the mobiles  using microwaves\cite{Brown:RadioWPTHistory:1984}, 2) \emph{mobile computation offloading} (MCO) for offloading  computation-intensive tasks from the mobiles  to the cloud  \cite{kumar2013survey}, and 3) energy efficient \emph{local computing} using the mobile CPU. These  technologies are seamlessly integrated in the current work to develop a novel design framework for realizing wirelessly powered mobile cloud computing under the criterion of  maximizing the probability of successfully computing given data, called \emph{computing probability}.   {\color{black}{The framework is feasible since MPT has been proven in various experiments for powering small devices such as sensors \cite{Le:FarEgyHar:2008} 
%hagerty:RFEgy:2005,
or even small-scale airplanes and helicopters \cite{Shinohara:2014:WPT}. Furthermore, sensors and wearable computing devices targeted in the framework are expected to be connected by the cloud-based IoT in the future \cite{Melanie:Sensor:2012}, providing a suitable platform for realizing MCO.}} 
\vspace{+5pt}
\subsection{Prior Work}
MCO has been an active research area in computer science \cite{kumar2013survey} where research has focused on designing mobile-cloud systems and software architectures \cite{cuervo:MAUI:2010, Kosta:Thinkair:2012},
%, shi:Cosmos:2014
%Dinh:OffloadingSurvey:2013, 
 virtual machine migration design in the cloud \cite{Gkatzikis:Migrate:2013} and code partitioning  techniques in the mobiles \cite{cuervo:MAUI:2010} 
% ,Zhang:CodePartioning:2012
 for reducing the energy consumption and improving the computing performance of mobiles. 
Nevertheless, implementation of MCO requires  data transmission and message passing  over wireless channels, incurring transmission power consumption \cite{kumar:CanOffloadSave2010}. The existence of such a tradeoff  has motivated cross-disciplinary research on jointly designing MCO and adaptive transmission algorithms to maximize the  mobile energy savings \cite{huang:DynamicOffload:2012, Barbarossa:MobileCloud:2014, ZhangY:OfflConnec:2015}.  A stochastic control algorithm was proposed in \cite{huang:DynamicOffload:2012} for adapting the offloaded components  of  an application to a time-varying wireless channel.  Furthermore, multiuser computation offloading  in a multi-cell system was explored in \cite{Barbarossa:MobileCloud:2014}, where the radio and computational resources were jointly allocated  for maximizing the energy savings under the latency constraints. {\color{black}{In \cite{ZhangY:OfflConnec:2015}, the threshold-based offloading policy was derived for the system with intermittent connectivity between the mobile and cloud.}}

Energy-efficient mobile (local) computing is also an active field where rich and diversified techniques have been designed for reducing the mobile energy consumption \cite{Yao:Scheduling:1995,Benini1:DPM:1999,Pillai:DVS:2001,lorch:DynamicVoltage:2001, yuan:RealCpuScheduling:2003, zhang:MobileMmodel:2013}. The scheduling of multiple computing tasks was optimized in \cite{Yao:Scheduling:1995} to  increase the energy savings. For the same objective, dynamic power management  was proposed in \cite{Benini1:DPM:1999} where components of computing tasks were dynamically reconfigured and selectively turned off. Another energy-efficient approach is to control the CPU-cycle frequencies under the deadline constraint by exploiting the fact that lowering the frequencies  reduces the energy consumption \cite{Pillai:DVS:2001,lorch:DynamicVoltage:2001, yuan:RealCpuScheduling:2003}.  Recently, the CPU-cycle frequencies are jointly controlled with MCO given a stochastic wireless channel in \cite{zhang:MobileMmodel:2013}. The framework is further developed in the current work to include the new feature of MPT.  This introduces several new design challenges. Among others, the algorithmic design of  local computing and offloading becomes more complex  under the energy harvesting constraint due to MPT, which  prevents energy consumption from exceeding the amount of harvested energy at every time instant  \cite{OzelUlukus:TransEnergyHarvestFading:OptimalPolicies:2011}.  Another challenge is that MPT and offloading time shares the mobile antenna and the time division has to be optimized. 

The MPT technology has  been developed for point-to-point high power transmission in the past decades  \cite{Brown:RadioWPTHistory:1984}. Recently, the technology is being further developed to power wireless communications. This has resulted in the emergence of an active field called \emph{simultaneous wireless information  and power  transfer} (SWIPT). Various SWIPT techniques have been developed by  integrating MPT with communication techniques including  MIMO transmission \cite{Zhang:MIMOBCWirelessInfoPowerTransfer},   OFDMA \cite{NgLo:MultiuserOFDMSInfoPowerTransfer}, two-way communication \cite{popovski:InteractiveTransferInformation:2013}  and relaying \cite{nasir:RelayHarveing:2013}. 
%GurakanUlukus:EnergyCooperation:2012
Furthermore,   existing wireless networks such as cognitive radio \cite{LeeZhangHuang:OppEnergyHarvestingCognitiveRadio:2013} and cellular networks \cite{HuangLauArXiv:EnablingMPTinCellularNetworks:2013} have been redesigned to feature  MPT.   Recent advancements on SWIPT are surveyed in \cite{bi:PowerCommunication:2014} and \cite{Huang:CuttingLastWiress:2014}. Most prior work on SWIPT aims at optimizing communication techniques to maximize the MPT efficiency and system throughput. In contrast, the current work focuses on optimizing the local computing and offloading under a different design criterion of maximum computing probability. 
\vspace{+10pt}
\subsection{Contributions and Organization}
Consider a single-user system comprising  one multi-antenna \emph{base station} (BS) using transmit/receive beamforming for transferring  power  to a single-antenna mobile or  relaying offloaded data from the mobile to the cloud. To compute a fixed amount of data, the mobile operates in one of the two available modes:  local computing and offloading. In the mode of local computing, MPT occurs simultaneously as computing  based on the controllable CPU-cycle frequencies. Nevertheless, in the mode of offloading, the given computation duration is adaptively partitioned for separate MPT and offloading since they  share the mobile antenna. Assume that the mobile has the knowledge of statistics information of CPU cycles and channel state information (CSI).  The individual modes as well as mode selection are optimized for maximizing the computing probability under the energy harvesting and deadline constraints.  For tractability, the metric is transformed into equivalent ones, namely \emph{average mobile energy consumption} and \emph{mobile energy savings}, for the modes of local computing and offloading, respectively. {\color{black}{Compared with \cite{zhang:MobileMmodel:2013}}, the current work integrates MPT with the mobile cloud computing, which introduces new theoretical challenges. In particular, the  energy harvesting constraint arising from MPT makes the optimization problem for local computing non-convex. To tackle the challenge, the convex relaxation technique is applied without compromising the optimality of the solution. It is shown in the sequel that the local computing policy of \cite{zhang:MobileMmodel:2013} is a special case of the current work where the transferred power is sufficiently high.  Furthermore, the case of dynamic channel for mobile cloud computing is explored. Approximation methods are used for deriving the simple and close-to-optimal policies.}

 The contributions of the current work are summarized as follows. 

\begin{itemize}
\item \emph{Optimal local computing}: First, consider a static channel.  For the mode of local computing,  CPU-cycle frequencies are optimized for minimizing the average mobile energy consumption under the energy harvesting and deadline constraints. The corresponding  optimization problem is non-convex but solved by the convex relaxation that is proved to maintain the  optimality. The resultant policy is shown to depend on two derived thresholds on the transferred power. The first determines the feasibility of successful computing.  Given feasible computing,  the optimal frequency of the $n$-th CPU cycle is shown to be \emph{proportional} to $p_k^{-\frac{1}{3}}$ with $p_k$ being the probability of its occurrence  and independent of the transferred power if it is above the second threshold; otherwise, the frequency is \emph{proportional} to $(p_k + \lambda)^{-\frac{1}{3}}$ with $\lambda$ being a constant determined by the transferred power. 

\item \emph{Optimal computation  offloading}: Next, for the mode of  offloading, the partition of the computation duration for  separate MPT and offloading is optimized to maximize the energy savings. As a result, a threshold is derived on the product of the BS transmission power and squared channel power gain, above which offloading is feasible. Given feasibility, the optimal offloading duration is shown to be \emph{proportional} to the input data size and \emph{inversely proportional} to the channel bandwidth.   In other words, small data size and large bandwidth reduces time allocated for offloading and increases time for MPT and vice versa. 

\item \emph{Mobile mode selection}:  The above results are combined to select the mobile mode for maximizing the computing probability. Given feasible computing in both modes,  the one yielding the larger energy savings is preferred and the selection criterion is derived in terms of thresholds on the BS transmission power as well as the deadline for computing.

\item \emph{Optimal data   allocation for a dynamic channel}:   Last, the above results are extended to the case of a dynamic channel, modeled as independent and identically distributed (i.i.d.) block fading,  and  \emph{non-causal} CSI at the mobile (acquired from e.g., channel prediction). The problem of optimizing  an individual  mobile mode (local computing or offloading) is formulated based on the \emph{master-and-slave model} using the same metric as the fixed-channel counterpart. The master problem concerns the  optimal data allocation  for computing in a fixed number of fading blocks. The slave problem targets the mode optimization in a single fading block for allocated data and is similar to the fixed-channel counterpart. By approximating the master problems, sub-optimal  policies are designed for data allocation and shown by simulation to be close-to-optimal. The results can be straightforwardly combined to enable the mode selection. 

\end{itemize}

The remainder of this paper is organized as follows. The system model is introduced in Section II. Section III presents the optimal policies  for mobile mode optimization and selection for the case of static channel. The results are extended in 
Section IV to the case of dynamic channel. Simulation results are presented in  Section V, followed by the conclusion in
 Section VI.

%\vspace{-5pt}
%\newpage
\section{System Model}
\begin{figure*}[t]
\begin{center}
\subfigure[Wirelessly powered mobile cloud computing system]{\includegraphics[width=12cm]{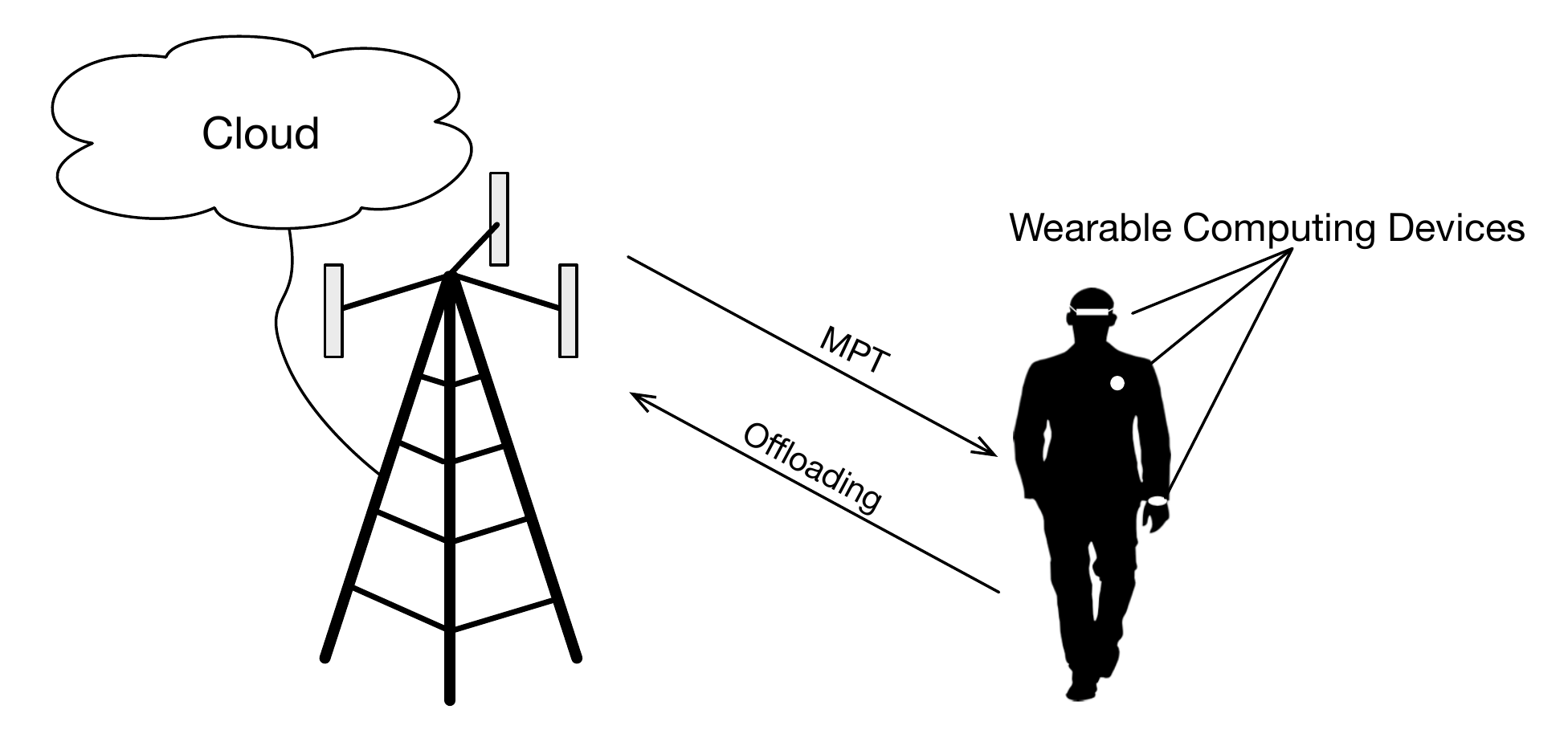}}\\
\subfigure[Mobile operation modes]{\includegraphics[width=11cm]{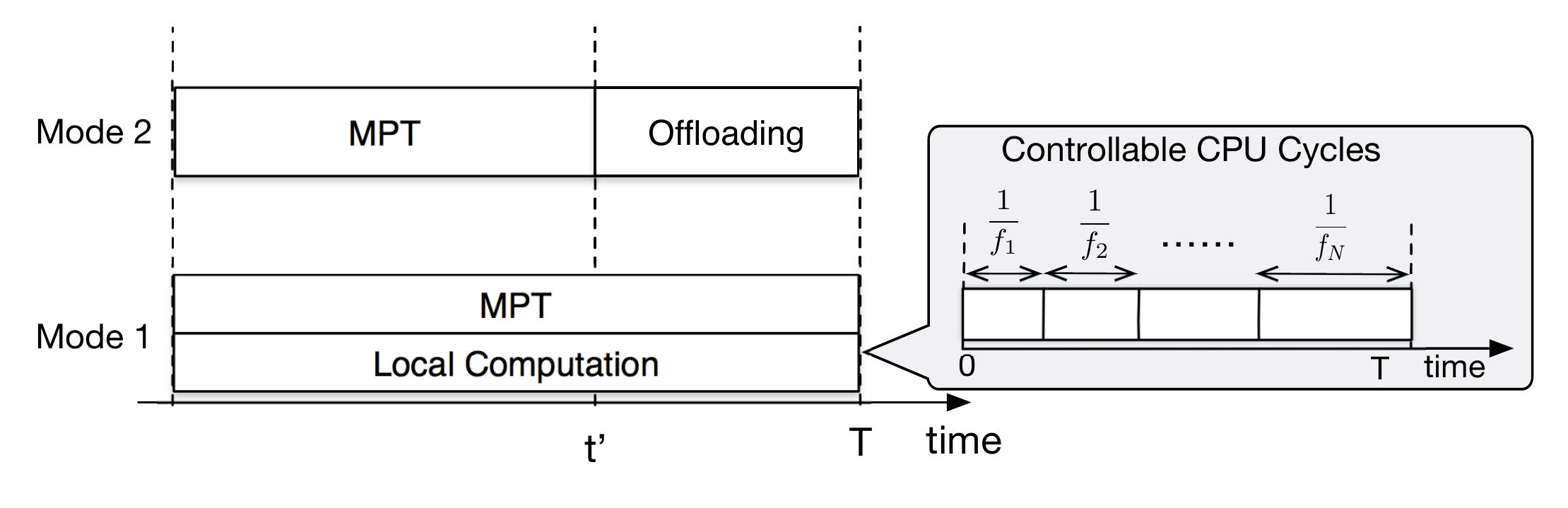}}\\
\caption{(a) Wirelessly powered mobile cloud computing system and (b) the mobile operation modes.}
\label{system_model}
\end{center}
\end{figure*}

{\color{black}{Consider a wirelessly powered mobile cloud computing  system model shown in Fig.~\ref{system_model}(a) comprising one  single-antenna mobile, e.g., mobile sensor and wearable computing device with milliwatt power consumption, and one multi-antenna BS that is a node of a cloud.}} The BS either transfers power wirelessly to  or offloads computation from the mobile.
{\color{black}{Consider the computation of a single task that cannot be split for partial local computing and partial offloading. In other words, the mobile operation mode is either local computing or offloading but not both (see Fig.~\ref{system_model}(b)).}} The local computing and MPT can be performed simultaneously while the offloading and MPT are non-overlapping in time assuming half-duplex transmission. {\color{black}{The current work can be generalized to the case of computing a multi-task program which allows program partition and thus the simultaneous operation of these two modes.}} The multi-antenna BS uses beamforming to transfer power or receive the signal. {\color{black}{Assuming channel reciprocity \cite{Zhang:MIMOBCWirelessInfoPowerTransfer, NgLo:MultiuserOFDMSInfoPowerTransfer}, the effective scalar  channel power gain is represented as $h$}}. Both cases of static and dynamic channel are considered. The dynamic channel is modeled as i.i.d. block fading where the channel power gain is fixed in each fading block and i.i.d. over different blocks. Let $P_b$ denote the BS transmission power. It is assumed that both  $P_b$ and $h$ are accurately estimated\footnote{\color{black}{Accounting for imperfect channel knowledge, robust optimization techniques can be applied to the current framework for obtaining control policies to guarantee a required computing probability.}} by the mobile and used for controlling the local computing and offloading.
 {\color{black}{During MPT, assuming the mobile has infinite battery storage\footnote{\color{black}{Considering finite battery  storage at mobiles will complicate the corresponding structure of the optimal solution. However, it is expected to have the threshold-based structure as we derive for the case of infinite storage.}}, since the energy harvested from the noise is negligible, the energy harvested by the mobile in one time unit is $\upsilon P_b h$ where the constant $0 < \upsilon \leq 1 $ represents the energy conversion efficiency. }}

%\vspace{-5pt}
\subsection{Local Computing Model} \label{Section:LC:Model}
Adopting  the model in   \cite{yuan:RealCpuScheduling:2003,  zhang:MobileMmodel:2013}, the required number of CPU cycles for computing given data is randomly generated as follows. Let $L$ denote the number of bits in input data  for computation and $T$ represent its deadline.  Define the \emph{CPU cycle information} (CCI) as the number of CPU cycles required for computing $1$-bit of input data. Then the CCI  can be modeled as a random variable denoted as $X$ and $LX$ gives the number of CPU cycles for computing $L$-bit input data \cite{lorch:DynamicVoltage:2001}.   Define $N_0$ as a positive integer such that $\Pr(X > N_0) \leq \epsilon$ where $\epsilon$ is a small real number: $0 < \epsilon \ll 1$. It follows that $\Pr(LX > N)\leq \epsilon$ where $N = L N_0$. Then given $L$-bit input data, $N$ upper bounds the number of CPU cycles almost surely. {\color{black}{Define $p_k = \Pr(LX \geq k)$ with $k = 1, 2, \cdots, N$ such that $p_k$ is the probability that the data has not been processed completely after $k$ CPU cycles.}} In other words, $p_k$ is the probability that the $k$-th cycle is executed. {\color{black}{Note that the sequence $p_1, p_2, \cdots, p_N$ is monotone decreasing, regardless of any probability distribution of the random variable $X$}} . 

%\vspace{-3pt}
 \begin{assumption}\label{AS:CCI}\emph{The mobile only has the knowledge of the CCI distribution.}\footnote{\color{black}{The results on CPU-cycle optimization in the current work  can be straightforwardly extended to the case where the mobile has the prior knowledge of CCI instead of CCI distribution, resulting in a simpler policy structure.}} 
\end{assumption}
%\vspace{-5pt}

%\vspace{-5pt}
The energy consumption of local computing is modeled as follows.  In practice,  mobile energy consumption is contributed by computation, transmission and fixed circuit power. The circuit power is omitted for simplicity but it can be accounted for by adding a constant in the problem formulation that, however, does not affect the solution method and key results. Let $E(f)$ denote the amount of energy for computation in a single CPU cycle with the frequency $f$ (or equivalently having a CPU-cycle duration of $1/f$). Following the model in \cite{Robert:CMOS:1992, burd:CpuEnergy:1996}, under the assumption of a low CPU voltage,  $E(f)=\gamma f^2$ where $\gamma$ is a constant determined by  the  switched capacitance. {\color{black}{However, the analysis of current work also applies to a  general polynomial function $E(f)$ that is monotone increasing and convex.}} Let $f_1, f_2, \cdots, f_N$ denote the CPU-cycle frequencies for CPU cycles $1, 2, \cdots, N$, 
respectively. 

Last, for tractability, several assumptions are made for the case of dynamic channel. 
% \vspace{-5pt}
\begin{assumption}\label{AS:CSI}\emph{For the case of dynamic channel, the mobile has \emph{non-causal} CSI.}\footnote{\color{black}{Considering causal CSI, if the channel gains of adjacent fading blocks are temporally correlated that can be modeled as the channel state Markov chain, dynamic programming  can be used for deriving the optimal solution, which, however, is intractable to analyze and thus  is not considered in the current work.}}

 \end{assumption}
 
Non-causal CSI in Assumption~\ref{AS:CSI} corresponds to the case where the mobile predicts the channel variation or has a pre-determined  trajectory.  The assumption allows off-line data allocation for computing in multiple fading blocks.

\begin{assumption}\label{As:Channel}\emph{For the case of  dynamic channel, the number of channel realizations in the computation duration $[0, T]$ is much smaller than the typical number of CPU cycles required for mobile local computing in the same duration.}
\end{assumption}

Assumption~\ref{As:Channel} states different time scales for the channel variation and CPU cycles, allowing energy-efficient control policies to be designed using a decomposition approach in the sequel.

\begin{assumption}\label{As:Data}\emph{For  the case of dynamic channel, the input data can be divided continuously into parts that can be computed separately.}
\end{assumption}

 {\color{black}{The input data of several applications can be divided, such as Gzip compression and feature extraction. Assumption~\ref{As:Data} made for tractability simplifies the data allocation  for computing in separate fading blocks to reduce energy consumption. In practice, the optimal solution in the current paper can be discretized by rounding. }}
%\vspace{-5pt}
\subsection{Computation Offloading Model } 
Consider computation offloading where the mobile transmits the data to  the BS for computing in the cloud and receives the result via the BS. Given the mobile transmission power $P_t$, the uplink channel capacity (in bit/s), denoted as $C$,  is given by:  
%$C=B\log\left(1+ \frac{P_t h}{\sigma^2}\right)  \label{rate}$
\begin{equation}
 C=B\log\left(1+ \frac{P_t h}{\sigma^2}\right)  \label{rate} \nn
\end{equation} 
where $B$ is the channel  bandwidth and  $\sigma^2$ is the variance of complex white Gaussian channel  noise. \iffalse{\color{red}{Time is divided into slots with unit duration and let $t$ denote the slot index. Channel variation is assumed to be slow such that $h$ is constant with the computation duration of $T$ slots.}}\fi Moreover, it is assumed that the time for computing in the cloud and transmitting the computation result from the BS to the mobile is negligible since the cloud has practically infinite computational resources and the BS can afford high transmission power to reduce the downlink transmission delay. Last, the computation result is assumed of a small size such that demodulating the  result data at the mobile results in negligible energy consumption compared with that for local computing 
or offloading.

%\vspace{-5pt}
\subsection{Performance Metrics} \label{Section:Metric}
Given the CCI $X$, let $E_{\textrm{MPT}}(t)$  represent  the amount of cumulative energy harvested by the mobile over duration $[0, t]$ and   $E_{\textrm{mob}}(t, X, \{f_k\}_{k=1}^X)$ as the amount of  cumulative energy consumed by the mobile  for local computing or offloading. {\color{black}{Then the performance metric, computing probability, which is the probability of successfully computing given data, is denoted as $P_c$}}, defined as 
%\vspace{-5pt}
\begin{equation}\label{Eq:Pc:Def:LC}
P_{\textrm{c}}\!=\!\mathbb{E}\l [I\l(E_{\textrm{MPT}}(t) \!\ge\! E_{\textrm{mob}}(t, \!X, \{f_k\}_{k=1}^X), \ \forall \ t \!\in\! [0, T] \r)\r ]
\end{equation}
where the indicator function $\emph{I}(\mathcal{E})$ gives $1$ when the event $\mathcal{E}$ occurs and $0$ otherwise. 
Note that the argument of the indicator function in \eqref{Eq:Pc:Def:LC} specifies  the energy harvesting constraint.  It can be observed from \eqref{Eq:Pc:Def:LC} that maximizing the computing probability  by optimizing the CPU-cycle frequencies is equivalent to minimizing the mobile energy consumption $E_{\textrm{mob}}(t, X, \{f_k\}_{k=1}^X)$ for $t\in [0, T]$. Nevertheless, this is infeasible since $X$ is unknown to the mobile based on Assumption~\ref{AS:CCI}. {\color{black}{To overcome this difficulty, the current work instead focuses on  minimizing the \emph{expected} energy consumption in the duration $[0, T]$, namely $\mathbb{E}\left[E_{\textrm{mob}}(T, X, \{f_k\}_{k=1}^X)\r]$, but still under the same constraints, which is given by
\begin{equation}\label{Eq:local:objective:ave_egy}
\mathbb{E}\left[E_{\textrm{mob}}(T, X, \{f_k\}_{k=1}^X)\r]=\sum_{k=1}^{N} \gamma p_k f_k^2+\sum_{k=N+1}^{LX} \gamma p_k f_k^2.
\end{equation}
Furthermore, for a small $\epsilon$ that is close to zero (see Section~\ref{Section:LC:Model}), the second component of \eqref{Eq:local:objective:ave_egy} is negligible. Then the following approximation is used for tractability:
\begin{equation}\label{Eq:local:objective}
\mathbb{E}\left[E_{\textrm{mob}}\l(T, X, \{f_k\}_{k=1}^X\r)\r] \approx  \sum_{k=1}^{N} \gamma p_k f_k^2.
\end{equation}
}}

Next, consider offloading. By slight abuse of notation, let $E_{\textrm{MPT}}(t')$  represent  the amount of total energy harvested by the mobile over duration $[0, t']$ and   $E_{\textrm{mob}}(t')$ as the amount of  energy consumed by the mobile  for offloading in the duration $[t', T]$. Then the computing probability for the current case can be  defined as 
\begin{equation}\label{Eq:Pc:Def:OL}
P_{\textrm{c}}=\mathbb{E}\l [I\l(E_{\textrm{MPT}}(t') \ge E_{\textrm{mob}}(t'), \ \forall \ t'\in [0, T] \r)\r ].
\end{equation}
As observed from \eqref{Eq:Pc:Def:OL},  maximizing the computing probability is equivalent to maximizing the energy savings $\l[E_{\textrm{MPT}}(t')  - E_{\textrm{mob}}(t')\r]$.

% \vspace{-5pt}
 \section{Energy Efficient Mobile Cloud Computing  with a Static Channel}
In this section, given a static channel, the CPU-cycle frequencies and MPT-and-offloading time division are optimized for local computing and offloading, respectively. Then the results are combined for optimizing the mobile  mode selection.

 \subsection{Energy Efficient Local Computing with a Static Channel} 

 \subsubsection{Problem Formulation}  Based on the discussion in Section~\ref{Section:Metric}, the problem of optimizing CPU-cycle frequencies aims at minimizing the average mobile energy consumption in \eqref{Eq:local:objective} under two constraints. The first is the deadline constraint: $\sum_{k=1}^{N} \frac{1}{f_k} \le T$. The second is the energy harvesting constraint comprising of  $N$ sub-constraints given as 
 \begin{equation}
\sum_{k=1}^m \gamma f_k^2 \leq \upsilon P_b h\sum_{k=1}^m \frac{1}{f_k},\quad   m=1, 2, \cdots, N 
\end{equation}
where the left-hand side of the inequality is the total energy consumed  by the first $m$ CPU cycles and the right-hand side is the total energy harvested by the end of $m$-th cycle.
 It follows that the  optimization problem is formulated as: 
\begin{equation}
\begin{aligned}
 \min_ {\{f_k\} } \quad  & \sum_{k=1}^{N} \gamma p_k f_k^2 \\
\text{s.t.}\quad 
& \sum_{k=1}^m \gamma  f_k^2 \leq \upsilon P_b h\sum_{k=1}^m \frac{1}{f_k}, & m = 1, 2, \cdots, N,\\ 
& \sum_{k=1}^{N} \frac{1}{f_k}\le T,  \\
&  f_k >0,  & k = 1, 2, \cdots, N. 
\end{aligned}
\tag{$\textbf{P1}$} 
\end{equation}
{\color{black}{Note that $p_k$ is included in the objective function to formulate the average energy consumption, while the energy harvesting constraint is defined for all possible CPU-cycle realizations and thus is without $p_k$.}}
\subsubsection{Solution} It can be observed that the energy harvesting sub-constraints in Problem P1 are non-convex, resulting in a non-convex optimization problem. 
To address this issue, first, define a set of new variables $\{y_k\}$ as  $y_k= \frac{1}{f_k}$ for all $k$. Substituting them into Problem P1 and furthermore relaxing the equality constraint   $y_k f_k= 1$ to be $y_k f_k \ge 1$ gives 
\begin{equation}
\begin{aligned}
 \min_{\{f_k, y_k\} }  \quad  & \sum_{k=1}^{N} \gamma p_k f_k^2 \\
\text{s.t.}\quad 
& \sum_{k=1}^m \gamma  f_k^2 \leq \upsilon P_b h\sum_{k=1}^m y_k, & m = 1, 2, \cdots, N,\\ 
& \sum_{k=1}^{N} y_k\le T,  \\
&  f_k >0, \quad \frac{1}{f_k}-y_k \le 0,  & k = 1, 2, \cdots, N. 
\end{aligned}
\tag{$\textbf{P2}$} 
\end{equation}
Observed that Problem P2 is a convex optimization problem. Nevertheless, the  relaxation mentioned earlier has no effect on the solution optimality as shown in the following lemma. 
\begin{lemma}\label{Lem:Relax} \emph{The solution for  Problem P2 also solves P1.}
\end{lemma}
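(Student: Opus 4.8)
The plan is to exploit the fact that P2 is a relaxation of P1 and then to argue that the relaxed inequality constraints $\tfrac{1}{f_k}-y_k\le 0$ are all \emph{active} at any optimizer of P2. Once that is established, every optimizer of P2 satisfies $y_k=1/f_k$, is therefore feasible for P1, and attains a P1 objective value that cannot exceed the optimum of P1, which forces equality of the two problems.

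First I would record the easy inequality between the two optimal values. Given any feasible point $\{f_k\}$ of P1, setting $y_k=1/f_k$ produces a point $\{f_k,y_k\}$ that satisfies every constraint of P2 (the energy-harvesting and deadline constraints coincide, and $\tfrac1{f_k}-y_k=0\le0$) with the same objective value, since the objective depends on $\{f_k\}$ only. Hence P1 embeds into P2 and the optimal value of P2 is at most that of P1; since P2 is convex, its minimizer is well defined.

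The substance of the proof is the activeness claim, which I would establish by a perturbation argument. Suppose, toward a contradiction, that $\{f_k^\star,y_k^\star\}$ solves P2 but $y_j^\star>1/f_j^\star$ for some index $j$. I then decrease $f_j^\star$ to a smaller value $f_j'\in[1/y_j^\star,\,f_j^\star)$ while keeping all other $f_k$ and all $y_k$ fixed. This perturbation preserves feasibility: the deadline constraint involves only the $y_k$ and is untouched; each energy-harvesting sub-constraint with $m\ge j$ has its left-hand side $\sum_{k\le m}\gamma f_k^2$ decreased and its right-hand side unchanged, so it still holds, while those with $m<j$ are unaffected; and the choice $f_j'\ge 1/y_j^\star$ guarantees $\tfrac1{f_j'}-y_j^\star\le0$. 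Because $p_j>0$, lowering $f_j$ strictly decreases the objective term $\gamma p_j f_j^2$, contradicting optimality. Therefore $y_k^\star=1/f_k^\star$ for every $k$.

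The main obstacle is precisely to verify that this perturbation leaves every one of the $N$ energy-harvesting sub-constraints intact; the key observation that makes this work is that $f_j$ enters those constraints only on the (smaller-is-better) left-hand side, never on the harvested-energy side, which after the substitution is carried entirely by $\{y_k\}$. With activeness in hand the conclusion is immediate: the P2 optimizer satisfies $y_k=1/f_k$, hence is feasible for P1 with objective value equal to the optimum of P2, which by the first step is at most the optimum of P1; feasibility for P1 then forces this value to equal the P1 optimum, so the P2 solution solves P1. I would finally dispose of the degenerate case $p_j=0$: such cycles do not enter the objective, so one may still reset $f_j=1/y_j^\star$ without changing the objective value or violating any constraint, and hence an optimizer with all constraints active---and thus a solution of P1---always exists; under the natural assumption $p_k>0$ for $k\le N$ this optimizer is moreover unique by strict convexity.
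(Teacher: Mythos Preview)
Your proof is correct and reaches the same conclusion as the paper, but by a genuinely different route. The paper writes down the Lagrangian for P2 and applies the KKT conditions: from the stationarity condition $\partial L/\partial f_k^*=0$ it obtains
\[
(f_k^*)^3=\frac{\eta_k}{2\gamma\bigl(p_k+\sum_{m=k}^N\lambda_m\bigr)},
\]
and since $f_k^*>0$, $p_k>0$, and $\lambda_m\ge 0$, this forces the multiplier $\eta_k$ on the constraint $1/f_k-y_k\le 0$ to be strictly positive; complementary slackness then gives $y_k^*=1/f_k^*$. You instead argue by a direct perturbation: if $y_j^*>1/f_j^*$, lowering $f_j^*$ toward $1/y_j^*$ keeps every constraint of P2 (because $f_j$ appears only on the consumption side of the energy-harvesting inequalities and not at all in the deadline constraint) while strictly reducing the objective, contradicting optimality. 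Your argument is more elementary---it avoids KKT machinery and any implicit appeal to constraint qualification---and it also cleanly handles the $p_j=0$ boundary case. The paper's approach, on the other hand, does double duty: the same KKT computation that proves tightness also produces the closed-form expression for $f_k^*$ that feeds directly into Lemma~\ref{Lem:P2} and the rest of the analysis, so its extra overhead is amortized.
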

\begin{proof}
See Appendix~\ref{App:Relax}.
\end{proof}

  Lemma~\ref{Lem:Relax} is essential for solving the non-convex Problem P1 by equating it with  the convex Problem P2 that yields the same solution but  admits powerful convex optimization algorithms. 

To characterize the structures of the optimal CPU-cycle frequencies, several useful properties of the solution for Problem P2 are given as follows. 

\begin{lemma}\label{Lem:P2}\emph{The solution for Problem P2, denoted as $\{y^*_k, f_k^*\}$,  satisfies the following. 
\begin{enumerate}
\item The deadline constraint is active:\\ $\sum_{k=1}^N y_k^* = \sum_{k=1}^N \frac{1}{f^*_k} =T$.
\item The optimal CPU-cycle frequencies can be written as 
 \begin{equation}
f_k^*=\left[\frac{\mu-\upsilon P_b h \left(\sum_{m=k}^N \lambda_m\right)}{2 \gamma \left(p_k+\sum_{m=k}^N \lambda_m\right)}\right] ^{\frac{1}{3}}, 
 \quad ~~\forall k \label{Eq:Opti:f} 
\end{equation}
where the nonnegative variables $\mu$, $\{\lambda_m\}$ are the Lagrange multipliers associated with the deadline and energy harvesting constraints, respectively.
\item Furthermore, $f_1^* < f_2^*\cdots <  f_N^*$.
\end{enumerate}
}
\end{lemma}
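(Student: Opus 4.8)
The plan is to solve Problem~P2 through its Karush--Kuhn--Tucker (KKT) conditions, which are here both necessary and sufficient because P2 is convex and (in the feasibility regime of interest, i.e.\ when the transferred power exceeds the feasibility threshold) strictly feasible, so Slater's condition holds. I would attach multipliers $\lambda_m \ge 0$ to the $m$-th energy-harvesting sub-constraint, $\mu \ge 0$ to the deadline constraint, and $\nu_k \ge 0$ to each relaxed constraint $\frac{1}{f_k}-y_k \le 0$. The open constraint $f_k>0$ needs no multiplier, since $\frac{1}{f_k}\le y_k\le\sum_j y_j\le T$ keeps $f_k\ge 1/T$ bounded away from $0$ at every feasible point, so the minimizer lies in the relevant interior. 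Writing $S_k := \sum_{m=k}^N \lambda_m$, stationarity in $f_k$ and $y_k$ gives the two equations $2\gamma f_k(p_k+S_k)=\nu_k/f_k^2$ and $\nu_k=\mu-\upsilon P_b h\,S_k$.

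Part~2 then follows immediately: eliminating $\nu_k$ between these two equations yields $2\gamma f_k^3(p_k+S_k)=\mu-\upsilon P_b h\,S_k$, and solving for $f_k$ produces the claimed closed form with $\sum_{m=k}^N\lambda_m$ in place of $S_k$.

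The observation that drives Part~1 is that the $f_k$-stationarity equation forces $\nu_k = 2\gamma f_k^3(p_k+S_k) > 0$, because $f_k>0$, $\gamma>0$ and $p_k+S_k>0$. This strict positivity does double duty. First, $\mu = \nu_k + \upsilon P_b h\,S_k \ge \nu_k > 0$, so by complementary slackness the deadline constraint is active, $\sum_k y_k^*=T$. Second, complementary slackness on the relaxed constraint, $\nu_k(\frac{1}{f_k^*}-y_k^*)=0$ with $\nu_k>0$, forces $\frac{1}{f_k^*}=y_k^*$ for every $k$, hence $\sum_k \frac{1}{f_k^*}=\sum_k y_k^*=T$; as a by-product this reconfirms the exactness of the relaxation asserted in Lemma~\ref{Lem:Relax}.

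For Part~3 I would read the monotonicity directly off the closed form. Since the $\lambda_m$ are nonnegative, the tail sum $S_k$ is non-increasing in $k$, so the numerator $\mu-\upsilon P_b h\,S_k$ is non-decreasing in $k$; and since $p_k$ is decreasing while $S_k$ is non-increasing, the denominator $2\gamma(p_k+S_k)$ is decreasing in $k$. With both numerator and denominator positive, $f_k^{*3}$ is strictly increasing, giving $f_1^*<\cdots<f_N^*$. The main obstacle is not the algebra but the two qualifications it rests on: verifying a constraint qualification (strict feasibility) so that the KKT system legitimately characterizes the optimum, and securing the \emph{strict} inequality in Part~3, which requires $\{p_k\}$ to be strictly decreasing (guaranteed whenever $\Pr(LX=k)>0$); if $p_k=p_{k+1}$ for some $k$, then $f_k^*=f_{k+1}^*$ can occur unless the corresponding $\lambda_k>0$ makes the numerator/denominator strictly separate.
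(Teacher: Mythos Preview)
Your proposal is correct and follows essentially the same route as the paper: write the Lagrangian for P2, take stationarity in $f_k$ and $y_k$, use the resulting $\nu_k>0$ (the paper's $\eta_k$) to force both $\mu>0$ and $y_k^*=1/f_k^*$ via complementary slackness, then read Parts~1--3 directly off the KKT system. Your added remarks on Slater's condition and on the strictness of $p_{k+1}<p_k$ needed for Part~3 are valid refinements that the paper glosses over.
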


\begin{proof}
See Appendix~\ref{App:P2}.
\end{proof}
Another important property is stated in the following lemma. 
\begin{lemma}\label{Lem:RelaxMore}\emph{The solution for Problem P2 can also be derived by solving the following Problem P3 that results from removing the first $(N-1)$ energy harvesting sub-constraints of P2: 
\begin{equation}
\begin{aligned}
 \min_{\{f_k, y_k\} }  \quad  & \sum_{k=1}^{N} \gamma p_k f_k^2 \\
\text{s.t.}\quad 
& \sum_{k=1}^N \gamma  f_k^2 \leq \upsilon P_b h\sum_{k=1}^N y_k,\\ 
& \sum_{k=1}^{N} y_k\le T,  \\
&  f_k >0, \quad \frac{1}{f_k}-y_k \le 0, & k = 1, 2, \cdots, N.
%& \frac{1}{f_k}-y_k \le 0, & k = 1, 2, \cdots, N.
\end{aligned}
\tag{$\textbf{P3}$} 
\end{equation}
}\end{lemma}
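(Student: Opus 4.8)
The plan is to exploit the fact that P3 is a \emph{relaxation} of P2. Since P3 is obtained from P2 by deleting the first $N-1$ energy-harvesting sub-constraints, every point feasible for P2 is feasible for P3, so the optimal value of P3 is no larger than that of P2. Consequently it suffices to prove that an optimal solution of P3 already satisfies the deleted sub-constraints, i.e. is feasible for P2: such a point attains the (smaller-or-equal) optimal value of P3 while lying in the feasible set of P2, which forces the two optimal values to coincide and makes the P3 optimizer optimal for P2 as well.

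First I would characterize the optimizer $\{f_k^*, y_k^*\}$ of P3 using the same KKT machinery as in the proof of Lemma~\ref{Lem:P2}. Writing $\lambda \ge 0$ for the multiplier of the single (the $m=N$) energy constraint, $\mu \ge 0$ for the deadline, and $\beta_k \ge 0$ for $\frac{1}{f_k}-y_k \le 0$, the stationarity condition in $y_k$ reads $\beta_k = \mu - \lambda \upsilon P_b h$, so $\beta_k$ equals the same constant $\beta$ for every $k$. The stationarity condition in $f_k$ then gives $\beta = 2\gamma(p_k+\lambda)(f_k^*)^3 > 0$ (using $f_k^*>0$), whence $\beta > 0$ and the bound $\frac{1}{f_k}\le y_k$ is active, i.e. $y_k^* = \frac{1}{f_k^*}$. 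Moreover $f_k^* = [\beta/(2\gamma(p_k+\lambda))]^{1/3}$, so because $\{p_k\}$ is non-increasing, $\{f_k^*\}$ is non-decreasing.

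The heart of the argument, and the step I expect to be the main obstacle, is to pass from the single aggregate constraint that P3 enforces to the finer partial constraints required by P2. Define the per-cycle energy surplus $s_k = \upsilon P_b h\, y_k^* - \gamma (f_k^*)^2$; the deleted sub-constraints are exactly $S_m := \sum_{k=1}^m s_k \ge 0$ for $m = 1,\dots,N-1$, while P3 guarantees only $S_N \ge 0$. Here I would invoke the monotonicity just established: with $f_k^*$ non-decreasing and $y_k^* = \frac{1}{f_k^*}$ non-increasing, the harvested term $\upsilon P_b h\, y_k^*$ is non-increasing and the consumed term $\gamma (f_k^*)^2$ is non-decreasing, so $s_k$ is non-increasing in $k$. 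Hence the partial sums $S_m$ have non-increasing increments, i.e. $\{S_m\}$ is a concave sequence with $S_0 = 0$; concavity together with $S_N \ge 0$ forces $S_m \ge \frac{m}{N} S_N \ge 0$ for every $m$. This establishes all intermediate energy-harvesting constraints, so the P3 optimizer is feasible for P2 and the proof closes. The only delicate points are justifying $y_k^* = \frac{1}{f_k^*}$ (so that $s_k$ is genuinely monotone) and the discrete-concavity chord inequality, both of which become routine once the KKT structure above is in hand.
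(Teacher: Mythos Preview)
Your argument is correct, and it reaches the same conclusion as the paper by a genuinely different route. The paper works \emph{inside} P2: starting from the full KKT system of P2 (Lemma~\ref{Lem:P2}), it assumes some intermediate multiplier $\lambda_m>0$ with $1\le m\le N-1$, deduces from complementary slackness that the $m$-th partial energy balance is tight, and then uses the strict monotonicity $f_1^*<\cdots<f_N^*$ to show the $(m{+}1)$-th sub-constraint would be violated --- a contradiction forcing $\lambda_1=\cdots=\lambda_{N-1}=0$. You instead work \emph{outside}, with the relaxation P3: you derive the same monotonicity directly from P3's simpler KKT system, and then argue constructively that the per-cycle surplus $s_k$ is non-increasing, so the partial sums $S_m$ form a concave sequence with $S_0=0$ and $S_N\ge 0$, whence $S_m\ge\frac{m}{N}S_N\ge 0$ and the P3 optimizer is automatically P2-feasible. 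Both approaches ultimately hinge on the monotonicity of $\{f_k^*\}$; the paper's contradiction argument is local (one violated constraint suffices), while your chord/concavity argument handles all $N-1$ deleted constraints at once and is somewhat more self-contained since it does not require the full structure of Lemma~\ref{Lem:P2} for P2.
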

\begin{proof}
See Appendix~\ref{App:RelaxMore}.
\end{proof}

{\color{black}{Lemma~\ref{Lem:RelaxMore} means the Lagrange multipliers associated with the first $(N-1)$ energy harvesting sub-constraints are equal to zero, i.e., $\lambda_k=0$ for $k=1, 2, \cdots N-1$. }} Combining Property 2) in Lemma~\ref{Lem:P2} and Lemma~\ref{Lem:RelaxMore} simplifies the expression for the optimal CPU-cycle frequencies as 
\begin{equation}
f_k^*=\left[\frac{\mu- \upsilon P_b h \lambda}{2 \gamma (p_k+ \lambda)}\right] ^{\frac{1}{3}},  \quad  \forall k  \label{reduced optimal f} 
\end{equation}
where $\lambda_N$ is re-denoted as $\lambda$ for ease of notation. To obtain the closed-form expressions for $\{f_k^*\}$, define two positive constants $a$ and $a'$ as 
\begin{equation}\label{Eq:a and a'}
a = \frac{\gamma N^3}{\upsilon T^3} \quad \text{and}\quad a' = \frac{\gamma}{\upsilon  T^3}\left( \sum_{k=1}^{N}  p_k^{\frac{1}{3}}\right)^2 \left(\sum_{k=1}^{N}p_k^{-\frac{2}{3}}\right).
\end{equation} Then the main result of this subsection is stated as follows.

\begin{theorem}\label{Theo:LocalComp}\emph{(Optimal CPU-cycle frequencies for local computing)}. \emph{The optimal CPU-cycle frequencies $\{f^*_1, f^*_2, \cdots, f^*_N\}$ that solve the optimization problem P3 satisfy the following. 
\begin{enumerate}
\item If $ P_b h < a$, $\{f^*_1, f^*_2, \cdots, f^*_N\}$ is an  empty set since Problem P3 is infeasible. 
\item If $a \leq P_b h < a'$, 
\begin{equation}\label{Eq:CPU:lambda:nonzero}
f_k^*=\l[\frac{1}{T}\sum_{m=1}^{N} (p_m+\lambda)^{\frac{1}{3}} \r] (p_k + \lambda)^{-\frac{1}{3}}, \qquad \forall k 
\end{equation}
{\color{black}{where the positive constant $\lambda$ is the Lagrange multiplier with respect to (w.r.t) the energy harvesting constraint of P3}} and it satisfies
%solves $\sum_{k = 1}^N \gamma (f_k^*)^2  = \upsilon P_b h \sum_{k=1}^N \frac{1}{f_k^*}$.
\begin{equation}
\left[ \sum_{k=1}^N  (p_k+\lambda) ^{\frac{1}{3}}\right]^2 \left[ \sum_{k=1}^N({p_k+\lambda})^{-\frac{2}{3}}\right]=\frac{\upsilon P_b h T^3}{\gamma}. \label{Eq:lambda:ph}
\end{equation}
\item If $P_b h \geq a'$, $\{f_k^*\}$ are independent of $P_b h$,
\begin{equation}\label{Eq:CPU:lambda:zero}
f_k^* =\l(\frac{1}{T}\sum_{m=1}^{N} p_m^{\frac{1}{3}} \r) p_k^{-\frac{1}{3}}, \qquad \forall k. 
\end{equation}
\end{enumerate}}
\end{theorem}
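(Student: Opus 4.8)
The plan is to start from the simplified stationarity condition \eqref{reduced optimal f}, which already encodes the KKT conditions of the convex problem P3 (a single energy-harvesting multiplier $\lambda=\lambda_N\ge 0$ together with the deadline multiplier $\mu$), and to collapse it into a one-parameter family indexed only by $\lambda$. First I would invoke Property 1) of Lemma~\ref{Lem:P2}, namely that the deadline constraint is active, so $\sum_{k=1}^{N}1/f_k^*=T$. Writing $f_k^*=\theta^{1/3}(p_k+\lambda)^{-1/3}$ with $\theta=(\mu-\upsilon P_b h\lambda)/(2\gamma)$ and substituting into this equality eliminates $\mu$ and gives $\theta^{1/3}=\frac1T\sum_{m=1}^{N}(p_m+\lambda)^{1/3}$, so the frequencies depend on $P_b h$ only through $\lambda$. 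The remainder is a case analysis on the complementary-slackness status of the lone energy-harvesting constraint of P3.

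The clean split is between $\lambda=0$ and $\lambda>0$. When $\lambda=0$ the constraint is slack and the expression collapses to \eqref{Eq:CPU:lambda:zero}. For this to be KKT-consistent the consumed energy $\sum_{k}\gamma(f_k^*)^2=\frac{\gamma}{T^2}(\sum_m p_m^{1/3})^2\sum_k p_k^{-2/3}$ must not exceed the harvested energy $\upsilon P_b h\,T$; rearranging yields exactly $P_b h\ge a'$ with $a'$ as in \eqref{Eq:a and a'}, which is Case 3). When $\lambda>0$ the constraint binds, so $\sum_k\gamma(f_k^*)^2=\upsilon P_b h\,T$; inserting the $\lambda$-parametrized frequencies turns this into \eqref{Eq:lambda:ph} and the frequencies into \eqref{Eq:CPU:lambda:nonzero}. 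To fix the range of $P_b h$ for which a positive $\lambda$ exists I would study $g(\lambda):=\big(\sum_m(p_m+\lambda)^{1/3}\big)^2\sum_k(p_k+\lambda)^{-2/3}$, the left side of \eqref{Eq:lambda:ph}, and show it is continuous and strictly decreasing with $g(0)=(\sum_m p_m^{1/3})^2\sum_k p_k^{-2/3}=\upsilon a' T^3/\gamma$ and $g(\lambda)\to N^3=\upsilon a T^3/\gamma$ as $\lambda\to\infty$. Hence \eqref{Eq:lambda:ph} has a unique positive root precisely when $a\le P_b h<a'$, which is Case 2).

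For Case 1) (infeasibility when $P_b h<a$) the key tool is one elementary inequality: by Hölder's inequality with exponents $(3,3,3)$ applied to $N=\sum_k x_k^{1/3}\,x_k^{1/3}\,x_k^{-2/3}$, one gets $N\le(\sum_k x_k)^{2/3}(\sum_k x_k^{-2})^{1/3}$, i.e. $\big(\sum_k x_k\big)^2\big(\sum_k x_k^{-2}\big)\ge N^3$ for positive reals. Taking $x_k=1/f_k$ shows $\sum_k f_k^2\ge N^3/(\sum_k 1/f_k)^2$ for every admissible frequency vector. Since any feasible point of P3 satisfies $\sum_k 1/f_k\le\sum_k y_k\le T$ and $\sum_k\gamma f_k^2\le\upsilon P_b h\sum_k y_k\le\upsilon P_b h\,T$, chaining these yields $\gamma N^3/T^2\le\upsilon P_b h\,T$, i.e. $P_b h\ge a$; contrapositively P3 is infeasible when $P_b h<a$. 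The same inequality (with $x_k=p_k^{1/3}$) simultaneously gives $a\le a'$, so the three cases partition $[0,\infty)$ consistently and dovetail with the $\lambda\to\infty$ and $\lambda=0$ endpoints found above.

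I expect the main obstacle to be the monotonicity of $g(\lambda)$ and the identification of its endpoint values with the thresholds $a$ and $a'$: the asymptotics as $\lambda\to\infty$ (using $\sum(p_m+\lambda)^{1/3}\sim N\lambda^{1/3}$ and $\sum(p_k+\lambda)^{-2/3}\sim N\lambda^{-2/3}$, so $g\to N^3$) are routine, but establishing \emph{strict} monotonicity — most cleanly by differentiating $\log g$ and showing $g'(\lambda)<0$, or via a majorization argument on the two coupled power sums — is the step needing genuine care. Everything else is KKT bookkeeping plus the two applications of the $(3,3,3)$-Hölder inequality.
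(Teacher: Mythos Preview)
Your proposal is correct and follows essentially the same route as the paper: start from \eqref{reduced optimal f}, use the active deadline constraint (Lemma~\ref{Lem:P2}) to eliminate $\mu$, split on $\lambda=0$ versus $\lambda>0$, and pin down the range $[a,a')$ by studying the monotonicity and endpoint values of the map $\lambda\mapsto P_b h$ (your $g(\lambda)$). Two minor execution differences are worth noting: for the strict monotonicity, the paper differentiates $P_b h$ with respect to $\lambda$ and establishes the sign via the Cauchy--Schwarz inequality applied to $\big(\sum v_k w_k\big)\big(\sum v_k/w_k\big)\ge\big(\sum v_k\big)^2$ with $v_k=(p_k+\lambda)^{-2/3}$ and $w_k=p_k+\lambda$, which is cleaner than a bare log-derivative computation; and for infeasibility when $P_b h<a$, the paper deduces it from the exhaustiveness of the KKT case analysis, whereas your direct primal argument via the $(3,3,3)$-H\"older inequality is a nice self-contained alternative that also yields $a\le a'$ for free.
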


\begin{proof}
See Appendix~\ref{App:LocalComp}.
\end{proof}
The first case ($P_b h< a$) corresponds to the scenario where the transferred power $P_b h$ is so low that it is impossible to complete the local computing within the deadline $T$. In the second case ($a \leq P_b h < a'$), the transferred power is not large but sufficient  for meeting the deadline and consequently the optimal CPU-cycle frequencies are functions of $P_b h$. The transferred power is large in the last case ($P_b h \geq a'$) where increasing $P_b$ no longer has any effect on  the optimal CPU-cycle frequencies and only increases the amount of energy savings. In this case, the optimal CPU-cycle frequencies depend only on the CCI distribution and the deadline.

\begin{remark}\label{Rem:Trd:Loc}\emph{As observed from the definition of $a$ in \eqref{Eq:a and a'} and Theorem~\ref{Theo:LocalComp}, there exists a tradeoff between the deadline $T$ and BS transmission power $P_b$.  Specifically, meeting a stricter  deadline requires larger $P_b$ and vice versa. {\color{black}{In other words, the BS can control $P_b$ to increase the computing probability. However, jointly designing the control policies at the BS and mobile is challenging and $P_b$ is  assumed fixed for simplicity.}} }
 \end{remark}
 
 \begin{remark}[BS transmission power and the parameter $\lambda$]\emph{ Recall that $\lambda$ is the Lagrange multiplier for solving Problem P3. The optimal CPU-cycle frequencies in Cases 2) and 3) of Theorem~\ref{Theo:LocalComp} correspond to $\lambda > 0$ and $\lambda = 0$, respectively. The case of $\lambda \!=\! 0$ (or equivalently $P_b h \!\ge\! a'$) implies that the energy harvesting constraint in Problem P3 is inactive at the solution point, reducing the problem and solution to be identical to those in \cite{zhang:MobileMmodel:2013} considering local computing without MPT. 
}
 \end{remark}

\begin{corollary}[Minimum average energy consumption]\label{Cor:Egy:Local}\emph{It follows from Theorem~\ref{Theo:LocalComp} that the minimum average energy consumption for the local computing, denoted as $\bar{E}_{\textrm{loc}}^*$, is given as follows. 
\begin{enumerate}
\item If $a \leq P_b h < a'$, 
\begin{equation}
\bar{E}_{\textrm{loc}}^*\!=\!\frac{\gamma}{T^2} \left[ \sum_{k=1}^N  (p_k+\lambda) ^{\frac{1}{3}}\right]^2 \left[ \sum_{k=1}^N p_k ({p_k+\lambda})^{-\frac{2}{3}}\right] \label{Eq:Local:egy:lambda:nonzero}
\end{equation}
and $\bar{E}_{\textrm{loc}}^*$ is a monotone-decreasing function of $P_b h$ satisfying 
\begin{equation}
 \frac{\gamma}{T^2}\left( \sum_{k=1}^{N} p_k^\frac{1}{3}\right)^3   < \bar{E}_{\textrm{loc}}^*\leq \frac{\gamma {N}^2 }{T^2} \sum_{k=1}^{N} p_k.  \nn
\end{equation}
\item If $P_b h \geq a'$, 
\begin{equation}
\bar{E}_{\textrm{loc}}^* = \frac{\gamma}{T^2}\left( \sum_{k=1}^{N} p_k^\frac{1}{3}\right)^3 \label{Eq:Local:egy:lambda:zero}
\end{equation}
that is independent of $P_b h$. 
\end{enumerate}
Moreover,  the corresponding maximum average mobile energy savings, denoted as $\bar{S}_{\textrm{loc}}^*$,  is given as $\bar{S}_{\textrm{loc}}^* = \upsilon P_b h T - \bar{E}_{\textrm{loc}}^*$. 
}
\end{corollary}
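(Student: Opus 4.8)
The plan is to read off the minimum energy by substituting the optimal frequencies from Theorem~\ref{Theo:LocalComp} into the objective $\sum_{k=1}^N \gamma p_k (f_k^*)^2$, and then to establish the claimed monotonicity and two-sided bound by a sensitivity (envelope) argument together with a limiting analysis at the two endpoints of the interval $[a,a')$.

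First I would obtain the two closed forms by direct substitution. For Case~2, inserting \eqref{Eq:CPU:lambda:nonzero} pulls out the common factor $[\frac{1}{T}\sum_{m=1}^N (p_m+\lambda)^{1/3}]^2$ and leaves $\sum_{k=1}^N p_k (p_k+\lambda)^{-2/3}$, which is exactly \eqref{Eq:Local:egy:lambda:nonzero}. For Case~3, inserting \eqref{Eq:CPU:lambda:zero} and using $p_k\, p_k^{-2/3}=p_k^{1/3}$ collapses $(\sum_m p_m^{1/3})^2(\sum_k p_k^{1/3})$ into $(\sum_k p_k^{1/3})^3$, giving \eqref{Eq:Local:egy:lambda:zero}. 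These are routine algebraic steps.

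Next I would treat the monotonicity in $P_b h$ by a sensitivity argument. Viewing $\bar{E}_{\textrm{loc}}^*$ as the optimal value of P3 parametrized by $\theta=P_b h$, the energy harvesting constraint enters only through its right-hand side $\upsilon\theta\sum_k y_k$, so the envelope theorem yields $\frac{d\bar{E}_{\textrm{loc}}^*}{d\theta}=-\lambda\,\upsilon\sum_k y_k^*=-\lambda\,\upsilon T$, where I have used that the deadline constraint is active, $\sum_k y_k^*=T$, by Property~1) of Lemma~\ref{Lem:P2}. In Case~2 the multiplier satisfies $\lambda>0$, so this derivative is strictly negative and $\bar{E}_{\textrm{loc}}^*$ strictly decreases in $P_b h$. (A non-strict version also follows immediately from P1, since increasing $P_b h$ only enlarges the feasible set and hence cannot increase the minimum.)

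Finally I would pin down the bounds by evaluating the endpoints. Writing $g(\lambda)$ for the left-hand side of \eqref{Eq:lambda:ph}, one checks from the definitions of $a$ and $a'$ that $g(0)=\upsilon a' T^3/\gamma$ and $g(\lambda)\to N^3=\upsilon a T^3/\gamma$ as $\lambda\to\infty$, and that $g$ is decreasing; hence $\lambda\to 0^+$ as $P_b h\to a'^-$ and $\lambda\to\infty$ as $P_b h\to a^+$. Letting $\lambda\to 0$ reduces \eqref{Eq:Local:egy:lambda:nonzero} to the Case~3 value, giving the strict lower bound $\frac{\gamma}{T^2}(\sum_k p_k^{1/3})^3$ (strict because $P_b h<a'$ forces $\lambda>0$); letting $\lambda\to\infty$ drives every $f_k^*\to N/T$ and the energy to $\frac{\gamma N^2}{T^2}\sum_k p_k$, the upper bound attained at $P_b h=a$. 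Monotonicity then closes the chain of inequalities, and the savings formula $\bar{S}_{\textrm{loc}}^*=\upsilon P_b h T-\bar{E}_{\textrm{loc}}^*$ is immediate since the energy harvested over $[0,T]$ equals $\upsilon P_b h T$ while $\bar{E}_{\textrm{loc}}^*$ is the consumed energy. I expect the main obstacle to be the limiting analysis at $P_b h=a$: verifying the asymptotics $g(\lambda)\to N^3$ and $f_k^*\to N/T$ as $\lambda\to\infty$ together with the monotonicity of $g$ that fixes the $\lambda$--$P_b h$ correspondence, and justifying differentiability of the optimal value for the envelope step (standard given strict convexity).
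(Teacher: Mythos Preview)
Your proof is correct, and the closed-form substitutions and endpoint analysis match the paper exactly. The genuine difference is in the monotonicity step. The paper differentiates the closed form \eqref{Eq:Local:egy:lambda:nonzero} directly with respect to $\lambda$: after substituting $a_k=(p_k+\lambda)^{-1/3}$ and expanding, the bracketed expression reduces to the double sum $\sum_{i<j}\frac{1}{a_i a_j}\lambda(a_i^3-a_j^3)^2>0$, giving $\partial\bar{E}^*_{\textrm{loc}}/\partial\lambda>0$; this is then chained with the fact $\partial(P_b h)/\partial\lambda<0$ already established in the proof of Theorem~\ref{Theo:LocalComp}. Your envelope argument bypasses all of this algebra and yields the sharper identity $d\bar{E}^*_{\textrm{loc}}/d(P_b h)=-\lambda\,\upsilon T$, from which strict monotonicity on $[a,a')$ is immediate since $\lambda>0$ there. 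Your route is shorter and more informative (it quantifies the sensitivity), at the cost of invoking the regularity needed for the envelope theorem; the paper's route is purely computational but heavier. Both rely on the same $\lambda\leftrightarrow P_b h$ correspondence from Appendix~\ref{App:LocalComp} to identify the endpoint limits.
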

\begin{proof}
%Please refer to the extended version \cite{you:MCC:2015}.
See Appendix~\ref{App:Egy:Local}.
\end{proof}

\subsection{Energy Efficient Offloading with a Static Channel} 
This sub-section focuses on computation offloading. The time division between MPT and offloading is optimized for maximizing the mobile energy savings (see Section~\ref{Section:Metric}). 

\subsubsection{Problem Formulation} The objective function, namely the mobile energy savings, is obtained as follows.  As shown in Fig.~\ref{system_model}(b),  for current operation mode and $t' \in (0, T)$, the time interval $[0, T]$ is divided into two parts: $[0, t']$ and $(t', T]$, corresponding to MPT and offloading, respectively. Let the amount of energy harvested over the interval $[0, t']$ be defined as a function of $t'$: $E_{\textrm{MPT}}(t')=\upsilon P_b h t'$. Next, consider offloading in the interval $(t', T]$. Fixed-rate transmission over this interval is assumed since this is the most energy-efficient data transmission policy under a deadline constraint as proved in \cite{PrabBiyi:EenergyEfficientTXLazyScheduling:2001}. Then the energy consumption for offloading, denoted as $E_{\textrm{off}}(t')$, can be written as $E_{\textrm{off}}(t')=[2^\frac{L}{B(T - t')}-1]\frac{\sigma^2}{h} (T - t')$.  The energy savings is thus given as $ \l[E_{\textrm{MPT}}(t') - E_{\textrm{off}}(t')\r]$.  Varying $t'$ changes the harvested energy, offloading energy consumption as well as energy savings.  Specifically, as $t'$ increases,   $E_{\textrm{MPT}}(t')$ grows linearly with $t'$ but $E_{\textrm{off}}(t')$ monotonically increases as shown in\cite{PrabBiyi:EenergyEfficientTXLazyScheduling:2001}. Therefore, the energy savings may not be a monotone function and thus optimization is necessary. To simplify notation, define the offloading duration $t = T - t'$ and then $E_{\textrm{MPT}}(t')$ and $E_{\textrm{off}}(t')$ can be rewritten as  $E_{\textrm{MPT}}(t)=\upsilon P_b h (T - t)$ and $E_{\textrm{off}}(t)=(2^ \frac{L}{Bt}-1)\frac{\sigma^2}{h} t$. Substituting the expression of $E_{\textrm{MPT}}(t)$ and $E_{\textrm{off}}(t)$ rewrites the objective function as
{\color{black}{\begin{equation}
E_{\textrm{MPT}}(t)-E_{\textrm{off}}(t)\!=\!\upsilon P_b h T + \l(\frac{\sigma^2}{h}\!-\!\upsilon P_b h\r) t- \frac{\sigma^2}{h} t 2^{\frac{L}{Bt}}. \label{Eq:Offloading:Objective}
\end{equation}}}

Given this objective function, the problem for the current case is formulated as 
\begin{equation}
\begin{aligned}
 \max_t  \qquad  &E_{\textrm{MPT}}(t)-E_{\textrm{off}}(t)\\
\text{s.t.}\qquad 
 & 0< t < T,\\
 & E_{\textrm{MPT}}(t)-E_{\textrm{off}}(t) \ge 0.\nn
\end{aligned}
\tag{$\textbf{P4}$} 
\end{equation}

\subsubsection{Solution} Define \begin{equation}
\rho(h)  =\frac{\ln 2}{B\left[1+ W(\frac{\upsilon P_b h^2}{\sigma ^2 e} -\frac{1}{e})\right]}  \label{Eq:rho}
\end{equation}
where  $W(x)$ is the Lambert function defined as the solution for $W(x) e^{W(x)}=x$. Problem P4 is a convex problem as shown in the following lemma. 

\begin{lemma}[Convexity of P4] \label{Lem:Conv:Off}\emph{The objective function of Problem P4 is a concave function for $t\in (0, \infty)$ and maximized at $t = \rho(h) L$ with $\rho(h)$ defined in \eqref{Eq:rho}. }
 \end{lemma}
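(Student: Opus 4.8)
The plan is to treat the objective in \eqref{Eq:Offloading:Objective} as the sum of an affine part and a single nonlinear term, to establish concavity by one direct differentiation, and then to locate the unique stationary point by recognizing the resulting transcendental equation as the defining relation of the Lambert $W$ function.

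First I would write the objective as $g(t) = \upsilon P_b h T + \l(\frac{\sigma^2}{h} - \upsilon P_b h\r) t - \frac{\sigma^2}{h}\phi(t)$, where $\phi(t) := t\, 2^{L/(Bt)} = t\, e^{c/t}$ with the shorthand $c := \frac{L\ln 2}{B} > 0$. The first two terms are affine in $t$, so concavity of $g$ on $(0,\infty)$ reduces to convexity of $\phi$. A routine computation gives $\phi'(t) = e^{c/t}\l(1 - \frac{c}{t}\r)$ and, after collecting terms, $\phi''(t) = e^{c/t}\, \frac{c^2}{t^3}$, which is strictly positive for every $t>0$. Hence $\phi$ is strictly convex, $-\frac{\sigma^2}{h}\phi$ is strictly concave (since $\sigma^2/h>0$), and therefore $g$ is strictly concave on $(0,\infty)$.

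Next, to locate the maximizer I would impose the first-order condition $g'(t)=0$. Using the expression for $\phi'$, this reads $\frac{\sigma^2}{h} - \upsilon P_b h = \frac{\sigma^2}{h}\, e^{c/t}\l(1 - \frac{c}{t}\r)$; dividing by $\sigma^2/h$ and substituting $u := c/t$ yields
\[
e^{u}(1-u) = 1 - \frac{\upsilon P_b h^2}{\sigma^2}.
\]
The key step is to solve this in closed form. Setting $u = 1 + W(z)$ gives $1-u = -W(z)$ and $e^{u} = e\, e^{W(z)}$, so by the defining identity $W(z)e^{W(z)} = z$ one obtains $e^{u}(1-u) = -e\, W(z)e^{W(z)} = -e z$. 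Matching this against the right-hand side forces $z = \frac{1}{e}\l(\frac{\upsilon P_b h^2}{\sigma^2} - 1\r) = \frac{\upsilon P_b h^2}{\sigma^2 e} - \frac{1}{e}$, exactly the argument of $W$ in \eqref{Eq:rho}. Translating $u = c/t = 1 + W(z)$ back to $t$ gives $t = \frac{c}{1+W(z)} = \rho(h)\,L$, since $c = L\ln 2 / B$.

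Finally I would verify admissibility and globality. The argument $z$ satisfies $z \ge -\tfrac{1}{e}$ (equivalently $\upsilon P_b h^2/\sigma^2 \ge 0$, always true), so the principal branch gives $W(z) > -1$ whenever $P_b h > 0$; hence $1+W(z) > 0$, and $t = \rho(h)L > 0$ lies in the domain $(0,\infty)$. Strict concavity from the first step then guarantees that this unique stationary point is the global maximizer. The main obstacle I anticipate is precisely this middle step: recognizing that the transcendental condition $e^{u}(1-u) = 1 - \upsilon P_b h^2/\sigma^2$ collapses to the Lambert-$W$ form and manipulating $W(z)e^{W(z)}=z$ cleanly so that the constant on the right is reproduced; the differentiation and the domain check are comparatively routine.
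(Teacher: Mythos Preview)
Your proposal is correct and follows essentially the same route as the paper: both compute the second derivative of the objective (you isolate the convex piece $\phi(t)=t\,2^{L/(Bt)}$, the paper differentiates the whole expression), deduce strict concavity, and then solve the first-order condition via the Lambert $W$ identity to obtain $t=\rho(h)L$. Your write-up is in fact more explicit than the paper's on the Lambert step---the paper simply asserts that the transcendental equation is solved by $t=\rho(h)L$, whereas you show the substitution $u=1+W(z)$ and recover the argument of $W$ in \eqref{Eq:rho}---and your admissibility check via $W(z)>-1$ replaces the paper's observation that $g'(t)\to+\infty$ as $t\downarrow 0$; both arguments serve the same purpose of placing the maximizer in $(0,\infty)$.
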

\begin{proof}
See Appendix~\ref{App:Conv:Off}.
\end{proof}

 Define a positive constant $a''$ as
 \begin{eqnarray}
 &a''=\dfrac{\sigma^2 }{\upsilon}\left\{ 1+  \left[ \frac{L \ln 2}{BT }+W(-e^{-1-\frac{L \ln 2}{BT }})\right] \right.\nn\\ 
&~~~~~~~~~ \left. \times  \exp{\left(\frac{L \ln 2}{BT }+W(-e^{-1-\frac{L \ln 2}{BT } }   )+1\right)} \right\}. \label{Eq:Thresh:Off}
\end{eqnarray}
Then optimizing the objective function of Problem P4 over  the interval $(0, T)$ and investigating the feasibility condition yield the solution  as shown in the following theorem.

 \begin{theorem}\label{Theo:Offload}\emph{
 The optimal offloading duration $t^*$ that solves Problem P4 satisfies the following.
\begin{enumerate}
\item If $P_b h^2< a''$, the problem is infeasible.
\item If $P_b h^2 \ge a''$, $t^*=\rho(h) L$ with $\rho(h)$ defined in \eqref{Eq:rho}.
\end{enumerate}}
 \end{theorem}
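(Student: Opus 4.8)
The plan is to build directly on Lemma~\ref{Lem:Conv:Off}, which tells us that the objective $g(t):=E_{\textrm{MPT}}(t)-E_{\textrm{off}}(t)$ is concave on $(0,\infty)$ with a unique unconstrained maximizer at $t^*=\rho(h)L>0$. Problem P4 only adds the box constraint $0<t<T$ and the sign (energy-savings) constraint $g(t)\ge 0$, so everything reduces to (i) locating $t^*$ relative to $(0,T)$, (ii) deciding when the constraint set is nonempty, and (iii) reading off the constrained optimum. First I would record the two boundary facts that drive the argument: $g(t)\to-\infty$ as $t\to 0^+$ (the term $\frac{\sigma^2}{h}t\,2^{L/(Bt)}$ blows up), and $g(T)=\frac{\sigma^2}{h}T\big(1-2^{L/(BT)}\big)<0$. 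Thus the objective is negative at both ends of the admissible interval.

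Because $g$ is concave, hence unimodal, and $t^*>0$ automatically, these boundary signs force the feasible set $\{t\in(0,T):g(t)\ge 0\}$ to be nonempty \emph{if and only if} the interior maximizer lies in $(0,T)$ and attains a nonnegative value, i.e.\ $t^*<T$ and $g(t^*)\ge 0$. So the entire feasibility question collapses to the sign of the maximal energy savings $g(t^*)$. The feasibility boundary is the locus where $g(t^*)=0$; there, since $t^*$ is an interior maximizer, the stationarity condition $g'(t^*)=0$ holds simultaneously, and my plan is to solve this $2\times 2$ system for the critical value of $P_bh^2$.

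Carrying this out, I would eliminate between $g(t^*)=0$ and $g'(t^*)=0$. Writing $s=L/(Bt^*)$ and $c=\frac{L\ln 2}{BT}$, the two equations reduce to $\upsilon P_bh^2=\sigma^2 c\,2^{s}$ together with $2^{-s}=1+c-s\ln 2$. The second relation is exactly of Lambert-$W$ type: setting $v=-s\ln 2$ gives $e^{v}=1+c+v$, which solves as $v=-1-c-W(-e^{-1-c})$, where one checks that the principal branch $W(\cdot)\in(-1,0)$ is the admissible one, i.e.\ the branch giving $t^*<T$. Substituting back into $\upsilon P_bh^2=\sigma^2 c\,2^{s}$ and using $w\,e^{w}=-e^{-1-c}$ to simplify $e^{w+1+c}=-1/w$, the critical value collapses to precisely $P_bh^2=a''$ with $a''$ as in \eqref{Eq:Thresh:Off}. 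Finally, to convert this single critical value into the two-sided statement, I would invoke the envelope theorem: $\frac{\partial g}{\partial P_b}=\upsilon h(T-t)$, so at the optimizer $\frac{d}{dP_b}g(t^*)=\upsilon h(T-t^*)>0$ whenever $t^*<T$; hence $g(t^*)$ is strictly increasing in $P_b$, it crosses zero exactly at $P_bh^2=a''$, and $t^*<T$ holds throughout this regime, yielding $g(t^*)\ge 0\iff P_bh^2\ge a''$.

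Assembling the pieces: for $P_bh^2\ge a''$ the unconstrained maximizer $t^*=\rho(h)L$ lies in $(0,T)$ and satisfies $g(t^*)\ge 0$, so it is feasible and, being the global maximum of the concave $g$, is optimal; for $P_bh^2<a''$ we have $g<0$ on all of $(0,T)$ and P4 is infeasible. I expect the main obstacle to be the middle step---the Lambert-$W$ algebra that reduces the tangency system $\{g=0,\,g'=0\}$ to the closed form $a''$, in particular performing the substitution cleanly and correctly identifying the principal branch of $W$ (and hence the positivity of $a''$); the feasibility bookkeeping and the final optimality claim are then routine given Lemma~\ref{Lem:Conv:Off}.
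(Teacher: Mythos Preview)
Your argument is correct and complete. The paper organizes the proof differently: it treats the two feasibility requirements $t^{*}<T$ and $g(t^{*})\ge 0$ separately, derives a threshold on $P_{b}h^{2}$ for each (the first by forcing $g'(T)<0$, the second by a direct Lambert-$W$ manipulation of the inequality $S_{\textrm{off}}^{*}\ge 0$), and then checks at the end that the second threshold $a''$ dominates the first, so that $P_{b}h^{2}\ge a''$ alone is the binding condition. You instead solve the tangency system $\{g(t^{*})=0,\,g'(t^{*})=0\}$ in one shot to land on $a''$, and then use the envelope theorem together with the monotonicity of $t^{*}=\rho(h)L$ in $P_{b}$ to upgrade that single critical value into the two-sided threshold statement. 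Your route is a bit more geometric and sidesteps the final comparison of two thresholds; the paper's route is more computational but makes the individual roles of the interval constraint and the sign constraint explicit. One small point worth making fully explicit in your write-up: the envelope step $\frac{d}{dP_{b}}g(t^{*})=\upsilon h(T-t^{*})>0$ presupposes $t^{*}<T$, so you should state (as you hint) that $t^{*}=\rho(h)L$ is decreasing in $P_{b}$, hence once $t^{*}<T$ at the boundary $P_{b}h^{2}=a''$ (which your branch check guarantees) it remains so for all larger $P_{b}$; conversely, for $P_{b}$ small enough that $t^{*}\ge T$ the supremum on $(0,T)$ is $g(T)<0$ and infeasibility is immediate.
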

% \vspace{-5pt}
  \begin{proof}
See Appendix~\ref{App:Offload}.
\end{proof}
%\vspace{-10pt}
\begin{remark}[Maximum energy savings]\label{Rem:off:egy} \emph{It follows from Theorem~\ref{Theo:Offload}  that the maximum mobile energy savings is
\begin{equation}
S^*_{\textrm{off}} = \upsilon P_b hT - y(h) L\nn  \label{maximum energy savings}
\end{equation}
where the function $y(h)$ is defined as 
\begin{align}
y(h) = \frac{\sigma^2 \ln2}{B h}  \exp \left(W\left(\frac{\upsilon P_b h^2}{\sigma ^2 e} -\frac{1}{e}\right)+1\right). \label{Eq:V} 
\end{align}}
 \end{remark}
%\vspace{-5pt}
\begin{remark} \emph{The tradeoff between the deadline $T$ and BS transmission power $P_b$ as discussed in Remark~\ref{Rem:Trd:Loc} for  the local computing also holds for the current operation mode. Moreover, increasing the channel bandwidth $B$ allows  a more stringent deadline or smaller $P_b$. }  
 \end{remark}
%\vspace{-15pt}
\begin{remark}[BS transmission power vs. offloading duration]\emph{It can be observed from the expression of $t^*$ that increasing the BS transmission power $P_b$ reduces the optimal offloading duration $t^*$. The reason is that higher transmission power is affordable leading to a shorter transmission duration given fixed data to be offloaded. }
\end{remark}
%\vspace{-15pt}
\begin{remark}[Power beacon based MPT]\emph{{\color{black}{A power beacon can be deployed for performing MPT such that the power transfer and offloading will be served by the power beacon and BS, respectively. This will lead to different gains for the MPT and offloading channels, instead of being identical in the current model, which, however, will not cause significant changes to the key results and policy structures.}
 }}
  \end{remark}

%\vspace{-20pt}
\subsection{Offload  or Not?} \label{Section:ModeSel}
%\vspace{-5pt}
Since the mobile has non-causal CSI, for each channel realization, it can decide the operation mode based on whether the 
successful computing conditions are satisfied and which mode achieves the larger mobile energy savings.

First, if only one operation mode is feasible, i.e., $h \ge a/P_b $ for local computing or $h\ge \sqrt{a''/P_b}$ for offloading, then this mode is preferred. 

Next, if both operation modes are feasible, the desirable mode is selected by comparing the amounts of their maximum energy savings. Define the difference between their maximum energy savings as $\Delta S =  S^*_{\textrm{off}} - \bar S^*_{\textrm{loc}} $. It follows from Corollary~\ref{Cor:Egy:Local} and Remark~\ref{Rem:off:egy} that 
\begin{equation}
\Delta S =    \frac{\gamma \theta }{T^2} - y(h) L \label{Eq:S:Diff}
\end{equation}
where the coefficient  $\theta$ satisfies 
\begin{equation}
\left( \sum_{k=1}^{N} p_k^\frac{1}{3}\right)^3 \leq \theta \leq  N^2\sum_{k=1}^{N} p_k  \nn
\end{equation}
and $y(h)$ is given in \eqref{Eq:V}. Then offloading should be performed if and only if  $\Delta S \geq 0$. 

From \eqref{Eq:S:Diff} and \eqref{Eq:V}, the effects of parameters such as the computation deadline $T$ and BS transmission power $P_b$ on the offloading decision are characterized as follows. 

\begin{enumerate}
\item{If $T\le \sqrt{\frac{\gamma \theta }{y(h)L}}$, offloading is preferred which implies that a more strict deadline requirement tends to select the offloading mode. {\color{black}{It can be interpreted as follows. As the deadline increases, the growing rate of harvested energy for local computing is larger than that for offloading. Moreover, the energy consumption for local computing decreases with an increasing deadline, however, that of offloading can be proved to be invariant from \eqref{Eq:V}.
% Therefore, local computing is preferred for a less strict deadline.
 }}}
\item{If $P_b \le \frac{\sigma^2}{\upsilon h^2}(1+e a''' \ln a''')$ where $a'''=\frac{B h\gamma \theta}{e T^2 \sigma^2 L \ln 2}$, offloading is {\color{black}{selected}} which indicates that offloading is preferred when the BS transmission power is insufficient. The reason for this threshold-based mode selection w.r.t the variation of  BS transmission power is  similar to that of deadline.}
\end{enumerate}

%\vspace{-5pt}
\section{Energy Efficient Mobile Cloud Computing  with a Dynamic Channel}
While wireless channel is assumed fixed in the preceding section, dynamic channel is considered in this section which is modeled as  $M$ channel fading blocks with block duration $T_c$ satisfying $M T_c\!=\!T$. The mobile is assumed to have the prior knowledge of the channel power gains in these fading blocks (see Assumption~\ref{AS:CSI}), enabling off-line data allocation for computing in different fading blocks.
{\color{black}{Similar to the fixed-channel counterpart, the mobile is assumed to select one of the two operation modes for a single task in these fading blocks.}} Local computing and offloading are optimized separately in the following sub-sections. The structures of resultant  control policies are also analyzed. The results can be straightforwardly combined to optimize the  mobile operation mode selection as in Section~\ref{Section:ModeSel} with the details omitted for simplicity. 
%\vspace{-15pt}
\subsection{Energy Efficient Local Computing with a Dynamic Channel}

\subsubsection{Problem Formulation} Exploiting the different time scales for channel variation and computing in Assumption~\ref{As:Channel}, the problem of optimizing  the CPU cycle frequencies for local computing can be decomposed into a master and a slave problems as follows. 
\begin{itemize}
\item {\bf Slave problem:} Given a particular fading block and allocated data, the slave problem aims at minimizing the average energy consumption for computing in this block by controlling the CPU cycle frequencies in a similar way as the fixed-channel counterpart. Consider a single fading block with channel power gain $h$ and allocated data size $\ell$. {\color{black}{Moreover, let  $R \ge 0$ denote the amount of residual energy from computing in the preceding block.}}  The slave problem for this block is formulated as 
\begin{equation}
%\textrm{(CPU-cycle Frequencies)}\qquad
\begin{aligned}
 \min_{\{f_k\} }  \quad& \sum_{k=1}^{\ell N_0} \gamma p_k f_k^2 \\
\text{s.t.}\quad 
& \sum_{k=1}^m \gamma  f_k^2 \leq R+\upsilon P_b h\sum_{k=1}^m \frac{1}{f_k}, & m = 1,  \cdots, \ell N_0,\\ 
& \sum_{k=1}^{\ell N_0} \frac{1}{f_k}\le T_c,  \\
&  f_k >0,  & k = 1,  \cdots, \ell N_0. 
\end{aligned}
\tag{$\textbf{P5}$} 
\end{equation}
Setting $R = 0$ reduces Problem P5 to P1 for the case of static channel. Let $G_{\textrm{loc}}(\ell, R, h )$ denote the minimum average energy consumption for computing $\ell$-bit data in a single fading block. Then $G_{\textrm{loc}}(\ell, R, h) = \sum_{k=1}^{\ell N_0} \gamma p_k (f_k^*)^2$ where $\{f_k^*\}$ solve the above slave problem. 

\item {\bf Master problem:}  The master problem divides the input data for computing in different fading blocks under the criterion of minimum total energy consumption.  Let $n$ denote the index of fading blocks. The master problem is formulated as follows. 
\begin{equation}
%\textrm{(Data Allocation)}\qquad
\begin{aligned}
 \min_{\{\ell_n\} }  \ & \sum_{n=1}^{M} G_{\textrm{loc}}(\ell_n, R_n, h_n) \\
\text{s.t.}\
& G_{\textrm{loc}}(\ell_n, R_n, h_n) \!\le\! R_n\!+\!  \upsilon P_b h_n T_c, \!&\!n = 1, \cdots, M,  \\
% \forall n,  n = 1, 2, \cdots, M,\\
& R_n=R_{n-1}+\upsilon P_b h_{n-1} T_c \\
&~~~~~~~-G_{\textrm{loc}}(\ell_{n-1}, R_{n-1}, h_{n-1}), & n = 2, \cdots, M,\\
% n = 2, \cdots, M,\\
& R_1=0, \qquad \sum_{n=1}^{M} \ell_n = L,\\
%& \sum_{n=1}^m G_{\textrm{loc}}(\ell_n) \leq \sum_{n=1}^m \upsilon P_b h_n T_c, && m = 1, 2, \cdots, M,\\
%& \sum_{n=1}^{M} \ell_n = L,\\
&\ell_n \geq 0,  &  n = 1, \cdots, M. 
%\forall n.
% n = 1, 2, \cdots, M. 
\end{aligned}
\tag{$\textbf{P6}$} 
\end{equation}

\end{itemize}
{\color{black}{Note that the first two constraints of Problem P6 imply that $R_n \ge 0$ for $n \ge 2$.}}
\subsubsection{CPU-cycle Control  Policy}  The  policy  resulting from solving   Problem P5 can be modified from that obtained from solving Problem P1 for the case of static channel. For ease of notation,  define two constants: 
\begin{equation}\label{Eq:Const:L}
b\!=\!\l(\frac{\upsilon P_b h T_c^3 + R T_c^2}{\gamma \theta_0}\r)^{\frac{1}{3}}, \ b'\!=\!\l(\frac{\upsilon P_b h T_c^3+ R T_c^2}{\gamma \theta_1}\r)^{\frac{1}{3}}
\end{equation}
and two energy consumption functions 
%{\color{black}{based on model of 22}}: 
\begin{equation}
\bar{E}_0(\ell)=\frac{\gamma \phi_0 \ell^3}{T_c^2}, \qquad \bar{E}_1(\ell)=\frac{\gamma \phi_1 \ell^3}{T_c^2} \nn
\end{equation}
{\color{black}{where $\ell$ is the input-data size and $\theta_0, \theta_1, \phi_0$ and $\phi_1$ are
 the scaling factors of $\ell^3$ for the two thresholds and energy consumption functions determined by the system model and CCI (see Appendix C of \cite{zhang:MobileMmodel:2013})}}, with $\theta_0 > \theta_1$ and $\phi_0 < \phi_1$.
  Following the same procedure as for deriving Theorem~\ref{Theo:LocalComp} and Corollary~\ref{Cor:Egy:Local}, the optimal policy for the current case is obtained as follows. 
\begin{corollary}\label{Cor:DataThreshold}\emph{Consider an arbitrary channel fading block with the corresponding input-data size  $\ell$ and residual  energy $R$. The optimal CPU-cycle frequencies $\{f_k^*\}$ and the minimum average energy consumption $G_{\textrm{loc}}(\ell, R, h)$ are described as follows.
\begin{enumerate}
\item If $\ell \le b$, 
\begin{equation}
f_k^* \!=\!\l(\frac{1}{T_c}\sum_{m=1}^{\ell N_0} p_m^{\frac{1}{3}} \r) p_k^{-\frac{1}{3}}, ~ \forall k, ~ \text{and}~~G_{\textrm{loc}}(\ell, R, h)=\bar{E}_0(\ell). \nn 
\end{equation}
\item If $b< \ell \le b' $, 
\begin{equation}
f_k^*\!=\!\l[\frac{1}{T_c}\sum_{m=1}^{\ell N_0} (p_m+\lambda)^{\frac{1}{3}} \r] (p_k + \lambda)^{-\frac{1}{3}}, \quad \forall k \nn
\end{equation}
%where the positive constant $\lambda$ solves
where the positive constant $\lambda$ satisfies 
%solves $\sum_{k = 1}^N \gamma (f_k^*)^2  = \upsilon P_b h \sum_{k=1}^N \frac{1}{f_k^*}$.
\begin{equation}
\left[ \sum_{k=1}^{\ell N_0}  \l(p_k+\lambda \r) ^{\frac{1}{3}}\right]^2 \left[ \sum_{k=1}^{\ell N_0} ({p_k+\lambda})^{-\frac{2}{3}}\right]=\frac{\upsilon P_b h T_c^3 +R T_c^2}{\gamma}; \nn
\end{equation}
 and
\begin{equation}
\bar{E}_0(\ell)<G_{\textrm{loc}}(\ell, R, h) \le \bar{E}_1(\ell). \nn
\end{equation}
Moreover, when $\ell  \to b$, $G_{\textrm{loc}}(\ell, R, h) \to \bar{E}_0(\ell)$ and when  $\ell \to b'$, $G_{\textrm{loc}}(\ell, R, h) \to \bar{E}_1(\ell)$.
\item If $ \ell > b'$, $\{f^*_1, f^*_2, \cdots, f^*_{\ell N_0}\}$ is an empty set. 
\end{enumerate}}
\end{corollary}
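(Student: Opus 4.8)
The plan is to recognize that Problem P5 is structurally identical to the static-channel Problem P1, differing only in three substitutions: the deadline $T$ becomes the block duration $T_c$, the cycle count $N$ becomes $\ell N_0$, and the harvested-energy budget is augmented by the residual energy $R$, so that the single binding energy-harvesting constraint reads $\sum_{k=1}^{\ell N_0}\gamma f_k^2 \le R + \upsilon P_b h \sum_{k=1}^{\ell N_0} 1/f_k$. Consequently the entire machinery developed for P1 transfers. First I would replay the relaxation of Lemma~\ref{Lem:Relax} (introducing $y_k = 1/f_k$ and relaxing $y_k f_k = 1$ to $y_k f_k \ge 1$) and the reduction of Lemma~\ref{Lem:RelaxMore} (dropping the first $\ell N_0 - 1$ energy sub-constraints), which together collapse the $\ell N_0$ energy constraints to the single one above and yield the reduced stationarity form $f_k^* = [(\mu - \upsilon P_b h\lambda)/(2\gamma(p_k+\lambda))]^{1/3}$, exactly as in \eqref{reduced optimal f} but over $k = 1,\dots,\ell N_0$ and with the budget $\upsilon P_b h T_c + R$.

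The second step is to determine the active/inactive status of the energy-harvesting constraint and the feasibility region. Because here $P_b$, $h$ and $R$ are held fixed while $\ell$ varies, the power-based thresholds $a, a'$ of Theorem~\ref{Theo:LocalComp} map into data-based thresholds $b, b'$ and the ordering of the regimes inverts: a small load leaves the budget slack (so $\lambda = 0$), an intermediate load activates the constraint ($\lambda > 0$), and a sufficiently large load is infeasible. Setting $\lambda = 0$ gives $f_k^* = (\frac{1}{T_c}\sum_m p_m^{1/3})p_k^{-1/3}$ with an active deadline; substituting into the energy constraint shows it holds precisely when $(\sum_{k=1}^{\ell N_0} p_k^{1/3})^2(\sum_{k=1}^{\ell N_0} p_k^{-2/3}) \le (\upsilon P_b h T_c^3 + R T_c^2)/\gamma$, which, using the $\ell^3$ scaling of these sums (Appendix C of \cite{zhang:MobileMmodel:2013}) with scaling factor $\theta_0$, is exactly $\ell \le b$. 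For feasibility I would minimize the actual (unweighted) consumed energy $\sum \gamma f_k^2$ subject to the active deadline $\sum 1/f_k = T_c$; by convexity the minimizer is the uniform frequency $f_k = \ell N_0/T_c$, so the budget can be met iff $\gamma(\ell N_0)^3/T_c^2 \le \upsilon P_b h T_c + R$, i.e. iff $\ell \le b'$ with scaling factor $\theta_1 = N_0^3$. The Hölder inequality $(\ell N_0)^3 \le (\sum p_k^{1/3})^2(\sum p_k^{-2/3})$ then certifies $\theta_0 > \theta_1$ and hence $b < b'$, pinning down the three cases.

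Finally I would evaluate the objective at the optimum exactly as in Corollary~\ref{Cor:Egy:Local}: at $\lambda = 0$ the minimum becomes $\frac{\gamma}{T_c^2}(\sum_{k=1}^{\ell N_0} p_k^{1/3})^3 = \bar{E}_0(\ell)$ after absorbing the $\ell^3$ scaling into $\phi_0$, and at the feasibility edge the uniform-frequency consumption $\frac{\gamma N_0^2\ell^2}{T_c^2}\sum_{k=1}^{\ell N_0} p_k = \bar{E}_1(\ell)$ defines $\phi_1$; the power-mean inequality $(\sum_{k=1}^{\ell N_0} p_k^{1/3})^3 \le (\ell N_0)^2\sum_{k=1}^{\ell N_0} p_k$ gives $\phi_0 < \phi_1$, and the monotonicity in $\lambda$ established for Corollary~\ref{Cor:Egy:Local} yields $\bar{E}_0(\ell) < G_{\textrm{loc}}(\ell,R,h) \le \bar{E}_1(\ell)$ on the intermediate regime together with the two limits $G_{\textrm{loc}}(\ell,R,h) \to \bar{E}_0(\ell)$ as $\ell \to b$ and $G_{\textrm{loc}}(\ell,R,h) \to \bar{E}_1(\ell)$ as $\ell \to b'$. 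I expect the only real obstacle to be the bookkeeping that turns the fixed-budget, variable-load problem into the fixed-load, variable-power template of Theorem~\ref{Theo:LocalComp} cleanly — in particular justifying the $\ell^3$ scaling of the threshold and energy sums so that $b, b', \bar{E}_0, \bar{E}_1$ take the stated closed forms, and verifying that the residual term $R$ merely shifts the budget $\upsilon P_b h T_c \mapsto \upsilon P_b h T_c + R$ throughout without altering the structure of the KKT solution.
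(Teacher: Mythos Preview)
Your proposal is correct and follows exactly the route the paper intends: the paper simply states that Corollary~\ref{Cor:DataThreshold} is obtained ``following the same procedure as for deriving Theorem~\ref{Theo:LocalComp} and Corollary~\ref{Cor:Egy:Local}'', and you carry this out explicitly by making the substitutions $T\mapsto T_c$, $N\mapsto \ell N_0$, and $\upsilon P_b h T \mapsto \upsilon P_b h T_c + R$, then replaying Lemmas~\ref{Lem:Relax}--\ref{Lem:RelaxMore}, Theorem~\ref{Theo:LocalComp}, and Corollary~\ref{Cor:Egy:Local}. Your identification of the inversion (power thresholds $a,a'$ become data thresholds $b,b'$ with reversed regime ordering), the specific values $\theta_1 = N_0^3$ from the equal-frequency feasibility boundary, and the H\"older/power-mean inequalities certifying $\theta_0>\theta_1$ and $\phi_0<\phi_1$ are all sound and in fact more detailed than what the paper spells out.
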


This Corollary shows that the slave problem is feasible only if $\ell \le b'$ where $b'$ is determined by the channel power gain $h$ and the residual energy $R$. Moreover, when $\ell \le b$, the CPU-cycle frequencies and energy consumption are independent of $P_b h$,  implying  that the energy harvesting constraint is inactive for small data-input size. However, for the case of $b< \ell \le b'$, both $\{f_k^*\}$ and $G_{\textrm{loc}}(\ell, R, h)$ are determined by $P_b h$, which is consistent with the case of static channel.

\subsubsection{Sub-optimal Data Allocation Policy}\label{Subsubs:SODAP} The derivation for the optimal data allocation policy is intractable due to the lack of a closed-form expression for the energy consumption function, $G_{\textrm{loc}}(\ell, R, h)$, which can be observed from the solution for the slave problem. In this sub-section, a sub-optimal but simple policy is derived by the approximation for $G_{\textrm{loc}}(\ell, R, h)$ and the amounts of residual energy in all  fading blocks, denoted as $\{R_n\}$.

%DFDDF~\ref{Subsubs:SODAP}

First, the proposed sub-optimal data allocation policy requires only the following properties of $G_{\textrm{loc}}(\ell, R, h)$. 
\begin{assumption}\label{Ass:ConvexityG}\emph{
For an arbitrary block fading channel with data-input size $\ell$, $G_{\textrm{loc}}(\ell, R, h)$ is a monotone-increasing, differentiable and convex function for  $ \ell \in [0,   b']$.}
\end{assumption}
The assumption can be justified as follows. First, the monotonicity of $G_{\textrm{loc}}(\ell, R, h)$ arises from the fact that computing more data requires additional energy.  Next, finite computing energy per bit gives the rationality for the differentiability property. Last, the second derivative of $G_{\textrm{loc}}(\ell, R, h)$ relates to the increase in  energy consumption per additional data bit. For the special case of equal CPU-cycle frequencies,  the energy consumption of $ \frac{ \gamma N_0^3 \ell^2}{T_c^2}$ per bit grows with data-input size $\ell$, supporting the assumption on the  convexity of $G_{\textrm{loc}}(\ell, R, h)$.

Next, the residual energy variables $\{R_n\}$ are  approximated. To this end, $R_n$ can be bounded as follows. 
\begin{lemma}\label{Lem:PositiveResidual}\emph{
Given that local computing of input data is feasible, then $R_1 = 0$ and 
\begin{equation}
%\frac{\phi_1}{\theta_1}
\bar{\phi}(\upsilon P_b h_{n-1} T_c + R_{n-1}) \leq R_n \leq \upsilon P_b h_{n-1} T_c + R_{n-1}
\end{equation}
for $n =   2, \cdots, M$ where $\bar{\phi}=1-\frac{\phi_1}{\theta_1} \ge 0$ .}
\end{lemma}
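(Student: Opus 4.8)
The plan is to establish the two one-sided bounds separately, both built on the residual-energy recursion
\begin{equation}
R_n = R_{n-1} + \upsilon P_b h_{n-1} T_c - G_{\textrm{loc}}(\ell_{n-1}, R_{n-1}, h_{n-1}), \nn
\end{equation}
which is the second constraint of Problem~P6. The upper bound is immediate: because $G_{\textrm{loc}}(\cdot)$ is an energy consumption and therefore nonnegative, dropping the last term yields $R_n \le R_{n-1} + \upsilon P_b h_{n-1} T_c$. So the only substantive work is the lower bound.

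For the lower bound I would first reduce it to a single per-block energy estimate. Substituting the recursion and using $\bar\phi = 1 - \frac{\phi_1}{\theta_1}$, the target inequality $R_n \ge \bar\phi\l(\upsilon P_b h_{n-1} T_c + R_{n-1}\r)$ is algebraically equivalent (after noting $1-\bar\phi = \phi_1/\theta_1$) to
\begin{equation}
G_{\textrm{loc}}(\ell_{n-1}, R_{n-1}, h_{n-1}) \le \frac{\phi_1}{\theta_1}\l(\upsilon P_b h_{n-1} T_c + R_{n-1}\r). \nn
\end{equation}
Thus it suffices to prove this upper bound on the energy consumed in block $n-1$.

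To prove it I would invoke feasibility together with Corollary~\ref{Cor:DataThreshold}. Since local computing of the whole input is feasible, the master problem is feasible, which forces each slave problem to be feasible; hence by Case 3) of Corollary~\ref{Cor:DataThreshold} the allocated data obeys $\ell_{n-1} \le b'$, and from the definition of $b'$ in \eqref{Eq:Const:L} this reads $\gamma\theta_1\ell_{n-1}^3 \le \upsilon P_b h_{n-1} T_c^3 + R_{n-1} T_c^2$. Combining this with the worst-case energy estimate $G_{\textrm{loc}}(\ell_{n-1}, R_{n-1}, h_{n-1}) \le \bar E_1(\ell_{n-1}) = \frac{\gamma \phi_1 \ell_{n-1}^3}{T_c^2}$ supplied by Cases 1) and 2) of the same corollary gives
\begin{equation}
G_{\textrm{loc}}(\ell_{n-1}, R_{n-1}, h_{n-1}) \le \frac{\phi_1}{\theta_1 T_c^2}\l(\upsilon P_b h_{n-1} T_c^3 + R_{n-1} T_c^2\r) = \frac{\phi_1}{\theta_1}\l(\upsilon P_b h_{n-1} T_c + R_{n-1}\r), \nn
\end{equation}
which is precisely the needed bound. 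Finally, $R_1 = 0$ is the boundary condition of Problem~P6, and the nonnegativity $R_n \ge 0$ for $n \ge 2$ follows by induction from the just-proved lower bound together with $\bar\phi \ge 0$ and $R_{n-1} \ge 0$.

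The hard part is really the recognition step rather than any computation: one must see that the feasibility threshold $b'$ (scaled by $\theta_1$) and the upper energy envelope $\bar E_1$ (scaled by $\phi_1$) are matched so that the $\ell_{n-1}^3$ factors cancel and the ratio $\phi_1/\theta_1 = 1-\bar\phi$ emerges exactly. Once the reduction to $G_{\textrm{loc}} \le \frac{\phi_1}{\theta_1}(\upsilon P_b h_{n-1} T_c + R_{n-1})$ is identified, the remaining manipulations on the recursion are routine.
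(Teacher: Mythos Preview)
Your proof is correct and follows essentially the same route as the paper: both arguments obtain the upper bound from $G_{\textrm{loc}}\ge 0$ (equivalently, $\ell_{n-1}=0$) and the lower bound by combining the feasibility cap $\ell_{n-1}\le b'$ from Corollary~\ref{Cor:DataThreshold} with the energy envelope $G_{\textrm{loc}}\le \bar E_1(\ell_{n-1})=\gamma\phi_1\ell_{n-1}^3/T_c^2$, then substituting the definition of $b'$ so that the $\ell_{n-1}^3$ factor cancels and the ratio $\phi_1/\theta_1=1-\bar\phi$ appears. The only cosmetic difference is that the paper evaluates directly at the extremal point $\ell_{n-1}=b'$, whereas you bound $G_{\textrm{loc}}$ uniformly over all feasible $\ell_{n-1}$; the content is the same.
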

\begin{proof}
%Please refer to the extended version \cite{you:MCC:2015}.
See  Appendix~\ref{App:PositiveResidual}.
\end{proof}

Note that the lower bound on $R_n$ is nonzero due to the energy harvesting constraint.
{\color{black}{Since it is difficult to obtain $R_n$ in closed form for the same reason as for deriving $G_{\textrm{loc}}(\ell, R, h)$  and the upper bound corresponds to the case  without  data computing, $R_n$ is replaced by its lower bound in the design of the proposed sub-optimal data allocation policy and the result is represented by $\hat{R}_n$.}} In other words,  $\hat{R}_n =\bar{\phi}
%\frac{\phi_1}{\theta_1}
(\upsilon P_b h_{n-1} T_c + \hat{R}_{n-1} )$ for $n \ge2$ and $\hat{R}_1 \!=\! 0$. Correspondingly, the constants $b$ and $b'$ defined in \eqref{Eq:Const:L} are modified as 
\begin{equation*}
\hat{b}_n\!=\!\l(\frac{\upsilon P_b h_n T_c^3 \!+\! \hat{R}_n T_c^2 }{\gamma \theta_0}\r)^{\frac{1}{3}}\! \text{and}~~ \hat{b}'_n\!=\!\l(\frac{\upsilon P_b h_n T_c^3 \!+\! \hat{R}_n T_c^2 }{\gamma \theta_1}\r)^{\frac{1}{3}}. 
\end{equation*}
Using the above approximation and definitions, the minimum average energy consumption of the $n$-th fading block, denoted as $\hat{G}_{\textrm{loc}}$, follows from Corollary~\ref{Cor:DataThreshold} and Assumption~\ref{Lem:PositiveResidual} as
\begin{equation}
    \hat{G}_{\textrm{loc}}(\ell_n, \hat{R}_n, h_n)=
   \begin{cases}
   \frac{\gamma \phi_0 \ell_n^3}{T_c^2}, &\mbox{if $ \ell_n \le \hat{b}_n$}\\
   g( \ell_n), &\mbox{if $\hat{b}_n< \ell_n \le \hat{b}'_n $}
   \end{cases} \label{Eq:UniformG}
\end{equation}
where  $g( \ell_n)$ is a general function such that   $\hat{G}_{\textrm{loc}}(\ell_n, \hat{R}_n, h_n)$ has the properties in Assumption~\ref{Ass:ConvexityG}. 
%Therefore, the energy harvesting constraint for the $n$-th channel can be replaced by $\ell_n \le \hat{b}'$.

Based on the above approximations,  the data-allocation problem for minimizing the total energy consumption can be readily reformulated in a simple form as follows. 
{\center{\textrm{(Sub-optimal Data Allocation)}}}
\begin{equation}
\begin{aligned}
\underaccent{\{\ell_n\} } \min  \qquad  &\sum_{n=1}^M \hat{G}_{\textrm{loc}}(\ell_n, \hat{R}_n, h_n)\\
%\sum_{k=1}^M \frac{\gamma l_k^3}{T^2}  \l( G_k ([l_k- L_k^{(0)}]^+)^2   + \phi_0 \r) \\
\text{s.t.}\qquad 
& \sum_{n=1}^M \ell_n = L,\\ 
& 0 \le \ell_n \le \hat{b}_n' , & n = 1, 2, \cdots, M.  
\end{aligned}
\tag{$\textbf{P7}$} 
\end{equation}
Problem P7 is a convex optimization problem. The structure of the resultant data allocation policy can be characterized as follows. 
Let  $b_n(\xi)$ denote  the root of equation: $\frac{\partial{\hat{G}_{\textrm{loc}}}}{\partial \ell_n}(b_n, \hat{R}_n, h_n)=\xi $ where $\xi$ is a Lagrange multiplier. Then the main result of this sub-section is obtained as shown below. 

\begin{proposition}\label{Pro:Data:Local}\emph{If $ L \le \sum_{n=1}^M \hat{b}'_n $, Problem P7 is feasible.
And the data-allocation policy that solves Problem P7 is given as 
\begin{equation}
    \ell_n^*=
   \begin{cases}
   \hat{b}'_n,  &\mbox{ $h_n \le\frac{\gamma \theta_1 b_n^3(\xi)-\hat{R}_n T_c^2}{\upsilon P_b T_c^3}$}\\
   b_n(\xi), &\mbox{ $h_n >\frac{\gamma \theta_1 b_n^3(\xi)-\hat{R}_n T_c^2}{\upsilon P_b T_c^3}.$} 
   \end{cases} \label{Eq:SubDataLocal}
\end{equation}
}\end{proposition}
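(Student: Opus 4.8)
The plan is to treat Problem P7 as a separable convex program with one coupling (budget) equality and per-block box constraints, and to extract the allocation in \eqref{Eq:SubDataLocal} from its Karush--Kuhn--Tucker (KKT) system. Each summand $\hat{G}_{\textrm{loc}}(\ell_n,\hat{R}_n,h_n)$ in \eqref{Eq:UniformG} is convex in $\ell_n$ on $[0,\hat{b}'_n]$ (the cubic branch $\frac{\gamma\phi_0\ell_n^3}{T_c^2}$ is convex, and the remaining branch is convex by Assumption~\ref{Ass:ConvexityG}), while every constraint is affine; hence P7 is convex and, since affine constraints satisfy a constraint qualification, the KKT conditions are both necessary and sufficient for optimality, with no need to invoke Slater's condition separately.

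I would first dispose of feasibility. The constraints are $\sum_{n=1}^{M}\ell_n=L$ and $0\le \ell_n\le \hat{b}'_n$. As the $\ell_n$ may be chosen independently within their boxes, the attainable values of $\sum_n \ell_n$ sweep the entire interval $[0,\sum_{n=1}^{M}\hat{b}'_n]$; therefore a feasible point exists if and only if $0\le L\le \sum_{n=1}^{M}\hat{b}'_n$, which, since $L\ge 0$, is exactly the stated condition.

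Next I would write the Lagrangian, attaching $\xi$ to the budget constraint (in the form $L-\sum_n\ell_n$) and $\mu_n,\nu_n\ge 0$ to $\ell_n-\hat{b}'_n\le 0$ and $-\ell_n\le 0$. Stationarity then reads $\frac{\partial \hat{G}_{\textrm{loc}}}{\partial \ell_n}(\ell_n^*,\hat{R}_n,h_n)=\xi-\mu_n+\nu_n$, together with $\mu_n(\ell_n^*-\hat{b}'_n)=0$ and $\nu_n\ell_n^*=0$. Two observations then finish the structure. First, because $\sum_n\ell_n^*=L>0$ some block has $\ell_n^*>0$; for it $\nu_n=0$ and $\frac{\partial \hat{G}_{\textrm{loc}}}{\partial \ell_n}(\ell_n^*)=\xi-\mu_n>0$ (the objective is strictly increasing), forcing $\xi>\mu_n\ge 0$, so $\xi>0$. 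Consequently the root $b_n(\xi)$ of $\frac{\partial \hat{G}_{\textrm{loc}}}{\partial \ell_n}(b_n,\hat{R}_n,h_n)=\xi$ is strictly positive for every $n$, and the lower constraint never binds. Second, checking the two sub-cases against the KKT system shows $\ell_n^*=\min\{b_n(\xi),\hat{b}'_n\}$: the interior choice $\ell_n^*=b_n(\xi)$ is optimal when $b_n(\xi)<\hat{b}'_n$, while the upper bound is active precisely when $b_n(\xi)\ge \hat{b}'_n$ (for which $\mu_n=\xi-\frac{\partial \hat{G}_{\textrm{loc}}}{\partial \ell_n}(\hat{b}'_n)\ge 0$ by monotonicity of the derivative). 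Substituting $(\hat{b}'_n)^3=\frac{\upsilon P_b h_n T_c^3+\hat{R}_n T_c^2}{\gamma\theta_1}$ into $b_n(\xi)\ge\hat{b}'_n$ and solving for $h_n$ converts the clipping test into $h_n\le \frac{\gamma\theta_1 b_n^3(\xi)-\hat{R}_n T_c^2}{\upsilon P_b T_c^3}$, reproducing the two branches of \eqref{Eq:SubDataLocal}.

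The part needing the most care is justifying that $b_n(\xi)$ is well defined and that a single scalar $\xi$ balances the piecewise objective. Concretely, I must verify from Assumption~\ref{Ass:ConvexityG} that $\frac{\partial \hat{G}_{\textrm{loc}}}{\partial \ell_n}$ is continuous and strictly increasing across the junction $\ell_n=\hat{b}_n$, so that $b_n(\xi)$ is a unique, increasing function of $\xi$ and the allocation depends monotonically on $h_n$; and I must show that $\xi\mapsto\sum_n\min\{b_n(\xi),\hat{b}'_n\}$ is continuous and nondecreasing, ranging from $0$ as $\xi\to 0^{+}$ to $\sum_n\hat{b}'_n$ as $\xi\to\infty$, so that the budget $\sum_n\ell_n^*=L$ pins down a unique admissible $\xi$ whenever $0<L\le \sum_n\hat{b}'_n$. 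Once these monotonicity facts and the threshold equivalence are established, the remaining matching of the multipliers $\mu_n$ to the two regimes is routine.
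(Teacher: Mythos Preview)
Your proposal is correct and follows essentially the same route as the paper: write the Lagrangian for the separable convex Problem P7, use the KKT system to show the lower box constraints are never active (so $\ell_n^*>0$ for all $n$), deduce $\ell_n^*=\min\{b_n(\xi),\hat b'_n\}$, and then rewrite the clipping test $b_n(\xi)\ge \hat b'_n$ as the stated threshold on $h_n$. Your treatment is in fact a bit more careful than the paper's, which does not explicitly discuss feasibility, the well-definedness of $b_n(\xi)$, or the existence of the budget-balancing multiplier $\xi$.
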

\begin{proof}
%Please refer to the extended version \cite{you:MCC:2015}.
See Appendix~\ref{App:Data:Local}.
\end{proof}

Note that $\!\{\ell_n^*\} \! $ are nonzero, indicating that the policy utilizes \emph{all} fading blocks for computing since local computing and MPT can be performed simultaneously over all fading blocks.

\subsection{Energy Efficient Offloading with a Dynamic Channel}
\subsubsection{Problem Formulation}  Following the local computing counterpart, the problem for optimal computation offloading is formulated using the master-and-slave model as follows.

\begin{itemize}
\item {\bf Slave problem:} Given fixed allocated data, residual energy  and channel power gain, the slave problem aims at finding the optimal time division of a fading block for separate energy harvesting and offloading. Consider a single fading block with channel power gain $h$, allocated data-input size $\ell$ and the residual energy $R$ which comes from offloading in the preceding block. The slave problem is formulated as follows for maximizing the energy savings in this block with the optimal time division. 
\begin{equation}
%\textrm{(Time Division)}\qquad
   \begin{aligned}
\max_t \qquad  & E_{\textrm{MPT}}(t, h)-E_{\textrm{off}}(t, h) \\  
\text{s.t.} \qquad 
& 0< t < T_c,\\
& R+ E_{\textrm{MPT}}(t, h)-E_{\textrm{off}}(t, h)\ge 0
\end{aligned}
\tag{$\textbf{P8}$} 
\end{equation}
where $E_{\textrm{MPT}}(t, h)\!=\!\upsilon P_b h (T_c - t)$ and $E_{\textrm{off}}(t, h)\!=\!(2^ \frac{\ell}{Bt}-1)\frac{\sigma^2}{h} t$.
 Note that setting $R\!=\! 0$ reduces Problem P8 to P4. Let  $G_{\textrm{off}}(\ell, R, h)$ denote  the maximum energy savings for offloading the $\ell$-bit data in this single fading block. Then $G_{\textrm{off}}(\ell, R, h)=E_{\textrm{MPT}}(t^*, h)-E_{\textrm{off}}(t^*, h)$ where $t^*$ solves the above slave problem.

\item {\bf Master problem:} The master problem concerns the optimal data allocation for offloading in different fading blocks with the objective of maximizing the  total energy savings. Let $n$ denote the index of fading blocks. Given the solution for the slave problem, the master problem of the optimal data allocation over different fading blocks is formulated as follows. 
\begin{equation}
\begin{aligned}
 \max_{\{\ell_n\} }  \quad  & \sum_{n=1}^{M} G_{\textrm{off}}(\ell_n,  R_n, h_n) \\
\text{s.t.}\quad
& R_n = \sum_{m=1}^{n-1} G_{\textrm{off}}(\ell_m,  R_m, h_m) , &n = 2, \cdots, M,\\
& R_n \geq 0 , & n = 2, \cdots, M,\\
& R_1 = 0,  \qquad  \sum_{n=1}^{M} \ell_n = L,\\
& \ell_n \ge 0, & n = 1, 2, \cdots, M. \nn
\end{aligned}
\tag{$\textbf{P9}$} 
\end{equation}

\end{itemize}
\subsubsection{Optimal Time Division Policy} The slave problem, Problem P8,   can be modified from the fixed-channel counterpart, Problem  P5,  by adding the residual energy and thus solved following a similar procedure. For this purpose, define the following constants:
\begin{equation*}
%\label{Eq:Constant:Off}
%\begin{aligned}
 c \!=\! \frac{T_c B\left[1+ W(\frac{\upsilon P_b h^2}{\sigma ^2 e} -\frac{1}{e})\right]}{\ln 2} ~\text{and} ~~ c' \!=\! B T_c\log \l(1\!+\! \frac{R h}{\sigma^2 T_c}\r).
\end{equation*}
Then the optimal time division policy for the current case with residual energy $R$ is obtained as shown  in  Corollary~\ref{Cor:EnergyThreePoli}, following a similar procedure as for deriving Theorem~\ref{Theo:Offload}. 

\begin{corollary}\label{Cor:EnergyThreePoli}\emph{
Consider an arbitrary channel fading block with $\ell$-bit data input and residual energy $R$. The optimal offloading duration $t^*$ and the maximum energy savings $G_{\textrm{off}}(\ell, R, h)$ in this block are given as follows.
\begin{enumerate}
\item If either: (a) $R \le \frac{B T_c y(h) }{\ln 2}-\frac{\sigma^2}{h} T_c$ and $\ell \le \frac{\upsilon P_b h T_c+R}{y(h)}$ or \\
\quad (b)  $R > \frac{B T_c y(h)}{\ln 2}-\frac{\sigma^2}{h} T_c$ and $\ell < c$, then
\begin{equation}
    t^*=\rho(h) \ell  \quad \text{and} \quad G_{\textrm{off}}(\ell, R, h)=\upsilon P_b h T_c - y(h) \ell \nn
\end{equation}
where the constants $y(h)$ and $\rho(h)$ are defined  in \eqref{Eq:V} and \eqref{Eq:rho}, respectively.\\
\item If $R > \frac{B T_c y(h)}{\ln 2}-\frac{\sigma^2}{h} T_c$ and $c\le \ell \le c'$, then
\begin{equation}
    t^*=T_c \quad \text{and} \quad G_{\textrm{off}}(\ell, R, h)=-(2^ \frac{\ell}{B T_c}-1)\frac{\sigma^2}{h} T_c . \nn
\end{equation}
\item For other combinations of conditions for $R$ and $\ell$, Problem P8 is infeasible.
\end{enumerate}}
\end{corollary}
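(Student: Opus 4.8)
The plan is to recognize Problem P8 as the maximization of a single concave function over the intersection of an interval with a super-level set, and then to resolve which constraint is active by a threshold comparison. Write the objective as $J(t)=E_{\textrm{MPT}}(t,h)-E_{\textrm{off}}(t,h)$. Since P8 has exactly the functional form of P4 with $L$ and $T$ replaced by $\ell$ and $T_c$, Lemma~\ref{Lem:Conv:Off} applies verbatim: $J$ is concave on $(0,\infty)$, its unconstrained maximizer is $t_0=\rho(h)\ell$ with $\rho(h)$ from \eqref{Eq:rho}, and by Remark~\ref{Rem:off:egy} its maximal value is $J(t_0)=\upsilon P_b h T_c-y(h)\ell$ with $y(h)$ from \eqref{Eq:V}. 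Concavity also makes the feasible set $\{t:R+J(t)\ge 0\}=\{t:J(t)\ge -R\}$ an interval, nonempty iff $J(t_0)\ge -R$, i.e. iff $\ell\le \frac{\upsilon P_b h T_c+R}{y(h)}$. Thus P8 reduces to maximizing the concave $J$ over $(0,T_c)\cap\{J\ge -R\}$.

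Next I would pin down the three thresholds. First, directly from \eqref{Eq:rho} and the definition of $c$ one has $\rho(h)\,c=T_c$, so $t_0<T_c\iff\ell<c$ and $t_0\ge T_c\iff \ell\ge c$; hence $c$ marks whether the interior maximizer lies inside the box. Second, using the Lambert-$W$ identity $W e^{W}=\frac{\upsilon P_b h^2}{\sigma^2 e}-\frac1e$ that defines $W$ inside $\rho$ and $y$, I would reduce $y(h)$ to $\frac{\sigma^2\ln 2}{Bh}e^{W+1}$ and obtain $\frac{BT_c y(h)}{\ln 2}=\frac{\sigma^2 T_c}{h}e^{W+1}$; this shows the residual threshold equals $R^\star:=\frac{BT_c y(h)}{\ln2}-\frac{\sigma^2}{h}T_c=\frac{\sigma^2 T_c}{h}(e^{W+1}-1)$. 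The same identity yields the two pivotal facts: $c\le c'\iff R\ge R^\star$, and, at $R=R^\star$, $\frac{\upsilon P_b h T_c+R}{y(h)}=c$. Since both $\frac{\upsilon P_b hT_c+R}{y(h)}$ and $c'$ increase in $R$ while $c$ is fixed, it follows that for $R\le R^\star$ the feasibility threshold $\frac{\upsilon P_b hT_c+R}{y(h)}$ is the smaller of the two ($\le c$), whereas for $R>R^\star$ the box threshold $c$ is the smaller; this is exactly what separates conditions (a) and (b).

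The case analysis then follows from the elementary behavior of a concave function on an interval. If $t_0$ is interior and feasible, the unconstrained maximizer is optimal, giving $t^*=\rho(h)\ell$ and $G_{\textrm{off}}=\upsilon P_b hT_c-y(h)\ell$; here ``interior and feasible'' means $\ell<c$ and $\ell\le\frac{\upsilon P_b hT_c+R}{y(h)}$, which by the threshold ordering collapses to $\ell\le\frac{\upsilon P_b hT_c+R}{y(h)}$ when $R\le R^\star$ (condition (a)) and to $\ell<c$ when $R>R^\star$ (condition (b)), i.e. Case 1. If instead $t_0\ge T_c$ (so $\ell\ge c$, which can be feasible only when $R>R^\star$), then $J$ is increasing throughout the box and the optimum is pushed to the largest feasible $t$; since $t=T_c$ is feasible iff $R+J(T_c)\ge0$, i.e. iff $\ell\le c'$, we get $t^*=T_c$ and $G_{\textrm{off}}=J(T_c)=-(2^{\ell/(BT_c)}-1)\frac{\sigma^2}{h}T_c$ over the window $c\le\ell\le c'$, i.e. Case 2. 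In every remaining combination the global maximum already satisfies $J(t_0)<-R$, whence $J(t)<-R$ for all $t$ and the feasibility constraint is violated everywhere, i.e. Case 3.

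I expect the main obstacle to be the Lambert-$W$ bookkeeping of the second paragraph: showing cleanly that the three thresholds $c$, $c'$ and $\frac{\upsilon P_b hT_c+R}{y(h)}$ coincide at $R=R^\star$ and order consistently on either side is what makes the seemingly ad hoc conditions (a)/(b) and the $c\le\ell\le c'$ window of Case 2 fall out automatically; everything else is routine concave optimization. A minor technical point to address is that P8 uses the open interval $0<t<T_c$, so the optimum $t^*=T_c$ in Case 2 should be justified by passing to the closure (equivalently, treating full-block offloading with no MPT as admissible), since $J$ is increasing there and the supremum is attained only at the boundary.
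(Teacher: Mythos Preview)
Your approach is correct and matches the paper's, which merely states that the corollary is obtained ``following a similar procedure as for deriving Theorem~\ref{Theo:Offload}'' without spelling out details; your proposal fills in that procedure, including the Lambert-$W$ bookkeeping that ties the three thresholds together (and correctly flags the open-interval technicality at $t^*=T_c$).

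One point to tighten in Case~3: your blanket claim that ``the global maximum already satisfies $J(t_0)<-R$'' is not always true. Your own tangency computation shows that $\frac{\upsilon P_b hT_c+R}{y(h)}$ is the tangent line to the strictly concave function $c'(R)$ at $R=R^\star$, so for $R>R^\star$ one has $c'<\frac{\upsilon P_b hT_c+R}{y(h)}$ strictly; hence there exist $\ell$ with $c'<\ell\le\frac{\upsilon P_b hT_c+R}{y(h)}$, for which $J(t_0)\ge -R$ yet the problem is infeasible. The fix is the argument you already deployed for Case~2: since $\ell>c'>c$ forces $t_0>T_c$, $J$ is increasing on $(0,T_c)$ and $\sup_{t\in(0,T_c)}J(t)=J(T_c)<-R$ (by $\ell>c'$), so the energy-harvesting constraint fails on the entire box even though the unconstrained maximum of $J$ does not. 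With this correction, the case analysis is complete.
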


In the above Corollary, the largest data-input size, optimal time division and maximum energy savings are determined by the channel power gain $h$ and the residual energy $R$. Specifically, the time division of case 1) is the same as that of  static channel case and the maximum energy savings decreases {\color{black}{linearly}} with the growing of data-input size.  Case 2) corresponds to the scenario where the mobile has large residual energy and data input size. In this case, spending all time on offloading is the optimal time division policy.

\subsubsection{Sub-optimal Data Allocation Policy}  One can observe from problem P9 that different summation terms in the objective function are coupled due to  the residual energy delivered from one block to the next. The conventional approach for solving this type of optimization problem is using dynamic programming (DP). The state space for the resultant DP is continuous and has to be discretized to facilitate iterative computation for the optimal policy, bringing high complexity. However, simulation reveals the sub-optimal low-complexity policy  to be presented shortly can achieve close performance as  DP. More importantly, the DP approach yields little insight to the structure of optimal policy while the said sub-optimal policy allows data allocation to be derived in closed form.  

The proposed sub-optimal policy is obtained by setting the residual energy variables as zero: $\{R_n\} = \{0\}$, which is observed from the energy harvesting constraint in Problem P9 to be their lower bounds. Combining the approximation of $\{R_n\}$ with Corollary~\ref{Cor:EnergyThreePoli} reduces Problem P9 as 
%\vspace{-5pt} 
{\center{\textrm{(Sub-optimal Data Allocation)}}}
\begin{equation}
%\textrm{(Sub-optimal Data Allocation)}
\begin{aligned}
 \max_{\{\ell_n \}}  \quad  &\sum_{n=1}^M \upsilon P_b h_n T_c - y(h_n) \ell_n\\
\text{s.t.}\quad 
& \sum_{n=1}^M \ell_n = L,\\ 
& 0 \le \ell_n \le \frac{\upsilon P_b h_n T_c}{y(h_n)}, & n = 1, 2, \cdots, M.
\end{aligned}
\tag{$\textbf{P10}$} 
\end{equation}
Problem P10 is a convex optimization problem and solving it gives the optimal policy in closed form. To state the policy, let the channel power gains $\{h_n\}$ be rearranged and re-denoted as $\{\tilde{h}_n\}$  such that  $\{y(\tilde{h}_n)\}$ are in ascending order: $y(\tilde{h}_1) \leq y(\tilde{h}_2) \le  \cdots \le y(\tilde{h}_M)$. Moreover, let $\Pi$ represent the permutation matrix with 
%\vspace{-10pt}
\begin{equation}
[\tilde{h}_1, \tilde{h}_2, \cdots, \tilde{h}_M]^T = \Pi \times [h_1, h_2, \cdots, h_M]^T.  \nn
\end{equation}

Using these definitions, the optimal policy from solving Problem P10 is given in the following proposition. 

%\vspace{-5pt}
\begin{proposition} \label{Pro:Sub:Off}\emph{If $L \le \sum_{n=1}^M  \frac{\upsilon P_b h_n T_c}{y(h_n)}$, Problem P10 is feasible. And it can be observed that the data-allocation policy solving Problem P10 is:
%\vspace{-5pt}
\begin{equation}
[\ell_1^*,  \ell_2^*, \cdots, \ell_M^*]^{T} ={\Pi}^{-1} \times [\tilde{\ell}_1^*,  \tilde{\ell}_2^*, \cdots, \tilde{\ell}_M^*]^{T}  \nn
\end{equation}
with $\{\tilde{\ell}_n^* \}$ given in the following
%\vspace{-5pt}
\begin{equation}
    \tilde{\ell}_n^* =
   \begin{cases}
   \frac{\upsilon P_b \tilde{h}_n T_c}{y(\tilde{h}_n)}, &\mbox{ $n=1, 2, \cdots j$}\\
   L- \sum_{k=1}^j \frac{\upsilon P_b \tilde{h}_k T_c}{y(\tilde{h}_k)} &\mbox{ $n=j+1$},\\
   0, &\mbox{otherwise}
   \end{cases}\label{Equ:SubDataOff}
\end{equation}
where $j$ is unique and satisfies:
\begin{equation}
\sum_{n=1}^j \frac{\upsilon P_b \tilde{h}_n T_c}{y(\tilde{h}_n)} < L\le \sum_{n=1}^{j+1} \frac{\upsilon P_b \tilde{h}_n T_c}{y(\tilde{h}_n)}. \nn
\end{equation}
}
\end{proposition}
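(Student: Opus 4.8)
The plan is to recognize Problem P10 as a linear program and solve it by a greedy (fractional-knapsack) allocation that I would justify rigorously through the KKT conditions. First I would observe that the term $\sum_{n=1}^M \upsilon P_b h_n T_c$ in the objective is a constant independent of the optimization variables $\{\ell_n\}$, so maximizing the objective is equivalent to minimizing the linear functional $\sum_{n=1}^M y(h_n)\ell_n$ over the feasible region. This region is the intersection of the hyperplane $\sum_n \ell_n = L$ with the box $0 \le \ell_n \le U_n$, where $U_n \bydef \frac{\upsilon P_b h_n T_c}{y(h_n)}$; it is nonempty precisely when $0 \le L \le \sum_{n=1}^M U_n$, which establishes the stated feasibility condition. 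Since the feasible set is a bounded polytope and the objective is linear (hence convex), P10 is a convex program whose global optimum is characterized by the KKT conditions.

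Next I would form the Lagrangian $\mathcal{L} = \sum_n y(h_n)\ell_n + \nu\l(\sum_n \ell_n - L\r) - \sum_n \alpha_n \ell_n + \sum_n \beta_n (\ell_n - U_n)$ with multiplier $\nu \in \mathbb{R}$ for the equality constraint and $\alpha_n, \beta_n \ge 0$ for the box constraints. Stationarity gives $y(h_n) + \nu - \alpha_n + \beta_n = 0$, and complementary slackness ($\alpha_n \ell_n = 0$, $\beta_n(\ell_n - U_n) = 0$) yields a threshold rule: every block with $y(h_n) < -\nu$ is filled to capacity ($\ell_n = U_n$), every block with $y(h_n) > -\nu$ receives nothing ($\ell_n = 0$), and at most one block with $y(h_n) = -\nu$ is partially filled. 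Equivalently, allocating data to the blocks in ascending order of $y(h_n)$—exactly the reordering encoded by the permutation matrix $\Pi$ and the sorted gains $\{\tilde{h}_n\}$—is optimal. With this structure in hand, the equality constraint $\sum_n \ell_n = L$ pins down the cut-off index: filling the sorted blocks $\tilde{h}_1, \tilde{h}_2, \ldots$ to capacity until the running total first reaches $L$ identifies the unique $j$ satisfying $\sum_{n=1}^j \frac{\upsilon P_b \tilde{h}_n T_c}{y(\tilde{h}_n)} < L \le \sum_{n=1}^{j+1} \frac{\upsilon P_b \tilde{h}_n T_c}{y(\tilde{h}_n)}$, so the first $j$ blocks take full capacity, block $j+1$ absorbs the remainder, and the rest are zero—precisely \eqref{Equ:SubDataOff}. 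Undoing the sort via $\Pi^{-1}$ recovers $\{\ell_n^*\}$, and uniqueness of $j$ follows because the partial sums of capacities are strictly increasing in the number of terms.

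The main obstacle I anticipate is not the optimality argument itself (the greedy/KKT structure is standard for this class of LP) but rather the careful treatment of boundary cases: choosing the threshold $-\nu$ consistently when several blocks share the same value of $y(\tilde{h}_n)$ (ties in the sort), and verifying that the partially filled block $j+1$ has its allocation lie in $[0, U_{j+1}]$ so that the candidate in \eqref{Equ:SubDataOff} is genuinely feasible. A clean alternative that sidesteps the multiplier bookkeeping is a direct exchange argument, which I would likely present as the core proof while invoking the KKT conditions only to explain the threshold interpretation: given any feasible point not of the stated form, there exist indices $p, q$ with $y(h_p) < y(h_q)$, $\ell_p < U_p$, and $\ell_q > 0$, and shifting a small mass from $q$ to $p$ preserves feasibility while strictly decreasing $\sum_n y(h_n)\ell_n$, so no such point can be optimal.
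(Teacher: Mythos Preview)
Your proposal is correct and aligns with the paper's own treatment: the paper does not supply a formal proof for this proposition but simply remarks that the policy ``can be observed'' and is a greedy approach allocating data in ascending order of $y(h_n)$. Your KKT derivation and the exchange argument rigorously justify exactly that greedy structure, so you are filling in the details the paper omits rather than taking a different route.
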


The above data allocation policy is a greedy approach which allocates data to the fading blocks sequently by the ascending order of $y(h_n)$ 
%(or equivalently the descending order of $h_n$) 
until all the  input data  has been allocated.

%\vspace{-10pt}
\section{Simulation Results}
In this section, the performance of wirelessly powered mobile cloud computing   with  static and dynamic channels is investigated by simulation. The parameters are set as follows unless specific otherwise. The data input size $L$ is $1000$-bit and the number of CPU cycles required for per bit is modeled by a Gamma distribution with $\alpha =4$ and $\beta =200$ as in \cite{zhang:MobileMmodel:2013}, resulting in $\epsilon=0.05$. The constant determined by the switch capacitance $\gamma$ is $10^{-28}$.
%{\color{red}which corresponds to the CPU-cycle frequencies at the order of GHz and the energy consumption per cycle is consistent with that in  \cite{Miettinen:EgyMCC:2010}.}
 The energy conversion efficiency $\upsilon$ is $0.8${\color{black}\cite{Sun:EgyConv:2012}}. Let $N_t$ denote the number of BS antennas and set as $N_t = 2$. The $N_t\times 1$  vector channel, denoted as $\bh$, is assumed to follow Rician fading and thus modeled as 
\begin{equation}
\mathbf{h}=\sqrt{\frac{\Omega K}{1+K}} \mathbf{\bar{h}}+\sqrt{\frac{\Omega}{1+K}}\mathbf{h_w} \nn
\end{equation}
where the Rician factor $K \in \{0, 10\}$, the average fading power gain $\Omega =5\times 10^{-6}$, line-of-sight (LoS) component $\mathbf{\bar{h}}$ has all elements equal to one, and $\mathbf{h_w}$ is a $N_t\times 1$ i.i.d. $\mathcal{CN}(0, 1)$ vector representing small-scale fading. The effective channel power gain $h=\l \| \mathbf{h} \r \|^2$, resulting from the transmit/receive beamforming. In addition, the variance of complex white Gaussian channel noise $\sigma^2$ is $10^{-9}$ W  and the channel bandwidth $B$ is $1$ MHz.

\subsection{Static Channel}

The performance of three polices is evaluated, including the optimal local computing, optimal offloading and optimal mobile mode selection (MMS) integrating the preceding two polices. The baseline schemes for optimal local computing and optimal offloading are local computing with equal CPU-cycle frequencies and offloading with equal time partition (for MPT and offloading), respectively. 

\begin{figure}[t!]
\centering
\subfigure[Rician factor $K=0$]{\label{fig}
\includegraphics[width=8.5cm]{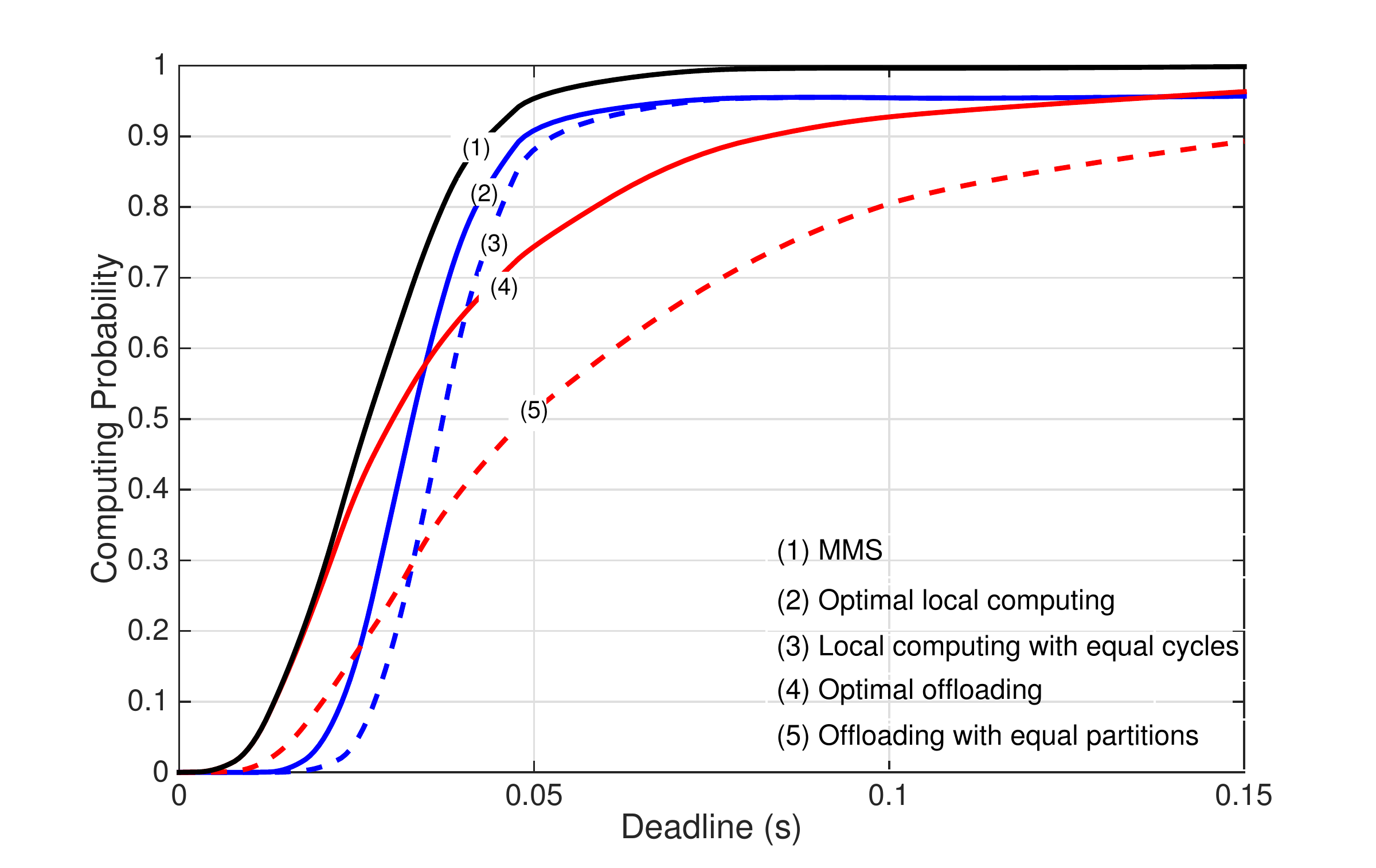}}
%\vspace{-15pt}
\subfigure[Rician factor $K=10$]{\label{fig}
\includegraphics[width=8.5cm]{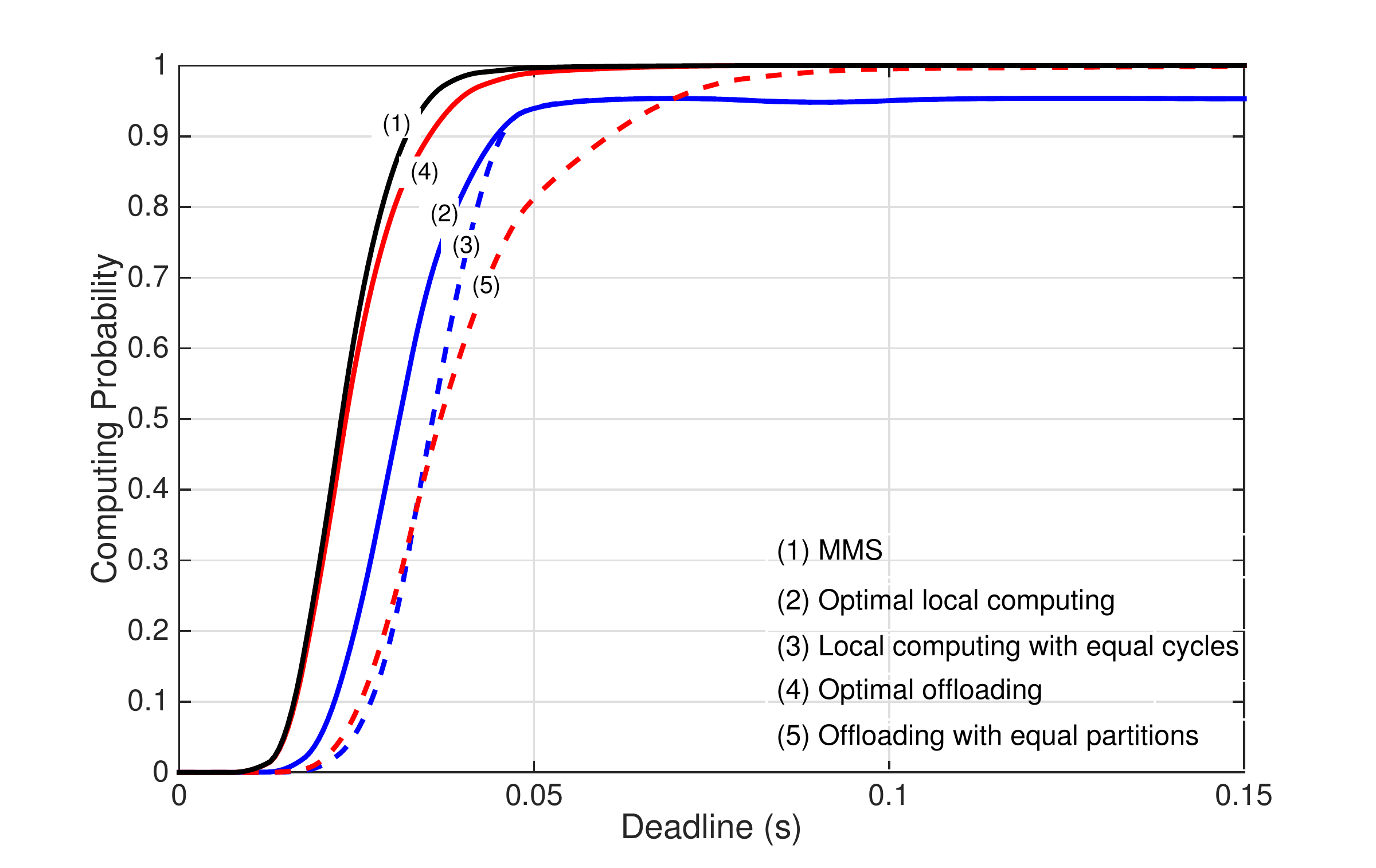}}
\caption{Effect of deadline on the computing probability for the case of static channel. The BS transmission power is fixed at $P_b \!=\! 0.5$ W. }
\label{Fig:Sim:Static}
\end{figure}
%\vspace{-10pt}
\begin{figure}[t!]
\centering
\subfigure[Rician factor $K=0$]{\label{fig}
\includegraphics[width=8.5cm]{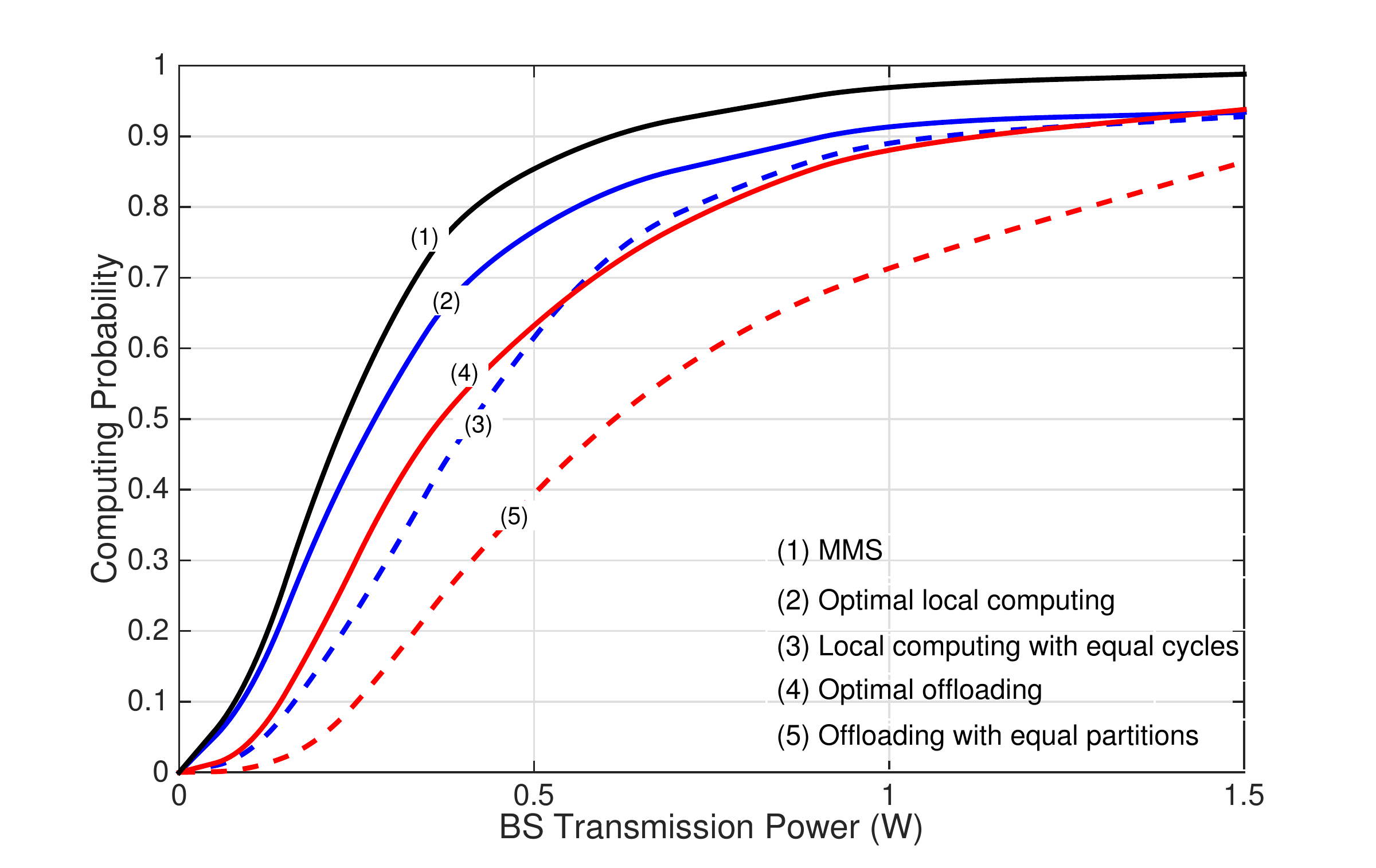}}
\subfigure[Rician factor $K=10$]{\label{fig}
\includegraphics[width=8.5cm]{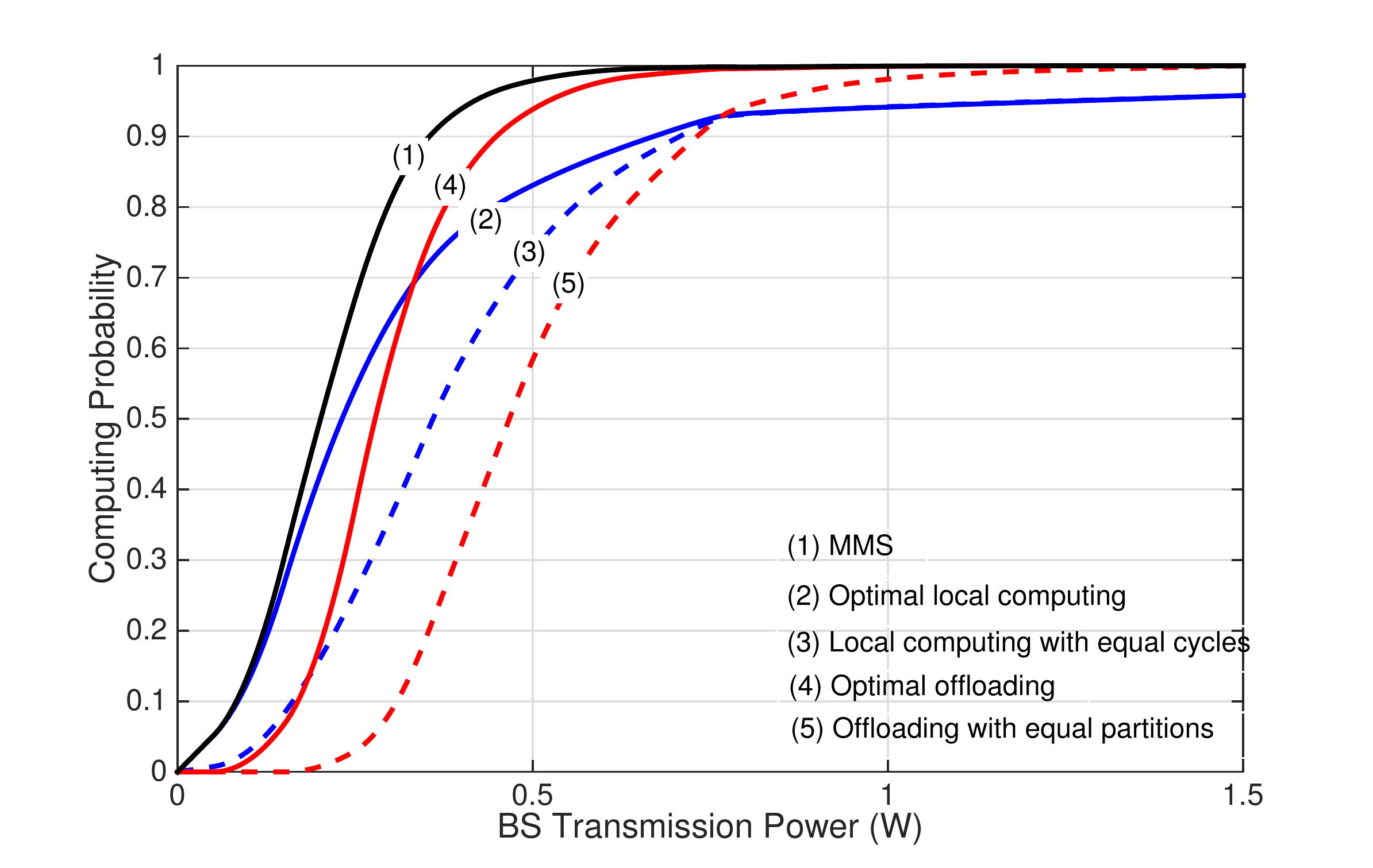}}
\caption{Effect of BS transmission power on the computing probability for the  static channel. The  deadline is fixed at $T \!\!=\! 0.035$ s.}
\label{Fig:Sim:Static:a}
\end{figure}

\begin{figure}[h!]
\begin{center}
\includegraphics[width=9cm]{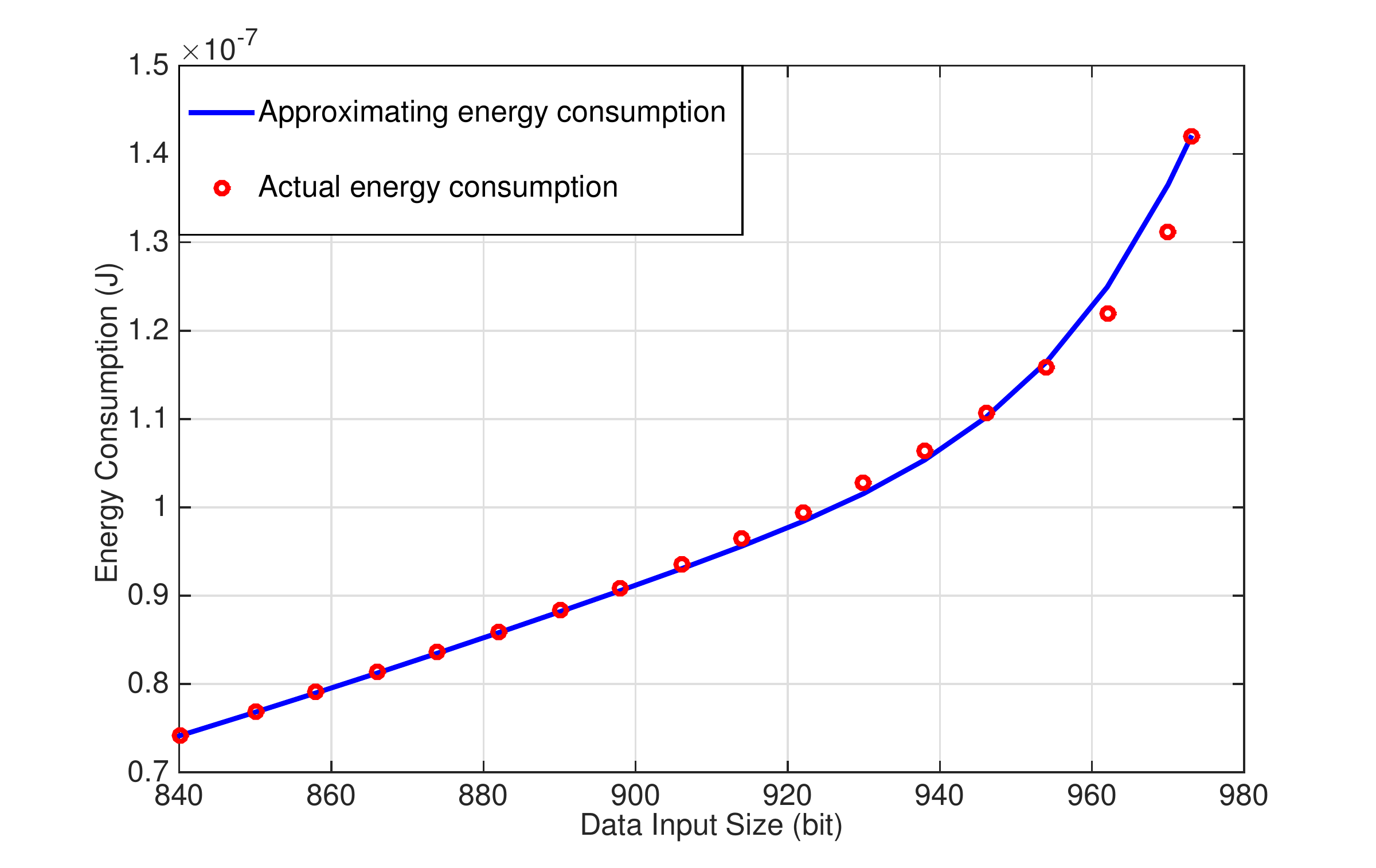}
%\vspace{0pt}
\caption{Approximation $\hat{G}_{\textrm{loc}}$ with $P_b= 1$ W and $T=0.035 s$.}
\label{Fig:Approximation performance}
\end{center}
\end{figure}
Fig.~\ref{Fig:Sim:Static} (a) and (b) show the curves of computing probability versus deadline $T$ for the Rician factor $K= 0$ and $10$, respectively. Several observations can be made.  First, computing probability is observed to be a monotone-increasing function of   $T$. {\color{black}{Next, for a highly random channel ($K = 0$), the crossing of the curves (2) and (4) suggests mode switching as the deadline increases. Specifically, local computing and offloading are preferred for relatively loose and strict deadlines, respectively. The reason is that compared with offloading, the computing probability for local computing grows faster as the deadline increases and also decays faster as the deadline decreases. Note that the thresholds for mode switching w.r.t the computing probability has no simple closed form.}} 
%optimal offloading is preferred to optimal local computing for a relatively stringent deadline and otherwise the reverse is true.
However, given LoS ($K=10$), the optimal offloading is always preferred since the required transmission energy is small for such a channel. Last, compared with their corresponding baseline schemes, optimizing offloading shows more substantial performance gain than optimizing local computing.

The curves of computing probability versus BS transmission power $P_b$ are plotted in Fig.~\ref{Fig:Sim:Static:a} (a) for $K=0$ and in Fig.~\ref{Fig:Sim:Static:a} (b) for $K=10$ with a fixed computing deadline $T \!=\! 0.035$s. As observed from the figures, for a highly random channel ($K = 0$),  the optimal local computing is preferred to offloading. However, for a channel with LoS ($K = 10$), the optimal local computing is preferred only when $P_b$ is small while optimal offloading should be chosen for large $P_b$. Other observations are similar to those from Fig.~\ref{Fig:Sim:Static}.  

%\vspace{-5pt}
%\begin{figure}[t!]
%\begin{center}
%\includegraphics[width=8.5cm]{DY_K0_T.eps}
%\caption{Effect of deadline on the computing probability for the  dynamic  channel. The BS transmission power is fixed at $P_b \!=\! 0.5$ W. }
%\label{Fig:Sim:DyChan}
%\end{center}
%\end{figure}
%\begin{figure}[t]
%\begin{center}
%\includegraphics[width=8.5cm]{DY_K0_BS.eps}
%\caption{Effect of BS transmission power on the computing probability for the dynamic channel with deadline  fixed at $T \!\!=\!\! 0.035$ s.}
%\label{Fig:Sim:DyChan:a}
%\end{center}
%\end{figure}
%\subsection{Dynamic Channel}

\begin{figure}[t!]
\centering
\subfigure[Rician factor $K=0$]{\label{fig}
\includegraphics[width=9cm]{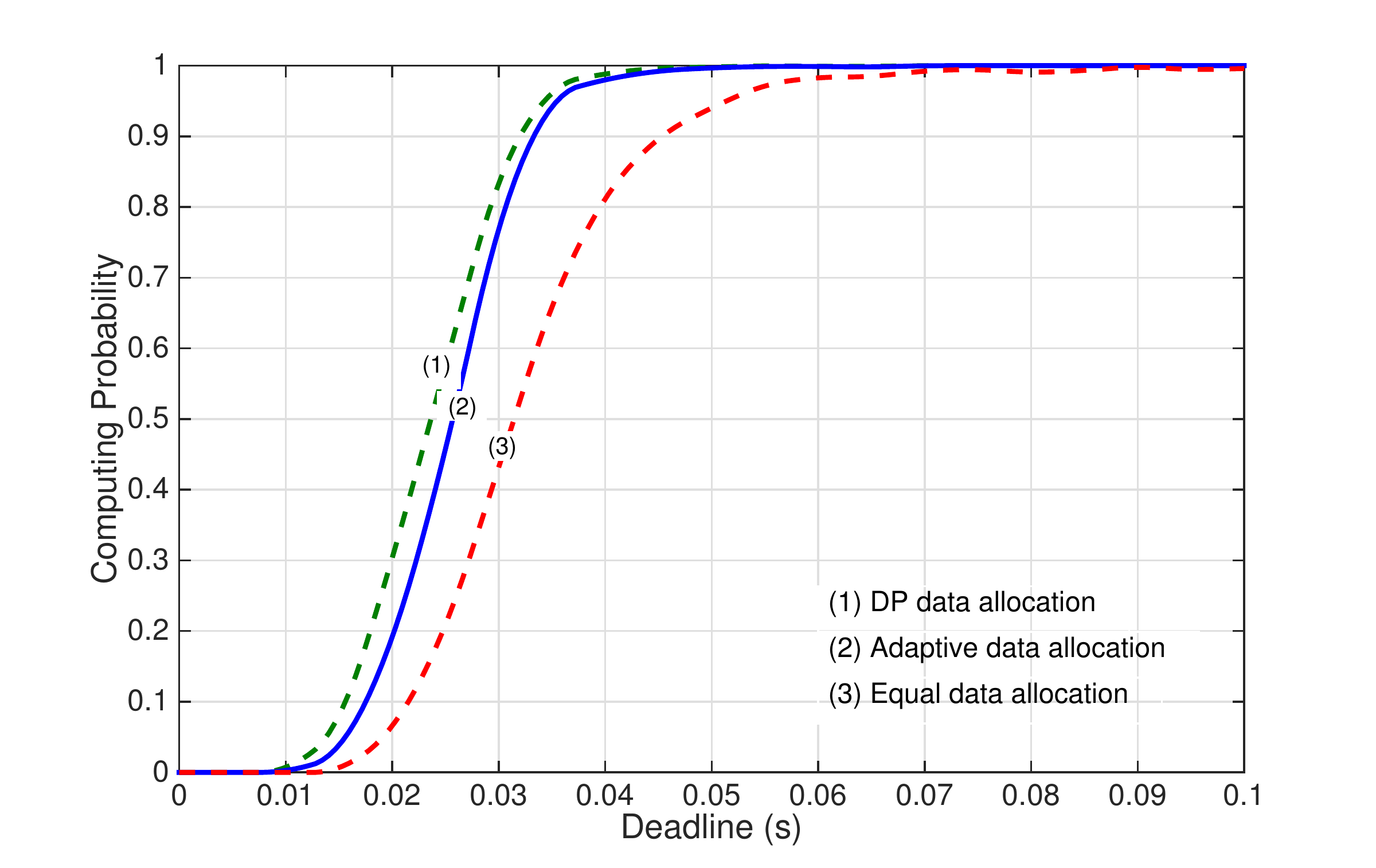}}
\subfigure[Rician factor $K=10$]{\label{fig}
\includegraphics[width=9cm]{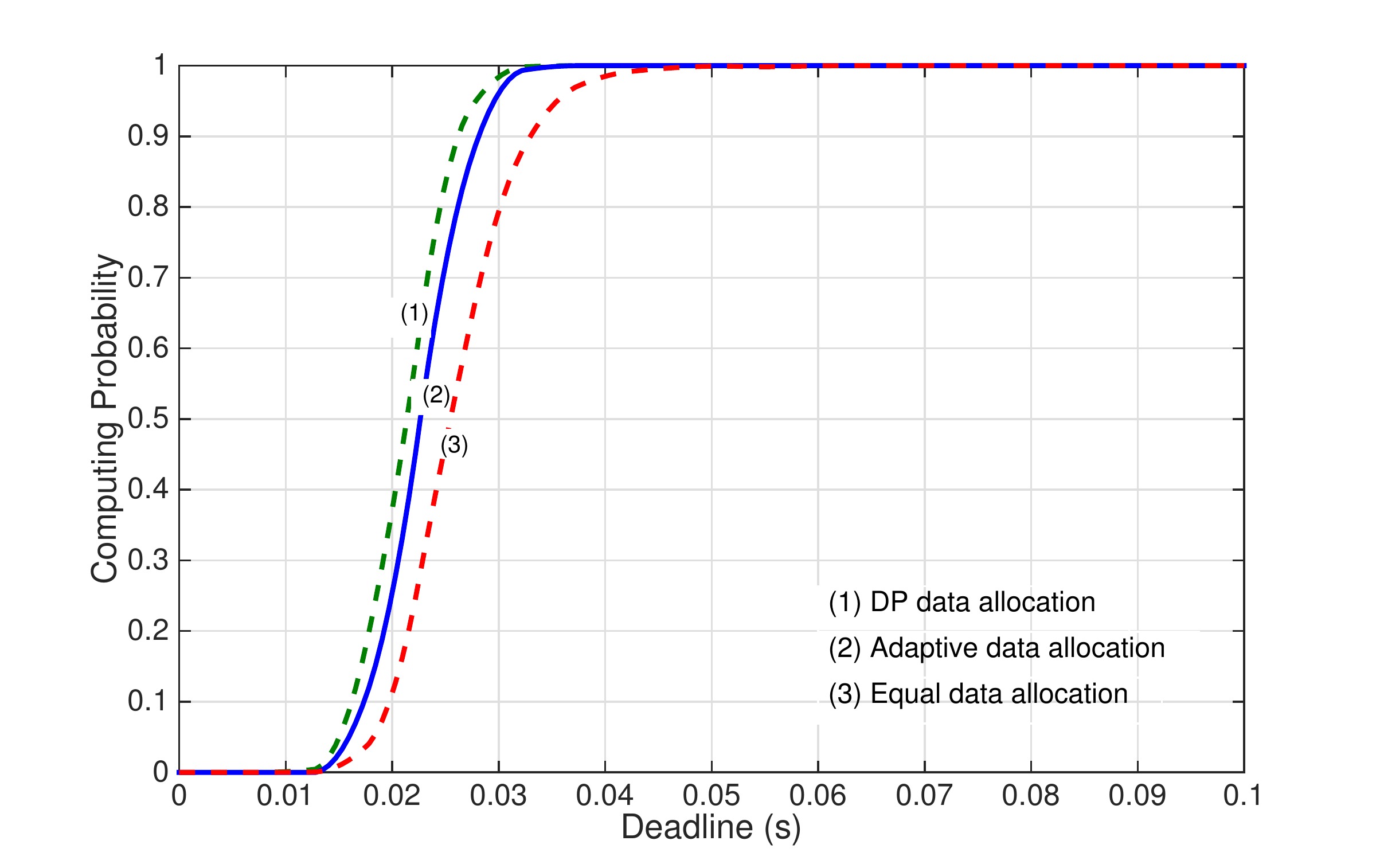}}
\caption{Effect of deadline on the computing probability for the  dynamic  channel. The BS transmission power is fixed at $P_b \!=\! 0.5$ W. }
\label{Fig:Sim:DyChan}
\end{figure}

\begin{figure}[t!]
\centering
\subfigure[Rician factor $K=0$]{\label{fig}
\includegraphics[width=9cm]{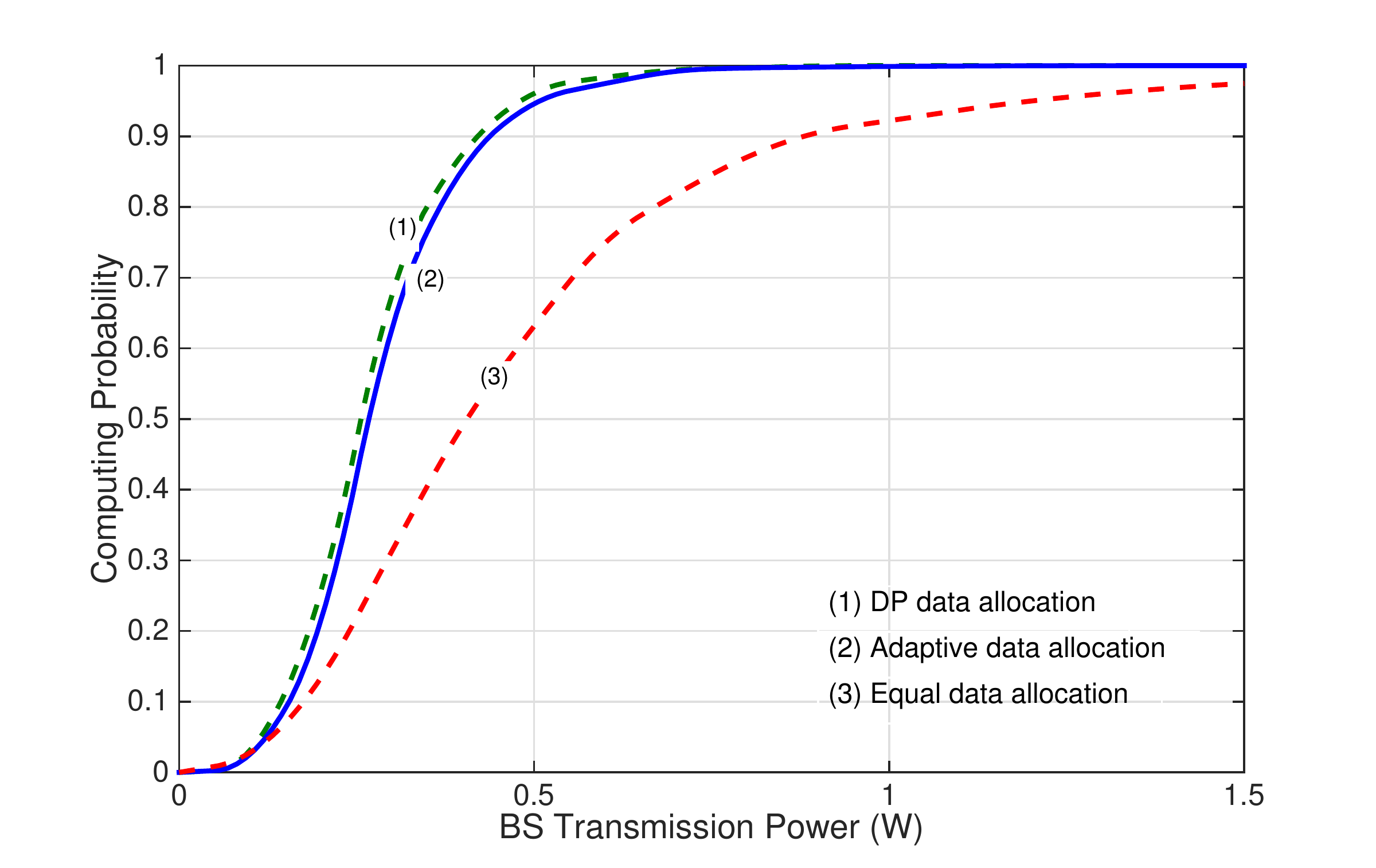}}
\subfigure[Rician factor $K=10$]{\label{fig}
\includegraphics[width=9cm]{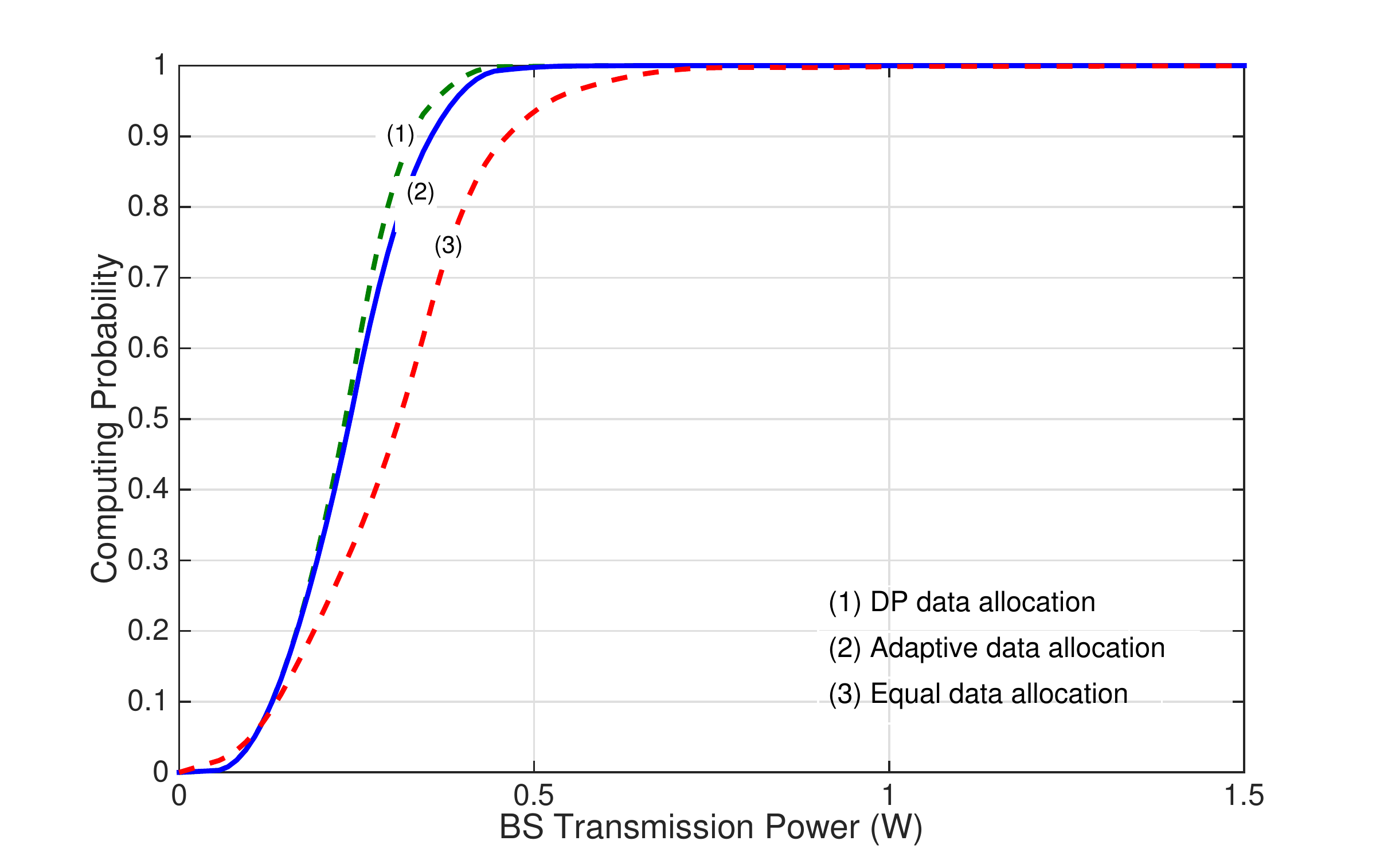}}
\caption{Effect of BS transmission power on the computing probability for the dynamic channel. The computing  deadline is fixed at $T = 0.035$ s.}
\label{Fig:Sim:DyChan:a}
\end{figure}

Consider the case of dynamic channel and data allocation over $4$ fading blocks. {\color{black}{The relatively small number of fading blocks is assumed to account for the difficulties of large-range channel prediction in practice.}}
%since the number of future fading blocks with non-causal CS is limited for practical implementation}}.  
The sub-optimal data allocation policies derived in the preceding section are compared with two baseline polices: the optimal one computed based on the derived sub-optimal data allocation policy for local computing and DP policy for offloading, referred  as \emph{DP data allocation}, as well as a simple policy based on \emph{equal data allocation}.  {\color{black}{Specifically, the DP policy for offloading is obtained by discretizing the state space and solving the Bellman equation backward recursively \cite{Bertsekas:DP:1995}. Each policy is feasible if one of its two operation modes is possible. 
%Specially, the DP policy for offloading is obtained by discretizing the state space and solving the Bellman equation \cite{Bertsekas:DP:1995}. Each policy is feasible if one of its two operation mode is possible. 
}}For local computing, the  minimum average energy consumption for the $n$-th fading block $\hat{G}_{\textrm{loc}}(\ell_n, \hat{R}_n, h_n)$ has no closed form. {\color{black}{To allow simulation, the function with the properties in Assumption~\ref{Ass:ConvexityG} 
%can be expressed as $g( \ell_n)= \frac{\gamma \phi_n(\ell_n) \ell_n^3}{T_c^2}$ 
is approximated as 
\begin{equation}
   g( \ell_n)\approx \frac{\gamma \hat{\phi}(\ell_n) \ell_n^3}{T_c^2}
\end{equation}
where $\hat{\phi}(\ell_n)=\frac{(\phi_1-\phi_0)}{(\hat{b}'_n-\hat{b}_n)^4} (\ell -\hat{b}_n)^4+\phi_0$ is a polynomial monotone-increasing and convex function satisfying $\hat{\phi}(\hat{b}_n)=\phi_0$ and $\hat{\phi}(\hat{b}_n^{'})\!=\!\phi_1$.}}
The approximation is verified by simulation to be accurate as shown in
% the extended version \cite{you:MCC:2015}.
Fig.~\ref{Fig:Approximation performance}.

Fig.~\ref{Fig:Sim:DyChan} (a) and (b) show the curves of computing probability versus deadline $T$ for the Rician factor $K= 0$ and $10$, respectively. The proposed sub-optimal data allocation policy is found to have close-to-optimal performance and substantial performance gain over the equal allocation policy. Moreover, the gain for a highly random channel is larger than that for a LoS channel. This shows that adaptive data allocation is an effective way for  coping  with the effect of fading on mobile cloud computing.

Last, the curves of computing probability versus BS transmission power $P_b$ are plotted in Fig.~\ref{Fig:Sim:DyChan:a} (a) for $K=0$ and in Fig.~\ref{Fig:Sim:DyChan:a} (b) for $K=10$ with a fixed computing deadline  $T = 0.035$ s. Large performance gain is observed for data allocation for the case of highly random channel. Other observations are similar to those from Fig.~\ref{Fig:Sim:DyChan}.

%Fig.~\ref{Fig:Sim:DyChan} shows the curves of computing probability versus deadline $T$ for the Rician factor $K= 0$. The proposed sub-optimal data allocation policy is found to have close-to-optimal performance and substantial performance gain over the equal allocation policy. Moreover, the gain for a highly random channel is larger than that for a LoS channel as presented in the extended version \cite{you:MCC:2015}. This shows that adaptive data allocation is an effective way for  coping  with the effect of fading on mobile cloud computing. 
%
%
%Last, the curves of computing probability versus BS transmission power $P_b$ are plotted in Fig.~\ref{Fig:Sim:DyChan:a}  for $K=0$ 
%%and in Fig.~\ref{Fig:Sim:DyChan:a} (b) for $K=10$ 
%with a fixed computing deadline  $T = 0.035$ s.  Large performance gain is observed for data allocation for the case of highly random channel. 
%%The gain for a highly random channel can be observed to be much larger than that for a LoS channel as presented in the extended version \cite{you:MCC:2015}. This shows that adaptive data allocation is an effective way for  coping  with the effect of fading on mobile cloud computing. 
%%
%Other observations are similar to those from Fig.~\ref{Fig:Sim:DyChan}. 

%\vspace{-5pt}
\section{conclusion}
A novel framework of wirelessly powered mobile cloud computing has been proposed in this paper. Applying optimization theory, a set of policies have been derived for optimizing the computing performance of two mobile operation modes, namely local computing and computation offloading, under the energy harvesting and deadline constraints. Furthermore, given non-causal CSI, a sub-optimal policy for  adaptive data allocation in time has been proposed to cope with the effect of fading on the system computing performance and shown to be close-to-optimal. This set of policies constitute a promising framework for realizing wirelessly powered and cloud-based mobile devices. 

This work can be extended to several interesting directions. First, full-duplex transmission can be implemented in the proposed system to support simultaneous MPT and computation offloading to improve the power transfer efficiency. {\color{black}{Second, the current work focusing on a single-computing task can be generalized to the scenario of computing a multi-task program, which involves program partitioning and simultaneous local computing and offloading.
 Last, it is interesting to extend the current design for single-user mobile cloud computing system to the multiuser system that requires  joint design of radio and computational resource allocation for mobile cloud computing.}}

\appendix

\subsection{Proof of Lemma~\ref{Lem:Relax}}\label{App:Relax}
Define the Lagrangian function for Problem P2 as 
\begin{align}
L&= \sum_{k=1}^N \gamma  p_k f_k^2+ \sum_{m=1}^N \lambda_m\left(\sum_{k=1}^m \gamma f_k^2 - \upsilon P_b h\sum_{k=1}^m y_k\right) \nn \\ 
 &+ \mu \left(\sum_{k=1}^N y_k- T\right) + \sum_{k=1}^N \eta_k\left( \frac{1}{f_k}-y_k\right). 
\end{align} 

Applying the Karush-Kuhn-Tucker (KKT) conditions gives:
\begin{align}
 &\frac{\partial{L}}{\partial f_k^*}\!=\!2 \gamma p_k f_k^*\!+\! 2 \gamma  f_k^*\! \l(\sum_{m=k}^{N} \lambda_m\r)\! \!-\!\eta_k \frac{1}{(f_k^*)^2}=0, \!\!\!\!\!\!& \!\!\!\forall k, \label{Eq: kkt_1} \\ 
&\frac{\partial{L}}{\partial y_k^*}=-\upsilon P_b h \l(\sum_{m=k}^{N} \lambda_m\r)  + \mu -\eta_k =0, &\forall k,    						\label{Eq: kkt_2} \\
&\lambda_m \l [ \sum_{k=1}^m \gamma  (f_k^*)^2 - \upsilon P_b h\sum_{k=1}^m y_k^* \r ]=0,    &\forall m,							\label{Eq: kkt_3}\\
&\mu \(\sum_{k=1}^{N} y_k^* - T\) =0,                   																	\label{Eq: kkt_4}\\
&\eta_k \l(\frac{1}{f_k^*}-y_k^*\r)=0, &\forall k,									
\label{Eq: kkt_5}\\                      
&\lambda_k \ge 0, \mu \ge 0, \eta_k \ge0, f_k^*>0, y_k^*>0,  &\forall k. 	\nn								
 \end{align}
Then it is derived from \eqref{Eq: kkt_1} that 
\begin{equation}\label{Eq:eta:positive}
(f_k^{*})^3 =\frac{\eta_k}{2 \gamma \left(p_k + \sum_{m=k}^N \lambda_m \right)},  ~~\quad \forall k. 
\end{equation}
To ensure $f_k^* >0$, since $\gamma$ and $p_k$ are positive and $\{\lambda_m\}$ are nonnegative, it needs to satisfy that: $\eta_k>0$ for all $k$. Combining it with \eqref{Eq: kkt_5} yields that $y_k^*=\frac{1}{f_k^*}$, leading to the desired result. \hfill $\blacksquare$

%\vspace{-5pt}
\subsection{Proof of Lemma~\ref{Lem:P2}}\label{App:P2}
First, due to the positivity of $\eta_k$ (see Lemma~\ref{Lem:Relax}), it follows from \eqref{Eq: kkt_2} that: $\mu=\eta_k+\upsilon P_b h \l(\sum_{m=k}^N \lambda_k\r)>0$.
Combining it with  \eqref{Eq: kkt_4} gives: $\sum_{k=1}^N y_k^* = \sum_{k=1}^N \frac{1}{f_k^*} =T$.

 Second, the optimal CPU-cycle frequencies \eqref{Eq:Opti:f} can be obtained by combining  \eqref{Eq: kkt_2} and \eqref{Eq:eta:positive}. 
 
 Third, compare $f_k^*$ and $f_{k+1}^*$ based on \eqref{Eq:Opti:f}. Since $p_{k+1} < p_k$ and $\sum_{m=k+1}^{N} \lambda_m \le \sum_{m=k}^N \lambda_m$, it follows that $f_{k+1}^* > f_k^*$, completing the proof. \hfill $\blacksquare$
\vspace{-5pt}

\subsection{Proof of Lemma~\ref{Lem:RelaxMore}}\label{App:RelaxMore}
It is proved by contradiction as follows. Assume there exists an integer $m$ where $2\le m \le N-1$ such that $\lambda_m>0$.
% and $\lambda_i=0$ for $i=1, 2, \cdots, m-1$. 
It follows from \eqref{Eq: kkt_3} that
%$\sum_{k=1}^m \left[\gamma  (f_k^*)^2 - \upsilon P_b h \frac{1}{f_k^*}\right]
%= \sum_{k=1}^{m-1} \left[\gamma  (f_k^*)^2 - \upsilon P_b h\frac{1}{f_k^*}\right]+\left[ \gamma  (f_m^*)^2 - \upsilon P_b h\frac{1}{f_m^*}\right]=0.$
\begin{align*}
&\sum_{k=1}^m \left[\gamma  (f_k^*)^2 - \upsilon P_b h \frac{1}{f_k^*}\right]\\
=& \sum_{k=1}^{m-1} \left[\gamma  (f_k^*)^2 - \upsilon P_b h\frac{1}{f_k^*}\right]+\left[ \gamma  (f_m^*)^2 - \upsilon P_b h\frac{1}{f_m^*}\right]=0.
\end{align*} 
Since $\sum_{k=1}^{m-1} \left[\gamma  (f_k^*)^2 - \upsilon P_b h\frac{1}{f_k^*}\right] \le 0$, it can be obtained that $ \gamma  (f_m^*)^2 - \upsilon P_b h\frac{1}{f_m^*} \ge 0.$
Combining it with the monotonicity of $\{f_k^* \}$ (see Lemma~\ref{Lem:P2} ), for the $(m+1)$-th cycle, it has $\gamma  (f_{m+1}^*)^2 - \upsilon P_b h\frac{1}{f_{m+1}^*} >\gamma  (f_m^*)^2 - \upsilon P_b h\frac{1}{f_m^*} \ge 0,$
% $ \gamma  (f_{m+1}^*)^2 - \upsilon P_b h\frac{1}{f_{m+1}^*} >\gamma  (f_m^*)^2 - \upsilon P_b h\frac{1}{f_m^*} \ge 0,$
%\begin{equation*}
%  \gamma  (f_{m+1}^*)^2 - \upsilon P_b h\frac{1}{f_{m+1}^*} >\gamma  (f_m^*)^2 - \upsilon P_b h\frac{1}{f_m^*} \ge 0,
%\end{equation*}
which results in
\begin{align*}
&\sum_{k=1}^{m+1} \left[\gamma  (f_k^*)^2 - \upsilon P_b h \frac{1}{f_k^*}\right]\\
=& \sum_{k=1}^{m} \left[\gamma  (f_k^*)^2 - \upsilon P_b h\frac{1}{f_k^*}\right]+\left[ \gamma  (f_{m+1}^*)^2 - \upsilon P_b h\frac{1}{f_{m+1}^*}\right]>0,
\end{align*}
 contradicting the energy harvesting constraint. Similar proof by contradiction also applies to $\lambda_1$. Therefore, the Lagrange multipliers $\{\lambda_m\}$ for Problem P2 satisfy $\lambda_1=\lambda_2=\cdots=\lambda_{N-1}=0$ and $\lambda_N \ge 0$, yielding the desired result. \hfill $\blacksquare$

%\vspace{-10pt}
\subsection{Proof of Theorem~\ref{Theo:LocalComp}}\label{App:LocalComp}
Considering Problem P3, the conditions for feasible cases are derived as follows.
\begin{enumerate}
\item{
 \emph{Case 1}: $\lambda>0$. 
% The energy harvesting constraint is active.
First, substituting  \eqref{reduced optimal f} into the deadline constraint leads to 
 \begin{equation}
 \left( \frac{2 \gamma}  {\mu-\upsilon P_b h \lambda}  \right)^ {\frac{1}{3}} \left[ \sum_{k=1}^N  (p_k+\lambda) ^{\frac{1}{3}}\right]=T. \label{Eq:Deadline:Loc}
\end{equation}
Combining \eqref{reduced optimal f} and \eqref{Eq:Deadline:Loc} gives \eqref{Eq:CPU:lambda:nonzero}.
% where the optimal CPU-cycle frequencies are functions of $\lambda$.
Next, substituting  \eqref{reduced optimal f} into the energy harvesting constraint  yields
 \begin{equation}\label{Eq:EnergyCausality1}
 \left( \frac{\mu-\upsilon P_b h \lambda}{2 \gamma}\right)^{\frac{2}{3}} \left[ \sum_{k=1}^N \l({p_k+\lambda}\r)^{-\frac{2}{3}}\right]=\frac{\upsilon P_b h T}{\gamma}.
\end{equation}
Combining \eqref{Eq:Deadline:Loc} and \eqref{Eq:EnergyCausality1} results in \eqref{Eq:lambda:ph}.
Then, consider the monotone property of $P_b h$. 
% The first derivative can be expressed as:
From \eqref{Eq:lambda:ph}, the first derivative of $P_b h$ w.r.t $\lambda$ is
\begin{align*}
\frac{\partial (P_b h)}{\partial \lambda}&=\frac{2 \gamma}{3 \upsilon  T^3} \left(  \sum_{k=1}^{N} u_k\right) \\
& \times  \left[   \left( \sum_{k=1}^{N} v_k  \right)^2\!-\! \left( \sum_{k=1}^{N} v_k w_k \right)\left( \sum_{k=1}^{N}\frac{v_k}{w_k} \right) \right]
\end{align*}  where $u_k\!=\!(p_k+\lambda)^{\frac{1}{3}}$, $v_k=(p_k + \lambda)^{\frac{-2}{3}}$ and $w_k=p_k + \lambda$.
Applying Cauchy inequality gives that: $\left( \sum_{k=1}^{N} v_k w_k \right)\left( \sum_{k=1}^{N}\frac{v_k}{w_k} \right) \overset{(a)}{\ge} 
 \left( \sum_{k=1}^{N} v_k  \right)^2.$
%\begin{eqnarray*}
%\left( \sum_{k=1}^{N} v_k w_k \right)\left( \sum_{k=1}^{N}\frac{v_k}{w_k} \right) &\overset{(a)}{\ge} &
% \left( \sum_{k=1}^{N} v_k  \right)^2
%\end{eqnarray*}
The equality in $(a)$ holds only when $\{\sqrt{v_k w_k} \l / \r. \sqrt{v_k/w_k }\}$ are equal for all $k$, which cannot be satisfied in this problem. Consequently, $\frac{\partial P_b}{\partial \lambda}<0.$ 
Furthermore, the asymptotic properties of $P_b h$ w.r.t $\lambda$ are characterized: when $\lambda \rightarrow 0$, it has $P_b h \rightarrow a'$; when $\lambda \rightarrow \infty$,  it has $P_b h \rightarrow a$ where $a$ and $a'$ are defined in \eqref{Eq:a and a'}. Combining them with the monotone-decreasing property of $P_b h$ gives that: if $0<\lambda< \infty$,  it leads to $a< P_b h<a'$. In addition, when $P_b h=a$, there is only one feasible solution: $f_k^*=\frac{N}{T}$ for all $k$, which can also be expressed as \eqref{Eq:CPU:lambda:nonzero} by letting $\lambda = \infty$. 
}
\item{
\emph{Case 2}: $\lambda=0$. Since \eqref{Eq:CPU:lambda:nonzero} is derived only using the deadline constraint,  \eqref{Eq:CPU:lambda:zero} can be obtained by letting $\lambda=0$ in \eqref{Eq:CPU:lambda:nonzero}.
Substituting it into the energy harvesting constraint ($\sum_{k=1}^N \gamma  (f_k^*)^2 \le \upsilon P_b h T$) gives: $P_b h\ge a'$.
}
\end{enumerate}
Thus, it can be concluded that if $P_b h<a$, Problem P3 is infeasible, completing the proof. \hfill $\blacksquare$

%\vspace{-10pt}
\subsection{Proof of Corollary~\ref{Cor:Egy:Local}}\label{App:Egy:Local}
The results of \eqref{Eq:Local:egy:lambda:nonzero}, \eqref{Eq:Local:egy:lambda:zero} and asymptotic properties can be derived straightforwardly following Theorem~\ref{Theo:LocalComp}. The monotone property of $\bar{E}_{\textrm{loc}}^*$ for the case of $a \le P_b h<a'$ is proved as follows.
%\eqref{Eq:Local:egy:lambda:nonzero} is derived by substituting \eqref{Eq:CPU:lambda:nonzero} into  the objective function of Problem P3.
%Then, consider the  monotone property of $\bar{E}_{\textrm{loc}}^*$. 
%The first derivative of $\bar{E}_{\textrm{loc}}^*$ with respect to $\lambda$ is:
%\begin{eqnarray*}
%\frac{\partial \bar{E}_{\textrm{loc}}^*}{\partial \lambda}=\frac{2\gamma}{3 T^2}\left[  \sum_{k=1}^{N} (p_k+\lambda)^{\frac{1}{3}} \right] \l\{\left[  \sum_{k=1}^{N} (p_k+\lambda)^{\frac{-2}{3}} \right] \left[ \sum_{k=1}^{N}p_k (p_k + \lambda)^{\frac{-2}{3}}  \right]
%- \left[  \sum_{k=1}^{N} (p_k+\lambda)^{\frac{1}{3}} \right]  \left[ \sum_{k=1}^{N} p_k (p_k +\lambda)^{\frac{-5}{3}} \right] \r\}
%\end{eqnarray*}

For notation simplicity, define $a_k=(p_k+\lambda)^{\frac{-1}{3}}$ such that $p_k\!=\! a_k^{-3} -\lambda$. From \eqref{Eq:Local:egy:lambda:nonzero}, the first derivative of $\bar{E}_{\textrm{loc}}^*$ w.r.t $\lambda$ is:
\begin{align}
&\frac{\partial \bar{E}_{\textrm{loc}}^*}{\partial \lambda}\!=\!\frac{2\gamma}{3 T^2} \left(  \sum_{k=1}^{N} a_k^{-1}\right) 
 \left[   \left( \sum_{k=1}^{N} a_k^2  \right)\left( \sum_{k=1}^{N} (a_k^{-3}-\lambda) a_k^2 \right)\right.\nn\\ 
& \left. -\left( \sum_{k=1}^{N} a_k^{-1} \right)\left( \sum_{k=1}^{N}(a_k^{-3}-\lambda) a_k^5 \right)  \right]. \nn
\end{align}
By algebraic calculation, the part in the square bracket  is
\begin{align*}
&\sum_{i=1}^{N} \sum_{j=1, j\neq i}^{N} \frac{1}{a_i a_j} \left[a_i^3-a_j^3+\lambda(a_j^6-a_i^3 a_j^3)\right]\\
=&\sum_{i=1}^{N} \sum_{j= i+1}^{N} \frac{1}{a_i a_j} \left[(\lambda (a_i^3-a_j^3)^2 \right] > 0, \nn
\end{align*}
leading to $\frac{\partial \bar{E}_{\textrm{loc}}^*}{\partial \lambda} >0$. Combing it with $\frac{\partial (P_b h)}{\partial \lambda} <0$ and one-one mapping between $P_b h$ and $\lambda$ results in that $\bar{E}_{\textrm{loc}}^*$ is a monotone-decreasing function  of $P_b h$. 
\hfill $\blacksquare$

%\vspace{-10pt}
\subsection{Proof of Lemma~\ref{Lem:Conv:Off}}\label{App:Conv:Off}
Define two constants for Problem P4: $d=\frac{\sigma^2}{h}-\upsilon P_b h$ and $d'=-\frac{\sigma^2}{h}$. Then the first derivative of  $S_{\textrm{off}}$ w.r.t $t$  for $t \in(0, \infty)$ is given as
\begin{equation}
\frac{\partial {S_{\textrm{off}}}}{\partial t} = d+\l(d'-\frac{d'L\ln 2}{Bt}\r) 2^{\frac{L}{Bt}}. \label{Eq:FirstOrder1}
\end{equation}
The second derivative follows:
\begin{equation}
\frac{\partial^2 {S_{\textrm{off}}} } {\partial t^2}= \frac{d' L^2 ({\ln 2})^2}{B^2 t^3} 2^{\frac{L}{Bt}} <0,
\end{equation}
% $\frac{\partial^2 {S_{\textrm{off}}} } {\partial t^2}= \frac{d' L^2 ({\ln 2})^2}{B^2 t^3} 2^{\frac{L}{Bt}} <0 $
since $d' <0$, verifying the concavity of $S_{\textrm{off}}$.
 Then letting the first derivative \eqref{Eq:FirstOrder1} be  zero gives:
\begin{equation}
\frac{d}{d'}=\left(\frac{L\ln2}{Bt}-1\right)2^{\frac{L}{Bt}}. \label{Eq:FirstOrder2}
\end{equation}
Using the Lambert function, the solution for \eqref{Eq:FirstOrder2} is: $t= \rho(h) L$ with $\rho(h)$ defined in \eqref{Eq:rho}.
Furthermore, it can be observed from \eqref{Eq:FirstOrder1} that if $t\rightarrow 0$, then $\frac{\partial {S_{\textrm{off}}}}{\partial t}\rightarrow \infty.$ Therefore, $S_{\textrm{off}}$ is maximized at $t=\rho(h) L$, leading to the desired result.
\hfill $\blacksquare$

%\vspace{-10pt}
\subsection{Proof of Theorem~\ref{Theo:Offload}}\label{App:Offload}
Based on Lemma~\ref{Lem:Conv:Off}, if $\rho(h) L\ge T$,  $S_{\textrm{off}}^*$ is maximized at $t=T$ and $S_{\textrm{off}}^*<0$ such that Problem P4 is infeasible. Therefore, to guarantee the feasibility of Problem P4, two conditions should be satisfied: 1) $t^*=\rho(h) L < T$; 2) $S_{\textrm{off}}^*(t^*)\ge 0$.

  First, since $\frac{\partial {S_{\textrm{off}}}}{\partial t}\rightarrow \infty$ when $ t\rightarrow 0$, it only needs to satisfy that  when $t= T$, it has $\frac{\partial {S_{\textrm{off}}}}{\partial t}<0.$ From \eqref{Eq:FirstOrder1}, it can be obtained that
\begin{equation} 
P_b h^2>\frac{\sigma ^2}{\upsilon}\left[\(\frac{L\ln2}{BT}-1\)2^{\frac{L}{BT}}+1\right]. \label{Eq:1stcondition:offloading}
\end{equation}

 Next, substituting $t^*$ satisfying \eqref{Eq:FirstOrder2} into \eqref{Eq:Offloading:Objective}
and letting $S_{\textrm{off}}^* \ge 0$  gives
%To guarantee the mobile has non-negative maximum energy savings, from \eqref{Eq:MaxEgySav}, it results in:
\begin{equation}\label{Eq:Positive_threshold_power}
\frac{\upsilon P_b h^2}{ \sigma ^2 e}\ge\frac{L \ln 2}{BT }\exp \l({W\left(\frac{\upsilon P_b h^2}{\sigma ^2 e} -\frac{1}{e}\right)} \r).
\end{equation}

Denote $d''=\frac{\upsilon P_b h^2}{ \sigma ^2 e}-\frac{1}{e}$ and $d'''=\frac{L \ln 2}{BT }$. Then \eqref{Eq:Positive_threshold_power} is rewritten as $d''+\frac{1}{e}\ge d''' e^{W(d'')}.$

Applying $d''=W(d'')e^{W(d'')}$ to the above inequality and multiplying $e^{-d'''}$  on both sides gives: $\left[ W(d'')-d''' \right]e^{W(d'')-d''' } \ge -e^{-1-d'''}.$
Then, it follows that
\begin{equation}
d'' \ge \left[ d'''+W(-e^{-1-d'''})\right] e^{d'''+W(-e^{-1-d'''})}. \label{Eq:d''}
\end{equation}
Substituting the expression of $d''$ and $d'''$ to \eqref{Eq:d''} gives the solution for \eqref{Eq:Positive_threshold_power} as $P_b h^2 \ge a''$
where $a''$ is defined in \eqref{Eq:Thresh:Off}.

Last, combine the two conditions \eqref{Eq:1stcondition:offloading} and $P_b h^2 \ge a''$. 
Since $ -1< W(-e^{-1-d'''})<0$, it has $a'' > \frac{\sigma ^2}{\upsilon}\left[\(\frac{L\ln2}{BT}-1\)2^{\frac{L}{BT}}+1\right]$. In conclusion, $P_b h^2 \ge a'' $, completing the proof.  \hfill $\blacksquare$

\subsection{Proof of Lemma~\ref{Lem:PositiveResidual}}\label{App:PositiveResidual}
According to Corollary~\ref{Cor:DataThreshold},
the mobile obtains the minimum average energy savings in this block if $\ell_n=b_n'$. In this case, 
 the residual energy for the next fading block is
\begin{align}
R_{n+1}&=R_n+\upsilon P_b h_n T_c - G_{\textrm{loc}}(b_n', R_n, h_n)\\
&=R_n+\upsilon P_b h_n T_c-\frac{\gamma \phi_1 {b_n'}^3}{T_c^2}. \label{Eq:Min:ReEgy}
\end{align}
Then substituting $b_n'$ given in \eqref{Eq:Const:L} into \eqref{Eq:Min:ReEgy} gives the lower bound. The upper bound is achieved when $\ell_n=0$.
\hfill $\blacksquare$
\vspace{-10pt}

\subsection{Proof of Proposition~\ref{Pro:Data:Local}}\label{App:Data:Local}
Define the Lagrangian function for Problem P7: 
\begin{align}
L&=\sum_{n=1}^M \hat{G}(\ell_n, \hat{R}_n, h_n)+ \xi \l(L-\sum_{n=1}^M \ell_n\r)\\
&+\sum_{n=1}^M \varpi_n \l(-\ell_n\r)+\sum_{n=1}^M \zeta_n \l(\ell_n-\hat{b}'_n\r). \nn
\end{align}
Applying the KKT conditions leads to
\begin{align}
 &\frac{\partial{L}}{\partial \ell_n}\!=\! \frac{\partial{\hat{G}}}{\partial \ell_n}(\ell_n^*, \hat{R}_n, h_n)-\xi-\varpi_n+\zeta_n=0, & \forall n, \label{Eq: kkt2_1} \\ 
&\varpi_n \ell_n^*=0, ~ \zeta_n \l(\ell_n^*-\hat{b}'_n\r)\!=\!0, ~ \varpi_n \ge 0, ~ \zeta_n\ge 0,  &\forall n,							\label{Eq: kkt2_2}\\
&\sum_{n=1}^M \ell_n^*=L. 						\nn			\label{Eq: kkt2_4}                    
 \end{align}

 First, it can be proved that $\ell_n^*>0$ and $\varpi_n=0$ for all $n$ by the following steps:
 \begin{enumerate}
 \item{ Observe from \eqref{Eq:UniformG} that $ \frac{\partial{\hat{G}}}{\partial \ell_n}(\ell_n^*, \hat{R}_n, h_n)\ge 0$ and equals to $0$ only when $\ell_n^*=0$.
 }
 \item{  Suppose there exists a $n$ such that $\ell_n^*=0$. It leads to $\zeta_n=0$ and $ \frac{\partial{\hat{G}}}{\partial \ell_n}(\ell_n^*, \hat{R}_n, h_n)=0$. From \eqref{Eq: kkt2_1}, it gives $\xi=-\varpi_n\le 0$.}
 \item{ There always exists one  $j$ where $j \neq n$ such that $\ell_j^* >0$. Therefore, $\varpi_j=0$ and $\frac{\partial{\hat{G}}}{\partial \ell_j}(\ell_j^*, \hat{R}_j, h_j)>0$. From \eqref{Eq: kkt2_1}, it can be derived that $\zeta_j=\xi-\frac{\partial{\hat{G}}}{\partial \ell_j}(\ell_j^*, \hat{R}_j, h_j)<0$ which contradicts to the condition $\zeta_j \ge 0$ and leads to the conclusion.
 }
 \end{enumerate}
 
 Then, the  data allocation follows:
 \begin{enumerate}
 \item{If $\ell_n^* < \hat{b}'_n$, then $\zeta_n=0$ and $ \frac{\partial{\hat{G}}}{\partial \ell_n}(\ell_n^*, \hat{R}_n, h_n)=\xi$.}
 \item{If $\ell_n^* = \hat{b}'_n$,  then $$ \frac{\partial{\hat{G}}}{\partial \ell_n}(\ell_n^*, \hat{R}_n, h_n)+\zeta_n=\frac{\partial{\hat{G}}}{\partial \ell_n}(\hat{b}'_n, \hat{R}_n, h_n)+\zeta_n=\xi$$.}
 \end{enumerate}
 
 In conclusion, $\ell_n^*=\min\{b_n(\xi), \hat{b}'_n \}$ where $b_n(\xi)$ is the root of function $\frac{\partial{\hat{G}}}{\partial \ell_n}(b_n, \hat{R}_n, h_n)=\xi$ and $\xi$ satisfies $\sum_{n=1}^M \ell_n^*=L$. Specifically, when $b_n(\xi)\ge \hat{b}'_n=\l(\frac{\upsilon P_b h_n T_c^3 + \hat{R}_n T_c^2 }{\gamma \theta_1}\r)^{\frac{1}{3}}$, it has $h_n \le\frac{\gamma \theta_1 b_n^3(\xi)-\hat{R}_n T_c^2}{\upsilon P_b T_c^3}$, completing the proof.
\hfill $\blacksquare$

\bibliographystyle{ieeetr}

\begin{thebibliography}{10}

\bibitem{Melanie:Sensor:2012}
M.~Swan, ``Sensor mania! {The} internet of things, wearable computing,
  objective metrics, and the quantified self 2.0,'' {\em J. Sens. Actuator
  Networks}, vol.~1, pp.~217--253, 2012.

\bibitem{Brown:RadioWPTHistory:1984}
W.~C. Brown, ``The history of power transmission by radio waves,'' {\em IEEE
  Trans. Microwave Theory and Techniques}, vol.~32, no.~9, pp.~1230--1242, Sep.
  1984.

\bibitem{kumar2013survey}
K.~Kumar, J.~Liu, Y.-H. Lu, and B.~Bhargava, ``A survey of computation
  offloading for mobile systems,'' {\em Mobile Networks and Applications},
  vol.~18, no.~1, pp.~129--140, 2013.

\bibitem{Le:FarEgyHar:2008}
T.~Le, K.~Mayaram, and T.~Fiez, ``Efficient far-field radio frequency energy
  harvesting for passively powered sensor networks,'' {\em IEEE Journal of
  Solid-State Circuits}, vol.~43, no.~5, pp.~1287--1302, 2008.

\bibitem{Shinohara:2014:WPT}
N.~Shinohara, {\em Wireless power transfer via radiowaves}.
\newblock John Wiley \& Sons, 2014.

\bibitem{cuervo:MAUI:2010}
E.~Cuervo, A.~Balasubramanian, D.-k. Cho, A.~Wolman, S.~Saroiu, R.~Chandra, and
  P.~Bahl, ``{MAUI}: Making smartphones last longer with code offload,'' in
  {\em Proc. ACM MobiSys}, pp.~49--62, Jun. 2010.

\bibitem{Kosta:Thinkair:2012}
S.~Kosta, A.~Aucinas, P.~Hui, R.~Mortier, and X.~Zhang, ``{ThinkAir}: Dynamic
  resource allocation and parallel execution in the cloud for mobile code
  offloading,'' in {\em Proc. IEEE INFOCOM}, pp.~945--953, 2012.

\bibitem{Gkatzikis:Migrate:2013}
L.~Gkatzikis and I.~Koutsopoulos, ``Migrate or not? {Exploiting} dynamic task
  migration in mobile cloud computing systems,'' {\em IEEE Trans. Wireless
  Commun.}, vol.~20, pp.~24--32, Jun. 2013.

\bibitem{kumar:CanOffloadSave2010}
K.~Kumar and Y.-H. Lu, ``Cloud computing for mobile users: Can offloading
  computation save energy?,'' {\em IEEE Computer}, no.~4, pp.~51--56, 2010.

\bibitem{huang:DynamicOffload:2012}
D.~Huang, P.~Wang, and D.~Niyato, ``A dynamic offloading algorithm for mobile
  computing,'' {\em IEEE Trans. Wireless Commun.}, vol.~11, no.~6,
  pp.~1991--1995, Jun. 2012.

\bibitem{Barbarossa:MobileCloud:2014}
S.~Sardellitti, G.~Scutari, and S.~Barbarossa, ``Joint optimization of radio
  and computational resources for multicell mobile-edge computing,'' {\em IEEE
  Trans. Signal and Info. Processing over Nectworks}, Jun. 2015.

\bibitem{ZhangY:OfflConnec:2015}
Y.~Zhang, D.~Niyato, and P.~Wang, ``Offloading in mobile cloudlet systems with
  intermittent connectivity,'' {\em IEEE Trans. Mobile Computing}, vol.~14,
  no.~12, pp.~2516--2529, Feb. 2015.

\bibitem{Yao:Scheduling:1995}
F.~Yao, A.~Demers, and S.~Shenker, ``A scheduling model for reduced {CPU}
  energy,'' in {\em Proc. IEEE Ann. Symp. Foundations of Computer Science},
  pp.~374--382, 1995.

\bibitem{Benini1:DPM:1999}
L.~Benini, A.~Bogliolo, G.~Paleologo, and G.~De~Micheli, ``Policy optimization
  for dynamic power management,'' {\em IEEE Trans. Computer-Aided Design of
  Integrated Cir. and Sys.}, vol.~18, no.~6, pp.~813--833, 1999.

\bibitem{Pillai:DVS:2001}
P.~Pillai and K.~G. Shin, ``Real-time dynamic voltage scaling for low-power
  embedded operating systems,'' in {\em Proc. ACM SIGOPS Operating Sys.
  Review}, vol.~35, pp.~89--102, 2001.

\bibitem{lorch:DynamicVoltage:2001}
J.~R. Lorch and A.~J. Smith, ``Improving dynamic voltage scaling algorithms
  with pace,'' in {\em Proc. ACM SIGMETRICS}, vol.~29, pp.~50--61, Jun. 2001.

\bibitem{yuan:RealCpuScheduling:2003}
W.~Yuan and K.~Nahrstedt, ``Energy-efficient soft real-time {CPU} scheduling
  for mobile multimedia systems,'' {\em ACM Trans. Computer Sys.}, vol.~37,
  no.~5, pp.~149--163, 2003.

\bibitem{zhang:MobileMmodel:2013}
W.~Zhang, Y.~Wen, K.~Guan, D.~Kilper, H.~Luo, and D.~O. Wu, ``Energy-optimal
  mobile cloud computing under stochastic wireless channel,'' {\em IEEE Trans.
  Wireless Commun.}, vol.~12, no.~9, pp.~4569--4581, Sep. 2013.

\bibitem{OzelUlukus:TransEnergyHarvestFading:OptimalPolicies:2011}
O.~Ozel, K.~Tutuncuoglu, J.~Yang, S.~Ulukus, and A.~Yener, ``Transmission with
  energy harvesting nodes in fading wireless channels: Optimal policies,'' {\em
  IEEE J. Select. Areas Commun.}, vol.~29, pp.~1732--1743, Aug. 2011.

\bibitem{Zhang:MIMOBCWirelessInfoPowerTransfer}
R.~Zhang and C.~Ho, ``{MIMO} broadcasting for simultaneous wireless information
  and power transfer,'' {\em IEEE Trans. Wireless Commun.}, vol.~12,
  pp.~1989--2001, May. 2013.

\bibitem{NgLo:MultiuserOFDMSInfoPowerTransfer}
D.~W. Ng, E.~S. Lo, and R.~Schober, ``Wireless information and power transfer:
  {Energy} efficiency optimization in {OFDMA} systems,'' {\em IEEE Trans.
  Wireless Commun.}, vol.~12, pp.~6352--6370, Dec. 2013.

\bibitem{popovski:InteractiveTransferInformation:2013}
P.~Popovski, A.~M. Fouladgar, and O.~Simeone, ``Interactive joint transfer of
  energy and information,'' {\em IEEE Trans. Commun.}, vol.~61, pp.~2086--2097,
  May. 2013.

\bibitem{nasir:RelayHarveing:2013}
A.~A. Nasir, X.~Zhou, S.~Durrani, and R.~A. Kennedy, ``Relaying protocols for
  wireless energy harvesting and information processing,'' {\em IEEE Trans.
  Wireless Commun.}, vol.~12, pp.~3622--3636, Jul. 2013.

\bibitem{LeeZhangHuang:OppEnergyHarvestingCognitiveRadio:2013}
S.~Lee, R.~Zhang, and K.~Huang, ``Opportunistic wireless energy harvesting in
  cognitive radio networks,'' {\em IEEE Trans. Wireless Commun.}, vol.~12,
  pp.~4788--4799, Sep. 2013.

\bibitem{HuangLauArXiv:EnablingMPTinCellularNetworks:2013}
K.~Huang and V.~K. Lau, ``Enabling wireless power transfer in cellular
  networks: {Architecture}, modeling and deployment,'' {\em IEEE Trans.
  Wireless Commun.}, vol.~13, pp.~902--912, Feb. 2014.

\bibitem{bi:PowerCommunication:2014}
S.~Bi, C.~K. Ho, and R.~Zhang, ``Wireless powered communication:
  {Opportunities} and challenges,'' {\em IEEE Commun. Mag.}, vol.~53,
  pp.~117--125, Apr. 2014.

\bibitem{Huang:CuttingLastWiress:2014}
K.~Huang and X.~Zhou, ``Cutting last wires for mobile communication by
  microwave power transfer,'' {\em IEEE Commun. Mag.}, vol.~53, pp.~86--93,
  Jun. 2015.

\bibitem{Robert:CMOS:1992}
A.~P. Chandrakasan, S.~Sheng, and R.~W. Brodersen, ``Low-power {CMOS} digital
  design,'' {\em IEICE Trans. Electron.}, vol.~75, no.~4, pp.~371--382, 1992.

\bibitem{burd:CpuEnergy:1996}
T.~D. Burd and R.~W. Brodersen, ``Processor design for portable systems,'' {\em
  J. VLSI Signal Process}, pp.~203--221, Aug. 1996.

\bibitem{PrabBiyi:EenergyEfficientTXLazyScheduling:2001}
B.~Prabhakar, E.~Uysal~Biyikoglu, and A.~El~Gamal, ``Energy-efficient
  transmission over a wireless link via lazy packet scheduling,'' in {\em Proc.
  IEEE INFOCOM}, vol.~1, pp.~386--394, 2001.

\bibitem{Sun:EgyConv:2012}
H.~Sun, Y.-x. Guo, M.~He, and Z.~Zhong, ``Design of a high-efficiency
  2.45-{GH}z rectenna for low-input-power energy harvesting,'' {\em IEEE
  Antennas and Wireless Propagation Letters}, vol.~11, pp.~929--932, 2012.

\bibitem{Bertsekas:DP:1995}
D.~P. Bertsekas, D.~P. Bertsekas, D.~P. Bertsekas, and D.~P. Bertsekas, {\em
  Dynamic programming and optimal control}, vol.~1.
\newblock Athena Scientific Belmont, MA, 1995.

\end{thebibliography}

\vspace{-10pt}
\begin{IEEEbiography}
[{\includegraphics[width=1in,clip,keepaspectratio]{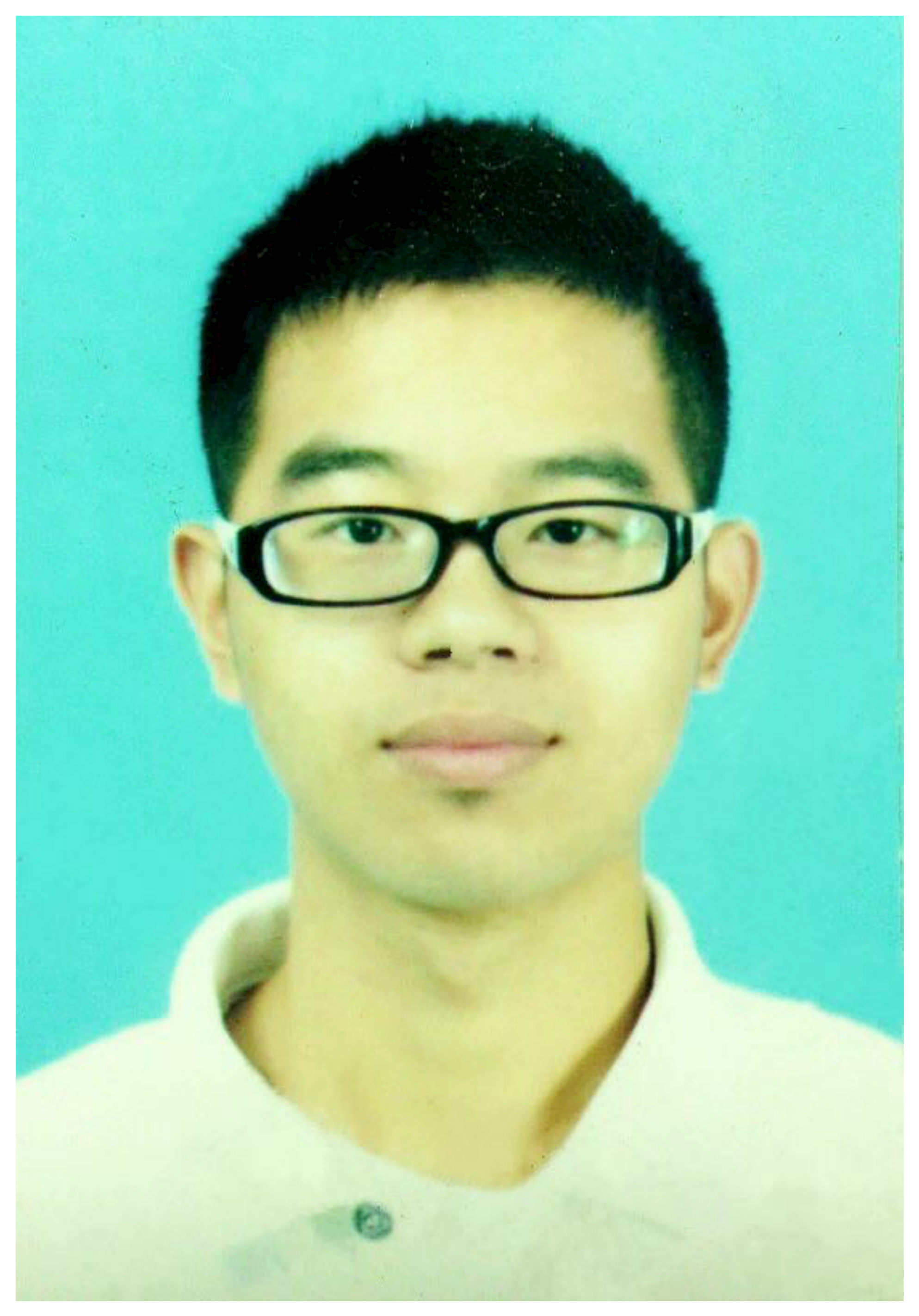}}]
{Changsheng You}(S'15) received the B.S. degree in electronic engineering and information science from the University of Science and Technology of China (USTC) in 2014. He is currently working towards the Ph.D. degree in electrical and electronic engineering at The University of Hong Kong (HKU). His research interests include mobile cloud computing, wireless power transfer, energy harvesting system and convex optimization.
\end{IEEEbiography}
\vspace{-10pt}
\begin{IEEEbiography}
[{\includegraphics[width=1in,clip,keepaspectratio]{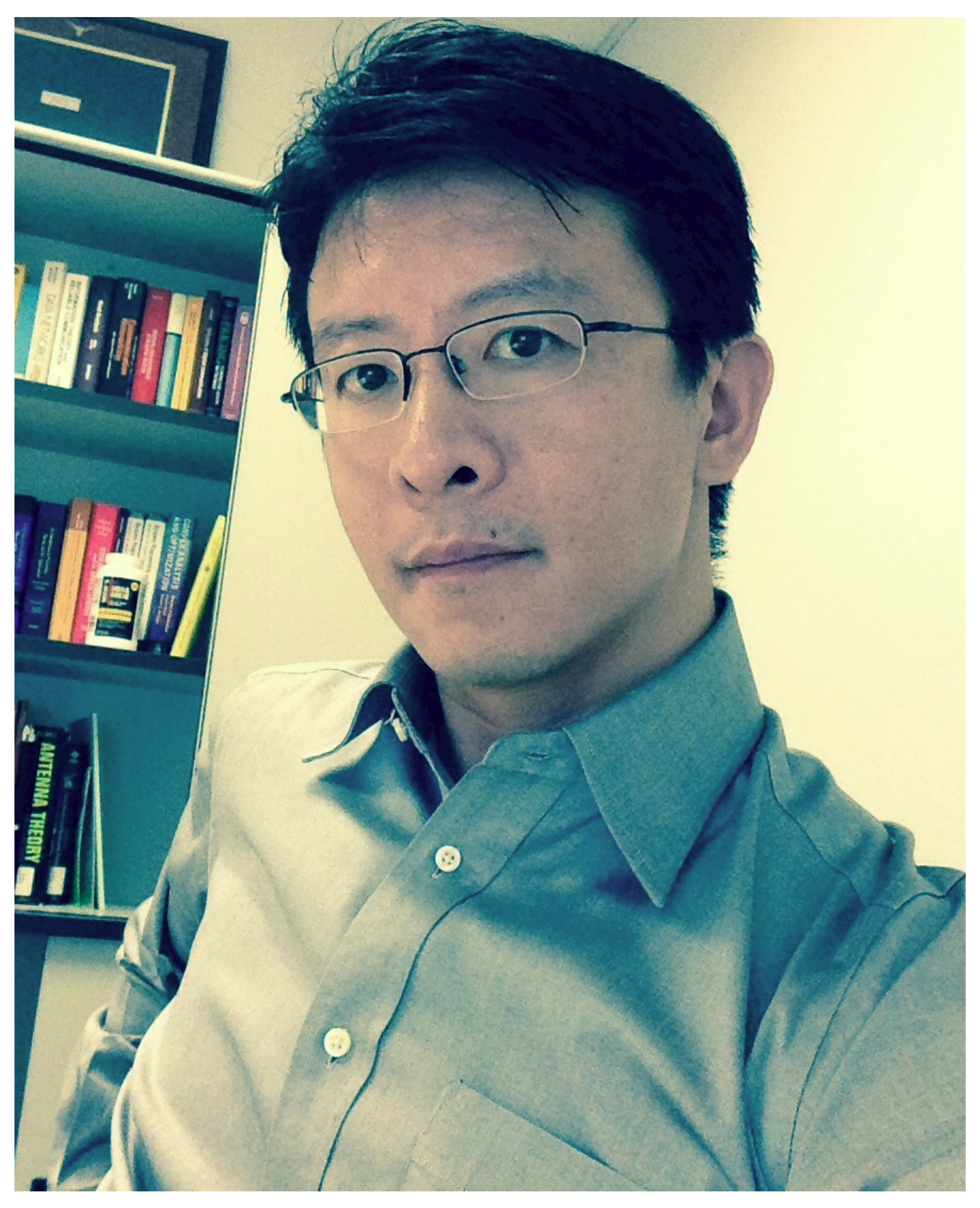}}]
%,height=1.25in,
{Kaibin Huang} (M'08-SM'13) received the B.Eng. (first-class hons.) and the M.Eng. from the National University of Singapore, respectively, and the Ph.D. degree from The University of Texas at Austin (UT Austin), all in electrical engineering.

Since Jan. 2014, he has been an assistant professor in the Dept. of Electrical and Electronic Engineering (EEE) at The University of Hong Kong. He is an adjunct professor in the School of EEE at Yonsei University in S. Korea. He used to be a faculty member in the Dept. of Applied Mathematics (AMA) at the Hong Kong Polytechnic University (PolyU) and the Dept. of EEE at Yonsei University. He had been a Postdoctoral Research Fellow in the Department of Electrical and Computer Engineering at the Hong Kong University of Science and Technology from Jun. 2008 to Feb. 2009 and an Associate Scientist at the Institute for Infocomm Research in Singapore from Nov. 1999 to Jul. 2004. His research interests focus on the analysis and design of wireless networks using stochastic geometry and multi-antenna techniques.

He frequently serves on the technical program committees of major IEEE conferences in wireless communications. He has been the technical chair/co-chair for the IEEE CTW 2013, the Comm. Theory Symp. of IEEE GLOBECOM 2014, and the Adv. Topics in Wireless Comm. Symp. of IEEE/CIC ICCC 2014 and has been the track chair/co-chair for IEEE PIMRC 2015, IEE VTC Spring 2013, Asilomar 2011 and IEEE WCNC 2011. Currently, he is an editor for IEEE Journal on Selected Areas in Communications (JSAC) series on Green Communications and Networking, IEEE Transactions on Wireless Communications, IEEE Wireless Communications Letters. He was also a guest editor for the JSAC special issues on communications powered by energy harvesting and an editor for IEEE/KICS Journal of Communication and Networks (2009-2015). He is an elected member of the SPCOM Technical Committee of the IEEE Signal Processing Society. Dr. Huang received the 2015 IEEE ComSoc Asia Pacific Outstanding Paper Award, Outstanding Teaching Award from Yonsei, Motorola Partnerships in Research Grant, the University Continuing Fellowship from UT Austin, and a Best Paper Award from IEEE GLOBECOM 2006 and PolyU AMA in 2013. 
\end{IEEEbiography}
\vspace{-10pt}
\begin{IEEEbiography}
[{\includegraphics[width=1in,clip,keepaspectratio]{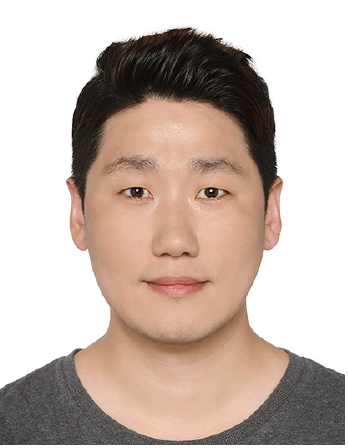}}]
%,height=1.25in
{Hyukjin Chae} received the B.S and Ph.D degree in electrical and electronic
engineering from Yonsei University, Seoul, Korea. He joined LG Electronics,
Korea, as a Senior Research Engineer in 2012. His research interests
include interference channels, multiuser MIMO, D2D, V2X, and full duplex
radio. From Sep. 2012, he has contributed and participated as a delegate in
3GPP RAN1 with interests in ePDCCH, eIMTA, FD MIMO, Indoor positioning,
D2D, and V2X communications. He is an inventor of more than 100 patents.
\end{IEEEbiography}
\end{document}